\newtheorem{theo}{Theorem}
\newtheorem{thm}[theo]{Theorem}
\newtheorem{prop}[theo]{Proposition} 
\newtheorem{lem}[theo]{Lemma}
\newtheorem{cor}[theo]{Corollary}
\newtheorem{defn}[theo]{Definition}
\newtheorem{rem}[theo]{Remark}
\newtheorem{sdp}[theo]{SDP}
\newcommand{\etal}{{\it{et al.}}}
\newcommand{\cH}{\mathcal{H}}
\newcommand{\id}{\mathbb{I}}
\newcommand{\tr}[2]{\mathrm{tr}_{#2} \left\{ #1 \right\}}
\newcommand{\Tr}[1]{\mathrm{tr}\left\{#1 \right\}}
\newcommand{\av}[1]{\langle #1 \rangle}
\newcommand{\X}{\mathbb{X}}
\newcommand{\A}{\mathbb{A}}
\newcommand{\As}{\boldsymbol{\Sigma}}
\newcommand{\FA}{\mathsf{S}}
\newcommand{\FM}{\mathsf{M}}
\newcommand{\M}{\mathrm{M}}
\newcommand{\I}{\mathsf{I}}
\begin{document}

\title{Quantifying EPR: the resource theory of  nonclassicality of common-cause assemblages}

\author{Beata Zjawin}
\affiliation{International Centre for Theory of Quantum Technologies, University of  Gda{\'n}sk, 80-308 Gda{\'n}sk, Poland}
\email{beata.zjawin@phdstud.edu.ug.pl}
\author{David Schmid}
\affiliation{International Centre for Theory of Quantum Technologies, University of  Gda{\'n}sk, 80-308 Gda{\'n}sk, Poland}
\affiliation{Perimeter Institute for Theoretical Physics, 31 Caroline St. N, Waterloo, Ontario, N2L 2Y5, Canada}
\affiliation{Institute for Quantum Computing and Department of Physics and Astronomy, University of Waterloo, Waterloo, Ontario N2L 3G1, Canada}
\author{Matty J.~Hoban}
\affiliation{Cambridge Quantum Computing Ltd, 9a Bridge Street, Cambridge, CB2 1UB, United Kingdom}
\affiliation{Department of Computing, Goldsmiths, University of London, New Cross, London SE14 6NW, United Kingdom}
\author{Ana Bel\'en Sainz}
\affiliation{International Centre for Theory of Quantum Technologies, University of  Gda{\'n}sk, 80-308 Gda{\'n}sk, Poland}

\maketitle

\begin{abstract}
 Einstein-Podolsky-Rosen (EPR) steering is often (implicitly or explicitly) taken to be evidence for spooky action-at-a-distance. 
An alternative perspective on steering is that Alice has no causal influence on the physical state of Bob's system; rather, Alice merely updates her knowledge of the state of Bob's system by performing a measurement on a system correlated with his. 
In this work, we elaborate on this perspective (from which the very term `steering' is seen to be inappropriate), and we are led to a resource-theoretic treatment of correlations in EPR scenarios. For both bipartite and multipartite scenarios, we develop the resulting resource theory, wherein the free operations are local operations and shared randomness (LOSR).
 We show that resource conversion under free operations in this paradigm can be evaluated with a single instance of a semidefinite program, making the problem numerically tractable. Moreover, we find that the structure of the pre-order of resources features interesting properties, such as infinite families of incomparable resources. In showing this, we derive new EPR resource monotones. 
We also discuss advantages of our approach over a pre-existing proposal for a resource theory of `steering', and discuss how our approach sheds light on basic questions, such as which multipartite assemblages are classically explainable.
\end{abstract}

\section{Introduction}

The notion of Einstein-Podolsky-Rosen (EPR) `steering'~\cite{einstein1935can,schrodinger1935discussion,cavalcanti2009experimental} refers to a form of nonclassical correlations that arise when one considers measurements performed on half of a bipartite system prepared on an entangled state. Such correlations have multiple applications in quantum information~\cite{wiseman2007steering,uola2020quantum}; for example, they allow for the certification of entanglement under relaxed assumptions~\cite{cavalcanti2015detection,mattar2017experimental}, and constitute an information-theoretic resource for various cryptographic tasks~\cite{branciard2012one,gianissecretsharing}. For these reasons, many previous works~\cite{cavalcanti2009experimental,pusey2013negativity,weight,robustness,gallego2015resource} have begun the process of characterizing the resourcefulness of such correlations.

In this paper, we develop a novel  resource-theoretic approach to quantifying the nonclassicality of a given correlation in an EPR scenario. Our approach is conceptually motivated by a particular perspective on EPR correlations arising from the language of causal modelling~\cite{pearl2000causality,Wood_2015}. In this approach, it is assumed that the relevant causal structure contains no directed causal influences between the parties that share the quantum system, which implies that `steering' is not a form of action-at-a-distance, but rather is a form of inference through a nonclassical common cause. 
This view leads to a `resource theory of steering' which is distinct from that proposed in previous works~\cite{gallego2015resource}. Our approach follows the recent developments of resource theories for studying nonclassicality of common-cause processes
wherein the free operations are local operations and shared randomness (LOSR)~\cite{de2014nonlocality, geller2014quantifying, gallego2017nonlocality, acin2012randomness, cowpie, schmid2020standard, schmid2020type, rosset2020type}.

We now motivate our approach and compare it to prior approaches. 
Readers already familiar with the conceptual framework of Refs.~\cite{cowpie,schmid2020standard} 
may wish to skip to Section~\ref{sec:intro-summary}, where we summarize our main technical results.

\subsection{Reimagining (and renaming) EPR `steering'}
Consider a scenario wherein two parties (Alice and Bob) share a bipartite quantum system prepared in an entangled state, and where Alice then chooses a measurement, performs it on her share of the system, and obtains an outcome. 
Conditioned on the outcome of the measurement, Alice updates her description of the quantum state that corresponds to Bob's system. Consequently, each of Alice's possible measurements  leads to an ensemble of potential updated states of Bob's system together with their associated probabilities of arising. The collection of these ensembles -- an `ensemble of ensembles' -- is termed an {\em assemblage}~\cite{pusey2013negativity}. Depending on {\em which} measurement she chooses to carry out, then, Alice chooses from which of these ensembles the quantum state of Bob's system will be drawn from. 

Schr{\"o}dinger considered this dependence of the wavefunction describing Bob's system on Alice's measurement 
choice to be `magic'~\cite{schrodinger1935discussion}, and termed this phenomena {\em quantum steering}. This terminology is motivated by the idea that Alice's choice of measurement exerts a {\em causal influence} on  Bob's system, even when the two are at an arbitrary distance. If one instead considers the quantum state to be merely a representation of one's {\em information} about a system, then Alice's ability to update this information conditioned on her measurement outcome is not in itself surprising. When Alice {\em learns} something about the true state of Bob's system, by virtue of her measurement on a system correlated with it, it is only natural that she would adjust her description of it (as a wavefunction), accordingly.  

 In this work we do not view `quantum steering' as evidence for nonlocal causal influences. Since the term `steering' has an intrinsic bias towards action-at-a-distance, in this manuscript we forgo its use in favor of more neutral terminology.\footnote{The importance of thinking clearly about the distinction between causation and inference has been argued (and demonstrated) in many previous works~\cite{schmid2020unscrambling,Jaynes,pearl2000causality,PhysRevA.86.012103,fuchs2002quantum,schmidinitial,schmid2021guiding}.} For example, we will refer to a `steering scenario' as an {\em EPR scenario}. 
 This attitude towards terminology follows the lines of Ref.~\cite{cowpie}, which avoided the use of the term `nonlocality' since it similarly suggests the existence of superluminal causal influences.  We will only use the term `steering' when referring to prior work, and even then we will write it between quotations marks.
 
With this in mind, without loss of generality, we recast the EPR scenario as follows. Firstly, the fundamental causal structure underpinning an EPR scenario is that depicted in Fig.~\ref{fig:bipartite-scenario-entangled}, wherein Alice and Bob are connected only by a {\em common cause}, with no causal influence between them. This is exactly the causal structure suggested by relativity theory. 
Second, we assume Alice has access to a classical input system and a classical output system, while Bob has access to a quantum output system \footnote{Considering Bob to have access to a quantum system can be thought of as Bob having a characterized (or `trusted') quantum device.}. 
A process with these two features defines an assemblage, and can be depicted as in Fig.~\ref{fig:bipartite-scenario-entangled}.

\begin{figure}[h!]
  \begin{center}
  \subcaptionbox{\label{fig:bipartite-scenario-entangled}}
{\put(-60,0){\includegraphics[width=0.18\textwidth]{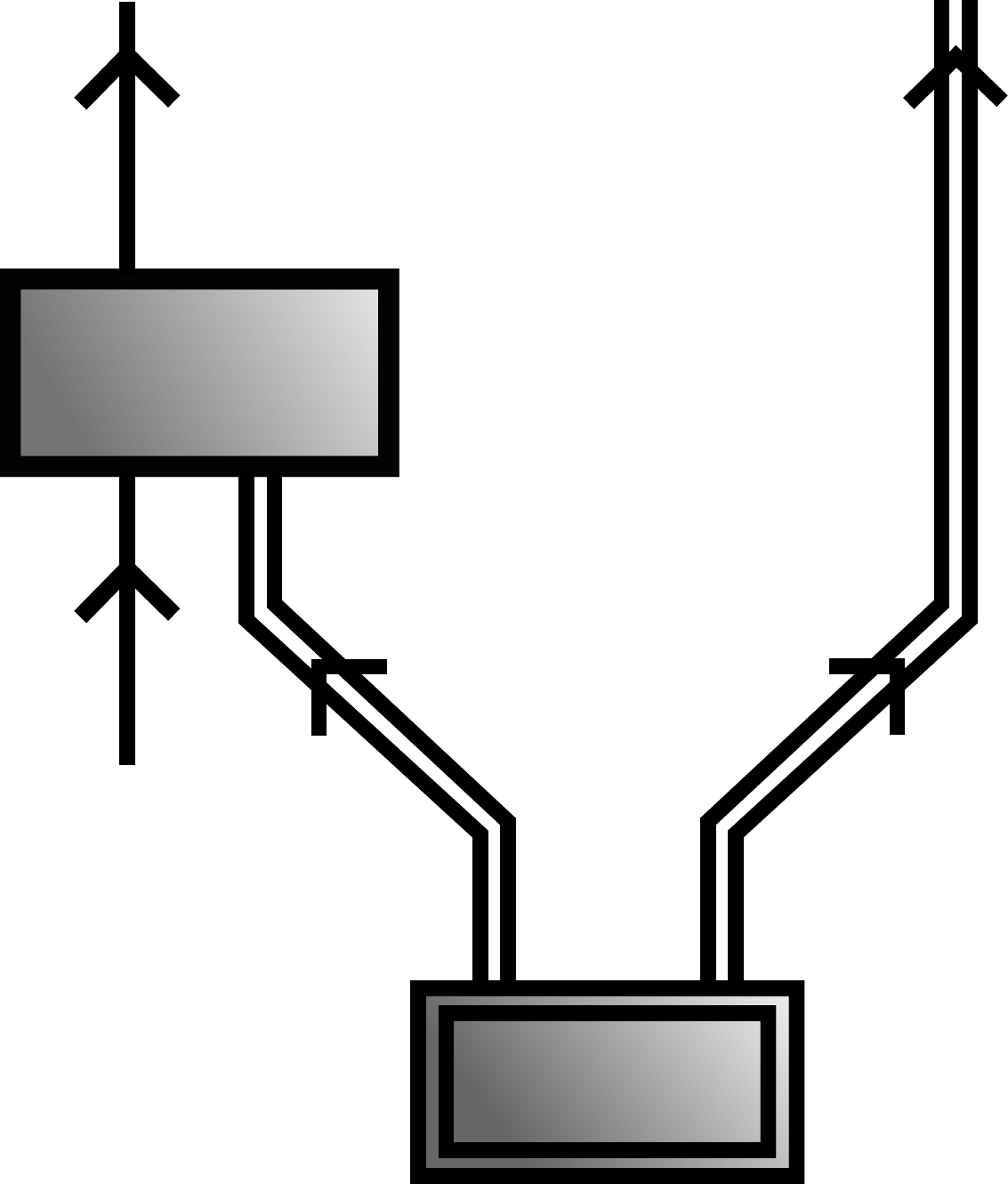}}
\put(-20,-20){$(a)$}
}
\hspace{80mm} 
  \subcaptionbox{\label{fig:bipartite-scenario-classical}}
{\put(-60,0){\includegraphics[width=0.19\textwidth]{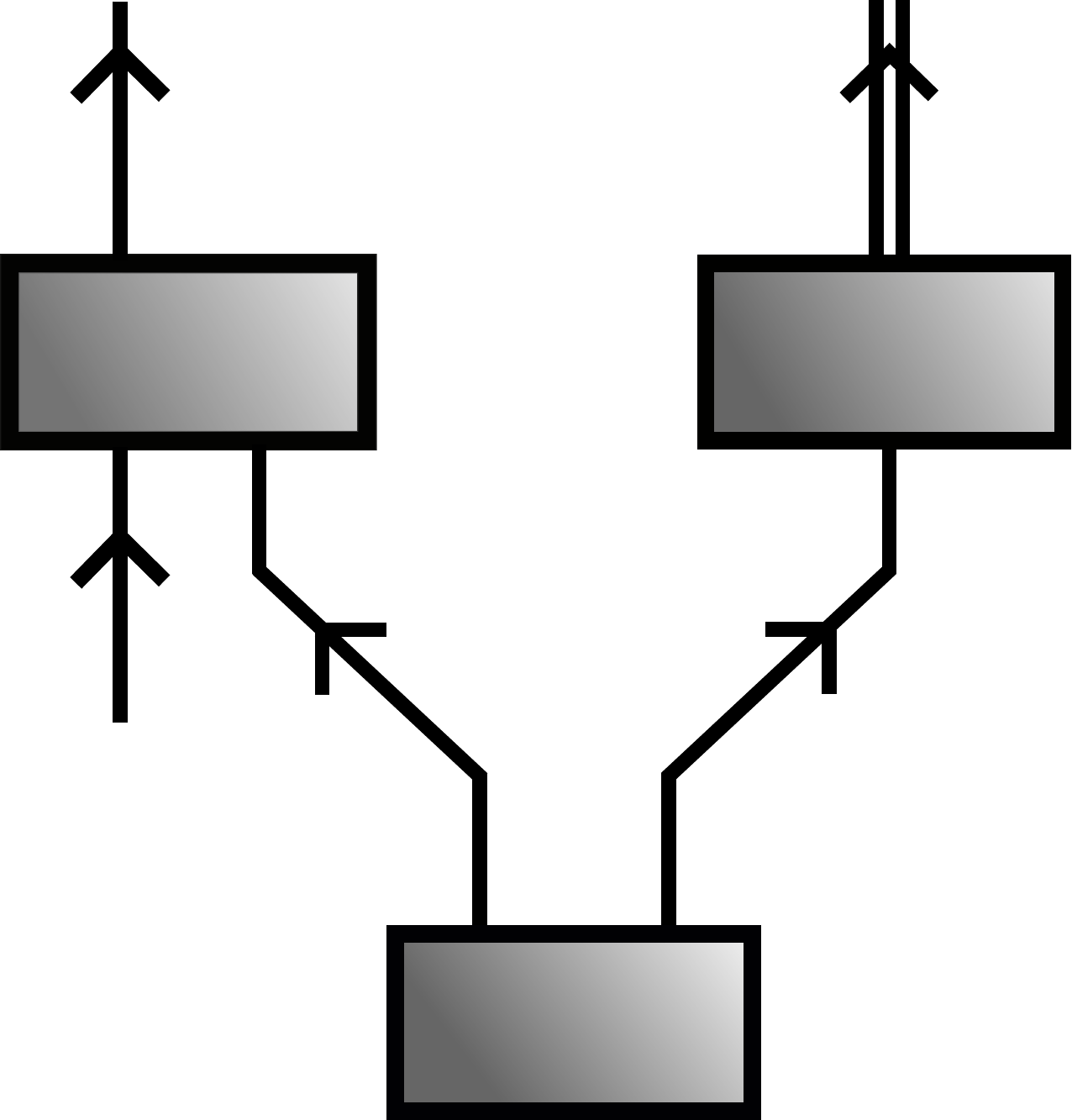}}
\put(-20,-20){$(b)$}
}

  \end{center}
  \caption{Depiction of a bipartite EPR scenario. Quantum systems are represented by double lines, while classical systems are depicted as single lines.  (a) A generic quantum assemblage requires Alice and Bob to share a quantum common cause.  (b) Free assemblages are those which can be generated when Alice and Bob share a classical common cause. }
\end{figure}

A fundamental question about assemblages in an EPR scenario is which assemblages admit a classical description, that is, one in terms of an underlying classical system that generates the observed correlations and that always takes a definite value. In the next section, we discuss that there exist assemblages which cannot be understood as being prepared with a shared classical random variable while setting up the relevant resource theory of nonclassicality in EPR scenarios.

\subsection{The resource theory of nonclassicality in EPR scenarios}

We follow the standard approach to quantifying the resourcefulness of given processes, namely, the framework of resource theories~\cite{coecke2016mathematical,chitambar2019quantum}.  In this approach, one determines two defining features of the resource theory: the \textit{enveloping theory}, which is the set of all possible resources one can produce in the setup; and the \textit{free subtheory} -- the set of operations on the resources that are considered to be freely available, as dictated by the physical constraints in the scenario under study. From this, one can directly determine the relative value of any pair of resources: if resource R can be converted to resource S, then R is at least as valuable as S, since (in the context of these free operations) R  can clearly be used for any purpose that S can be used for. 
 There are by now many useful resource theories, including those for entanglement (relative to local operations and classical communication~\cite{nielsen1999conditions,bennett1996concentrating,sanders2009necessary} or relative to local operations and shared randomness~\cite{buscemi2012all,schmid2020standard}), for nonclassicality in Bell scenarios~\cite{cowpie} and other common-cause processes~\cite{ schmid2020type,rosset2020type}, for post-quantumness~\cite{schmid2021postquantum,barrett2005nonlocal,brunner2009nonlocality}, for coherence~\cite{marvian2016quantify,marvian2016quantum,winter2016operational}, for athermality~\cite{brandao2013resource,skrzypczyk2014work,janzing2000thermodynamic,horodecki2013fundamental,gour2015resource,holmes2020quantifying}, and so on.

Historically, the EPR scenario has been viewed as a means of indirectly studying the properties of entangled states~\cite{wiseman2007steering}. More recently, researchers have begun to study assemblages as processes in and of themselves, since these contain all the relevant information for characterizing an EPR scenario. Here, we will follow this latter approach\footnote{To quantify resources in EPR scenarios, there are three particularly natural options. (i) One may quantify the value of bipartite quantum states, as these are necessary resources for achieving nonclassical assemblages; this has been studied in Ref.~\cite{schmid2020standard}. (ii) One may quantify the value of incompatible measurements, as these are also necessary resources for achieving nonclassical assemblages; this resource theory has been studied in Ref.~\cite{buscemi2020complete}. (iii) One may quantify the value of assemblages as resources in and of themselves. This is in some sense the most direct way of studying resources for nonclassicality, as motivated in the previous paragraph, and this is the sort of resource theory we develop in this paper.}, and so the resources in our resource theory are taken to be assemblages. 
We will only consider assemblages which can be realized within quantum theory (as opposed to those which can be realized with post-quantum resources~\cite{sainz2015postquantum,sainz2020bipartite,schmid2021postquantum}), and so we take our enveloping theory to be given by assemblages that can be generated by quantum common causes as in Fig.~\ref{fig:bipartite-scenario-entangled}.

The critical step in defining any resource theory is to determine the relevant set of free operations. 
It has previously been argued that the relevant free operations for studying nonclassicality in common-cause scenarios are given by the set of local operations and shared randomness~\cite{cowpie,schmid2020standard,schmid2020type,rosset2020type}. In this work, we also adopt this approach. In other words, free resources and free transformations on resources are those which can be generated by classical common causes. 
 In doing so, our approach unifies the study of `steering' resources with resources of `nonlocality' and (LOSR-)entanglement, showing that these are all simply different manifestations of nonclassicality of common-cause processes.

We now briefly reiterate the basic motivations for taking LOSR as free operations,  while specializing these arguments to the case of EPR scenarios. Our choice of free operations is guided by our assumptions about the causal structure, depicted in Fig.~\ref{fig:bipartite-scenario-entangled}. Firstly, we do not allow Alice or Bob to freely exert causal influences on one another, thus ruling out both classical and quantum communication. We place no limitations on the local quantum operations that each can carry out on the systems in their lab. Finally, as we wish to quantify the {\em non}classical properties of assemblages, we restrict the set of common causes which are taken to be free to be the {\em classical} common causes, so that a resource may be valuable only by virtue of having a nonclassical common cause. In other words, we take the set of free operations to include local operations and classical common causes. Since a more common term synonymous to `classical common cause' is `shared randomness', we refer to the set of free operations as {\em local operations and shared randomness}.  

From the perspective endorsed in this manuscript, then, the notion of resourcefulness embodied in a useful assemblage is {\em nonclassicality of its common-cause}. 
The free (classically explainable) assemblages are all and only those which can be generated by classical common causes, i.e., those of the form shown in Fig.~\ref{fig:bipartite-scenario-classical}. As we will see, this class of assemblages coincides exactly with the set of ``unsteerable'' assemblages---those that admit of so-called ``hidden-state models'', i.e., classical common-cause explanations. 
For such assemblages, Bob's system can always be viewed as locally prepared in a state that depends on a classical random variable which always takes a definite classical value, and Alice's choice of measurement can then be viewed as simply providing her information about this classical value, which in turn allows her to refine her description of the state of Bob's quantum system.

In contrast, the common cause generating any nonfree assemblages is necessarily nonclassical (i.e., cannot be viewed as a shared classical random variable), and so Alice's refinements of her knowledge about Bob's system are {\em not} compatible in this sense. However, there is currently no agreed-upon framework for formalizing Alice's refinements of her knowledge in this sense. Indeed, to fully understand how inferences are made through such a nonclassical common cause would presumably require a nonclassical generalization of the classical theory of inference. While some first attempts at such a program have been made \cite{fuchs2002quantum,leifer2013towards,schmid2020unscrambling}, arguably with some success when applied specifically to EPR scenarios, this ambitious problem remains very much unsolved.
A convenient feature of our resource theoretic approach is that it does not require us to resolve these difficult issues. This is because the structure of the resource theory is entirely determined by the free set of {\em classical} common-cause processes, which are well-understood even without such a novel theory of inference. 

\subsection{Comparison to prior work}

 A resource theory of `steering' in bipartite scenarios, distinct from ours, was defined a few years ago~\cite{gallego2015resource}. Therein, the free operations were taken to be (stochastic) local operations together with classical communication from Bob to Alice. This set of free operations was motivated by one particular application for which assemblages are known to serve as resources -- namely, this set is the most general one that does not compromise the security of one-sided device-independent quantum key distribution protocols.
In contrast, our resource theory is developed based on the natural physical limitations arising in any common-cause scenario. As such, our approach follows a recent body of work~\cite{cowpie,schmid2020standard,schmid2020type,rosset2020type} which demonstrates the importance of studying nonclassicality in common-cause scenarios (like the Bell scenario) within a resource-theoretic perspective underpinned by LOSR operations. 

We compare these two approaches further in Section~\ref{sec:relwork-comparison}. In particular, we note that our approach has the technical advantage that its set of free transformations is simpler to characterize and study.
More importantly, our resource theory  has the conceptual advantage that it allows for the unification of every type of nonclassical correlation in Bell-like scenarios, since all of these (including entanglement, `steering', and `nonlocality') can be viewed as resources of nonclassicality of common-cause processes. We expect this unification to be useful for better understanding the relationships between entanglement, EPR nonclassicality, and Bell nonclassicality, as well as for understanding the possibilities for interconversion between these (and many other~\cite{schmid2020type,rosset2020type}) forms of nonclassicality in common-cause scenarios. 
Furthermore, the principled approach we follow here to distill the essence of nonclassicality allows us to directly and uniquely generalize our framework to the case of multipartite EPR scenarios (which we provide a resource theory for in this work) and to Bob-with-input EPR scenarios~\cite{sainz2020bipartite} and channel EPR scenarios~\cite{piani2015channel}\footnote{Motivated by our common-cause description of an EPR scenario, we refer to so-called channel `steering' scenarios~\cite{piani2015channel} as channel EPR scenarios. Bob-with-input EPR scenarios~\cite{sainz2020bipartite} and channel EPR scenarios~\cite{piani2015channel} are generalizations of the traditional EPR scenario in which Bob is also allowed to have a (classical or quantum) input that further influences the state preparation of his quantum system.} (which we provide a resource theory for in a follow-up work).

\subsection{Summary of main results}\label{sec:intro-summary}

In this paper, we construct a resource theory of bipartite and multipartite `steering' under LOSR operations.  
This is the first time a resource-theoretic approach has been applied to the latter scenario.
Even within the traditional bipartite scenario, our approach differs from previous approaches to resource theories of `steering' \cite{gallego2015resource}, and clarifies some issues which arose  within them. 

This paper is divided into three main sections, where we discuss the definition and implications of the LOSR resource-theoretic approach to bipartite EPR scenarios (Section~\ref{sec:bipartite}), multipartite EPR scenarios (Section~\ref{sec:multi}), and the relation of this resource theory to previous work (Section~\ref{sec:relwork}).
In Sections~\ref{sec:bipartite} and~\ref{sec:multi}, we first recall the definition of the scenario of interest, and explicitly specify its most general LOSR processing of the corresponding EPR assemblage. Next, we show that resource conversion under free operations in the corresponding scenario can be evaluated with a single instance of a semidefinite program. We conclude each of these sections by highlighting properties of the pre-order of resources. In Section~\ref{sec:relwork}, we discuss the advantages of our LOSR approach in relation to the problem of choosing the set of free operations in a resource theory (Section~\ref{sec:relwork-comparison}) and consistently defining the set of free resources (Sections~\ref{sec:relwork-multi} and \ref{sec:relwork-exposure}).

\section{The bipartite EPR scenario}\label{sec:bipartite}

In this section, we define the resource theory of nonclassicality of bipartite assemblages under LOSR operations. We begin by formalizing the set of free (LOSR) operations. Then, we explicitly show that a resource conversion in this paradigm can be decided with a single instance of a semidefinite program. This is the first tool that allows one to systematically determine the relative nonclassicality of assemblages, that is, the convertibility relations that hold among them. We use this semidefinite program to study the possible conversions among a family of infinite assemblages. Moreover, we define new EPR measures and study the properties of the pre-order analytically. 

\subsection{Definition of the scenario and free assemblages}

The bipartite EPR scenario (see Fig.~\ref{fig:bipartite-scenario}) consists of two distant parties, Alice and Bob, that share a physical system and perform local actions on it. Alice performs (possibly incompatible) measurements on her subsystem: upon choosing a measurement setting denoted by $x$, she obtains a classical outcome $a$ with probability $p(a|x)$. We denote by $\X$ the set of classical labels that denote Alice's choice of measurement, and by $\A$ the set of labels for her measurement outcomes. Without loss of generality, we assume all the measurements to have the same outcome cardinality. By measuring her system, Alice refines her knowledge of Bob's system, which is now described by a conditional marginal state $\rho_{a|x}$. The relevant object of study is the ensemble of  ensembles of quantum states, dubbed an \emph{assemblage} \cite{pusey2013negativity}, that contains all the information characterizing an EPR scenario. It is defined as $\As_{\A|\X}=\{ \{\sigma_{a|x}\}_{a\in \A} \}_{ x \in \X}$, where each unnormalised state $\sigma_{a|x}$ is given by $\sigma_{a|x} := p(a|x) \rho_{a|x}$. That is, the probability that an updated state arises is given by the normalization factor for the corresponding state in the assemblage.

One can depict an assemblage as in Fig.~\ref{fig:bipartite-scenario}, where the classical and quantum systems are depicted with single and double lines, respectively. 
It is worth noticing that, even though we have chosen our enveloping theory to consist only of quantumly-realizable assemblages, the results derived in this section also apply to broader enveloping theories including those generated by arbitrary (e.g. post-quantum) common-causes. 
In fact, in the bipartite case, there is no difference between the assemblages\footnote{ 
It is worth noticing that in this definition of an EPR scenario -- which is just a generalisation of EPR's original thought experiment --  the system in Bob's laboratory remains quantum, or in other words, effectively admits a quantum description. Hence, even if post-quantum common causes are allowed, Bob's system does not require a post-quantum treatment. This is formally the same requirement as when Alice and Bob share a classical common cause: we still describe Bob's system with the language of quantum theory. Of course, one can define other types of experiments, where now Bob's system is allowed to be post-quantum.} that can be realized by quantum common causes and those that can be realized by arbitrary common causes. This follows from the GHJW theorem, proven by Gisin \cite{gisin1989stochastic} and Hughston, Jozsa, and Wootters \cite{hughston1993complete}, which (in our causal language) shows that {\em all} bipartite common-cause assemblages are quantumly-realizable.

\begin{figure}[h!]
  \begin{center}
  \subcaptionbox{\label{fig:bipartite-scenario}}
{\put(-60,40){\includegraphics[width=0.18\textwidth]{Figs/bipartite.pdf}}
\put(-45,75){$x$}
\put(-44,125){$a$}
\put(25,125){$\rho_{a|x}$}
\put(-20,0){$(a)$}
}
\hspace{80mm} 
\subcaptionbox{\label{fig:bipartite-scenario-LOSR}}
{\put(-85,0){\includegraphics[width=0.27\textwidth]{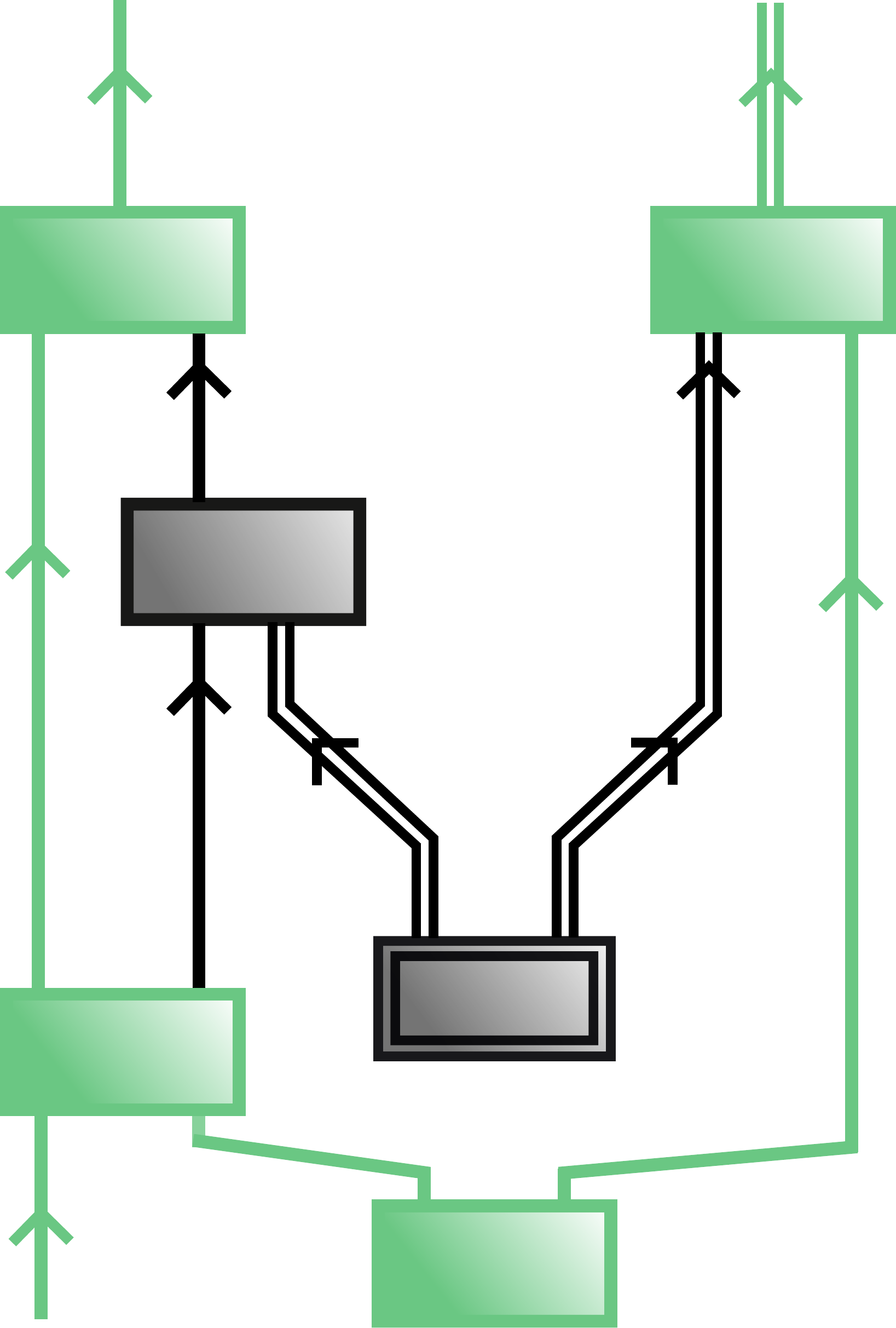}}
\put(-55,75){$x$}
\put(-55,120){$a$}
\put(-95,100){$c$}
\put(-21,5){$\lambda$}
\put(15,120){$\rho_{a|x}$}
\put(-72,10){$x^{\prime}$}
\put(-60,165){$a^{\prime}$}
\put(26,165){$\rho_{a^{\prime}|x^{\prime}}$}
\put(-25,-20){$(b)$}
}

  \end{center}
  \caption{Depiction of a bipartite EPR scenario. Quantum systems are represented by double lines, while classical systems are depicted as single lines.  (a) Quantum assemblage: Alice and Bob share a quantum common cause. (b) The most general LOSR operation on an assemblage in a bipartite EPR scenario.}
\end{figure}

In quantum theory, the state of the shared  system can  be  described by a density matrix $\rho$ and Alice implements generalised measurements, i.e., positive operator-valued measures (POVMs), which we denote by $\{\{M_{a|x}\}_{a\in \A}\}_{ x \in \X}$. In this case, the unnormalised states admit a quantum realisation of the form $\sigma_{a|x} = \text{tr}_A\{(M_{a|x} \otimes \id) \rho\}$. This is formalised as follows: 
\begin{defn}\textbf{ Quantumly-realizable assemblage.}\\
An assemblage $\As_{\A|\X}=\{ \{\sigma_{a|x}\}_{a\in \A} \}_{ x \in \X}$ has a quantum realisation iff there exists a Hilbert space $\cH_A$, a state $\rho$ in $\mathcal{H}_A \otimes \mathcal{H}_B$, and POVMs $\{\{M_{a|x}\}_{a\in \A}\}_{ x \in \X}$ on $\mathcal{H}_A$ such that 
\begin{align}\label{eq:quantum-bipartite}
\sigma_{a|x} = \mathrm{tr}_A\{(M_{a|x} \otimes \id) \rho\}
\end{align}
for all $x\in\X$ and $a\in\A$.
\label{def:Q-bipartite}
\end{defn}
\noindent

Finally, we consider the question of which assemblages admit a classical explanation: that is, which assemblages can be understood as a classical common-cause process. As argued in the introduction, these are all and only those that can be constructed freely from LOSR operations. 
Assemblages of this form are depicted in Fig.~\ref{fig:bipartite-scenario-classical}. 
Now, one can associate: 
\begin{compactitem}
\item a probability distribution $p(\lambda)$ with the box representing the state of the common cause (where we imagine that $\lambda$ is sampled according to this distribution, copied, and shared with both parties), 
\item a conditional probability distribution $p(a|x,\lambda)$ with the box representing the process by which Alice's outcome is generated (depending on her setting and the value of $\lambda$), 
\item a normalized quantum state $\rho_{\lambda}$ with the process by which Bob's quantum state is locally generated (depending on $\lambda$). 
\end{compactitem}
Hence, the elements of an LOSR-free assemblage can be written as $\sigma_{a|x}=\sum_{\lambda} p(\lambda) p(a|x,\lambda)  \rho_{\lambda}$, with $p(a|x,\lambda)$ being valid conditional probability distributions for all values of $\lambda$, which is the classical variable corresponding to the shared classical common cause. Traditionally, these LOSR-free assemblages have been referred to as `unsteerable’ assemblages and are mathematically expressed in terms of the so-called `local hidden states' $\sigma_\lambda := p(\lambda) \, \rho_\lambda$\footnote{ The `local hidden states’ terminology is motivated by the `local hidden variable’ terminology in Bell scenarios. In our view, this terminology is not ideal. Firstly, the states $\sigma_\lambda$ are local in the sense that Bob produces them locally, but the word local itself has a connotation in the context of `quantum nonlocality’ and so could be misleading. Secondly, it is irrelevant for the freeness of an assemblage whether or not the state describing Bob’s system is `hidden’. As such, it is better to refer to free resources as `classical common-cause resources', as this highlights precisely which feature of them is critical to their nonfreeness.}. 
 
\subsection{LOSR transformations between bipartite assemblages}

The most general LOSR transformation of an assemblage is illustrated in Fig.~\ref{fig:bipartite-scenario-LOSR}. It consists of a comb (which locally pre- and post-processes the classical variables in Alice's wing) \cite{chiribella2009theoretical} and a completely positive~\cite{NielsenChuang,schmidinitial} trace preserving (CPTP) map (which post-processes Bob's quantum system). Both operations are correlated by a classical variable $\lambda$.  Formally, a generic LOSR transformation illustrated in Fig.~\ref{fig:bipartite-scenario-LOSR} transforms one assemblage $\As_{\A|\X}$ into a new assemblage $\As'_{\A^{\prime}|\X^{\prime}}$ as follows:
\begin{align}\label{eq:LOSRtrans-bipartite} 
\sigma'_{a^\prime|x^\prime} = \sum_{\lambda,c}  \sum_{a,x} \, p(\lambda) \,  p(c,x|x^{\prime},\lambda) 
p(a|x) p(a^{\prime}|a,c)  \mathcal{E}_{\lambda}(\rho_{a|x})\,,
\end{align}
where 
\begin{compactitem}
\item $p(c,x|x^{\prime},\lambda) $ encodes the classical pre-processing of Alice's input $x$ as a function of $x^\prime$ and the shared classical randomness $\lambda$. Here $c$ denotes the variable to be transmitted through Alice's classical side channel toward the post-processing stage. 
\item $p(a^{\prime}|a,c) $ encodes the classical post-processing of Alice's output $a$, as a function of the classical information $c$ kept from the pre-processing stage. The output of the process is Alice's new outcome $a^\prime$. 
\item $ \mathcal{E}_{\lambda}[\cdot]$ is the CPTP map corresponding to Bob's local post-processing of his quantum system, as a function of the shared classical randomness $\lambda$.
\end{compactitem}

Notice that  if $\As_{\A|\X}$ is free, then $\As'_{\A^{\prime}|\X^{\prime}}$ is free as well -- that is, the set of free assemblages is closed under LOSR operations. This highlights the basic property of a resource theory: applying a free transformation on a free resource cannot create a resourceful object. 
Any assemblage that cannot be realised by local operations and shared randomness is nonfree and constitutes a resource of LOSR nonclassicality.

The local pre- and post-pocessings of Alice's classical variables are represented by the product $p(c,x|x^{\prime}, \lambda) \, p(a^{\prime}|a,c)$ in Eq.~\eqref{eq:LOSRtrans-bipartite}. Notice, however, that the classical variable $c$ is a function only of the classical values $x^{\prime}$ and $\lambda$, and that we have no constraints on the dimension of the classical variable $c$. Therefore, Eq.~\eqref{eq:LOSRtrans-bipartite} can be written as
\begin{align}\label{eq:LOSRtrans-bipartite2} 
\sigma'_{a^\prime|x^\prime} = \sum_{\lambda}  \sum_{a,x} \, p(\lambda) \,  p(a^{\prime}|a,x^{\prime}, \lambda)\, p(x|x^{\prime}, \lambda)\, p(a|x)\,  \mathcal{E}_{\lambda}(\rho_{a|x})\,.
\end{align} 
Notice that this expression satisfies the condition of no-retrocausation -- the variable $a$ cannot influence the value of the variable $x$.

A final remark pertains to a particular way to express a generic LOSR transformation, based on Fine's argument \cite{FinePRL} and discussed in Ref.~\cite{cowpie}. In the central point of this expression is the fact that the set of LOSR operations is convex and its extremal elements are deterministic \cite{cowpie}. This implies that Alice's pre- and post-processing can be decomposed as a convex combination of deterministic operations:
\begin{align}\label{eq:det-bipartite}
    p(a^{\prime}|a,x^{\prime}, \lambda)\, p(x|x^{\prime}, \lambda)= \sum_{\tilde{\lambda}} p(\tilde{\lambda}|\lambda) D(a^{\prime}|a,x^{\prime},\tilde{\lambda})\,D(x|x^{\prime},\tilde{\lambda}).
\end{align} 
Here, $D(a^{\prime}|a,x^{\prime},\tilde{\lambda})$ assigns a fixed outcome $a^{\prime}$ for each possible choice of $a$, $x^{\prime}$, and $\tilde{\lambda}$, i.e., $D(a^{\prime}|a,x^{\prime},\tilde{\lambda})=\delta_{a^{\prime},f_{\tilde{\lambda}}(a,x^{\prime})}$. Similarly, $D(x|x^{\prime},\tilde{\lambda})$ assigns a fixed outcome $x$ for each measurement $x^{\prime}$ and value of $\tilde{\lambda}$, i.e., $D(x|x^{\prime},\tilde{\lambda})=\delta_{x,g_{\tilde{\lambda}}(x^{\prime})}$. Let us define a new completely positive and trace non-increasing (CPTNI) map 
\begin{align}
\tilde{\mathcal{E}}_{\tilde{\lambda}}(\sigma_{a|x})= \sum_{\lambda} p(\lambda) p(\tilde{\lambda}|\lambda) \mathcal{E}_{\lambda}(p(a|x)\rho_{a|x}).
\end{align}
Notice that $\sum_{\tilde{\lambda}} \tilde{\mathcal{E}}_{\tilde{\lambda}}(\sigma_{a|x})$ forms a CPTP map. We are now in the position to rewrite Eq.~\eqref{eq:LOSRtrans-bipartite2} as follows
\begin{align}\label{eq:det-bipartite2}
     \sigma'_{a^{\prime}|x^{\prime}}= \sum_{\tilde{\lambda}}\sum_{a,x} D(a^{\prime}|a,x^{\prime},\tilde{\lambda})\,D(x|x^{\prime},\tilde{\lambda}) \tilde{\mathcal{E}}_{\tilde{\lambda}}(\sigma_{a|x}),
\end{align}
which yields the simplified characterisation of a generic LOSR transformation that we will use throughout.

\subsection{A semidefinite test for deciding resource conversions}\label{sec:sdp-bi}

An assemblage $\As_{\A|\X}$ can be converted into a different assemblage $\As'_{\A^{\prime}|\X^{\prime}}$ under LOSR operations if and only if there exist a collection of CPTNI maps $\tilde{\mathcal{E}}_{\lambda}$ such that $\As'_{\A^{\prime}|\X^{\prime}}$ can be decomposed as in Eq.~\eqref{eq:det-bipartite2}. Therefore, deciding whether $\As_{\A|\X}$ can be converted into $\As'_{\A^{\prime}|\X^{\prime}}$ under LOSR operations is equivalent to checking whether this decomposition is possible. We will now show that this can be decided with a single instance of a semidefinite program (SDP).

Recall that due to Choi-Jamiołkowski isomorphism~\cite{choi1975completely,jamiolkowski1972linear} every CPTP map $\mathcal{E}: \mathcal{H}_B \rightarrow \mathcal{H}_{B'}$ can be associated with an operator $W$ on $\mathcal{H}_B \otimes \mathcal{H}_{B'}$ such that $\mathcal{E}(\rho_B)=d_B\, \tr{W\,(\,\id_{B^{\prime}} \otimes \rho_{B}^{T})}{B}$, where $d_B$ is the dimension of the system $\rho_B$. Conversely, the operator $W$ can be written as $W=(\mathcal{E}\otimes\id_{B'})\ket{\Omega}\bra{\Omega}$, with $\ket{\Omega}=\frac{1}{\sqrt{d_B}}\sum_{i=1}^{d_B} \ket{ii}$. This isomorphism is crucial for reformulating our problem as an SDP. 
For Eq.~\eqref{eq:det-bipartite2} to hold, each $ \sigma'_{a^{\prime}|x^{\prime}}$ must admit the following decomposition
\begin{align}\label{eq:sigSDP-bipartite}
     \sigma'_{a^{\prime}|x^{\prime}}= \sum_{\lambda}\sum_{a,x}D(a^{\prime}|a,x^{\prime},\lambda)\,D(x|x^{\prime},\lambda)\,d_B\, \tr{W_{\lambda}\,(\,\id_{B^{\prime}} \otimes \sigma_{a|x}^{T})}{B}\,.
\end{align}
Here, the map $\tilde{\mathcal{E}}_{\lambda}(\sigma_{a|x})$ is written in the operator form with $W_{\lambda}$ being the Choi state, i.e., $\tilde{\mathcal{E}}_{\lambda}(\sigma_{a|x})=d_B\, \tr{W_{\lambda}\,(\,\id_{B^{\prime}} \otimes \sigma_{a|x}^{T})}{B}$, with $W_{\lambda}$ being a $(d_B \times d_{B’})$ by $(d_B \times d_{B’})$ matrix. Notice also that Eq.~\eqref{eq:sigSDP-bipartite} involves only finite sums, since the variable $\lambda$ enumerates the finitely-many deterministic distributions $D(a^{\prime}|a,x^{\prime},\lambda)$ and $D(x|x^{\prime},\lambda)$. The total number of the deterministic strategies encoded in $\lambda$ is equal to $|\A^{\prime}|^{|\A|\times|\X'|}\times|\X|^{|\X'|}$. 
This correspondence between elements $ \sigma'_{a^{\prime}|x^{\prime}}$ and $\sigma_{a|x}$ enables us to construct an SDP that checks whether $\As_{\A|\X}$ can be converted into $\As'_{\A^{\prime}|\X^{\prime}}$ under LOSR operations:

\begin{sdp}\label{SDP-bipartite}\textbf{$\As_{\A|\X} \, \overset{\text{LOSR}}{\longrightarrow} \, \As'_{\A^{\prime}|\X^{\prime}}$.}\\
The assemblage $\As_{\A|\X}$ can be converted into the assemblage $\As'_{\A^{\prime}|\X^{\prime}}$ under LOSR operations, denoted by $\As_{\A|\X} \, \overset{\text{LOSR}}{\longrightarrow} \, \As'_{\A^{\prime}|\X^{\prime}}$, if and only if the following SDP is feasible:

\begin{align}
\begin{split}
\textrm{given} \;\;\;& \{ \{\sigma_{a|x}\}_{a} \}_{ x}\,,\; \{ \{ \sigma'_{a^{\prime}|x^{\prime}}\}_{a^{\prime}}\}_{x^{\prime}}\,,\; \{D(a^{\prime}|a,x^{\prime},\lambda)\}_{\lambda,a^{\prime},a,x^{\prime}} \,, \; \{D(x|x^{\prime},\lambda)\}_{\lambda,x,x^{\prime}} \\
    \textrm{find} \;\;\;& \{(W_{\lambda})_{BB'} \}_{\lambda}  \\
    \textrm{s.t.} \;\;\;& \begin{cases} W_{\lambda} \geq 0\,,\\
      \tr{W_{\lambda}}{B^{\prime}} \propto \frac{1}{d}\,\id_B \;\;\;\; \forall \lambda\,, \\
      \sum_{\lambda} \tr{W_{\lambda}}{B^{\prime}} = \frac{1}{d}\, \id_B\,,\\
       \sigma'_{a^{\prime}|x^{\prime}}= \sum_{\lambda}\sum_{a,x} D(a^{\prime}|a,x^{\prime},\lambda)\, D(x|x^{\prime},\lambda)\,d_B\, \tr{W_{\lambda}\,(\,\id_{B^{\prime}} \otimes \sigma_{a|x}^{T})}{B}\,.
      \end{cases}
    \end{split}
\end{align}
When the conversion is not possible, we denote it by $\As_{\A|\X} \, \overset{\text{LOSR}}{\not\longrightarrow} \, \As'_{\A^{\prime}|\X^{\prime}}$.
\end{sdp}
SDP \ref{SDP-bipartite} is a feasibility problem, i.e., it checks if the feasible set is equal to the empty set. If this is the case, the primal optimal value is equal to $-\infty$, which means that there exists no set $\{W_{\lambda}\}_{\lambda}$ that satisfies the constraints specified by Eq.~\eqref{eq:det-bipartite2}. If the feasible set is not equal to the empty set, the optimal value is equal to zero, and the problem is feasible. Therefore, checking whether $\As_{\A|\X}$ can be converted into $\As'_{\A^{\prime}|\X^{\prime}}$ under LOSR operations requires a single instance of an SDP.

\subsection{Properties of the pre-order}
When one assemblage can be freely converted into another, then the former is said to be \emph{at least as nonclassical} as the latter.
Therefore, the pre-order of assemblages gives information about their relative nonclassicality. In this section, we study the nonclassicality among members of an infinite family of assemblages defined below.

Consider an EPR scenario where $\A=\X=\{0,1\}$, and Bob's dimension is $2$. Imagine Alice and Bob share an entangled state of the form $\ket{\theta}=\cos{\theta} \ket{00} + \sin{\theta} \ket{11}$, where $\theta \in \left(0,\sfrac{\pi}{4}\right]$. The measurements that Alice performs on her share of the entangled state are given by $\widetilde{M}_{a|0} = \frac{1}{2}\{\id + (-1)^a \sigma_z\}\,$ and $\widetilde{M}_{a|1} = \frac{1}{2}\{\id + (-1)^a \sigma_x\}$, with $\sigma_z$ and $\sigma_x$ being Pauli matrices. Then, the assemblage elements can be written as
\begin{align}\label{eq:ass_elem}
\sigma^\theta_{a|x} = \tr{\widetilde{M}_{a|x} \otimes \id \ket{\theta}\bra{\theta}}{\mathrm{A}}.
\end{align}
This infinite family of assemblages is indexed by one parameter -- the angle $\{\theta\}$. We will now introduce one more parameter to this family. The new parameter $\{p\}$ is responsible for mixing the elements $\sigma^\theta_{a|x}$ with noise. Let us define a family of assemblages $\FA$ as: 
\begin{align}\label{eq:thefam}
\FA &= \left\{ \As^{\theta,p}_{\A|\X} \, \Big\vert \, \theta \in \left(0,\sfrac{\pi}{4}\right], p \in [0,1] \right\}\,,\\ \nonumber
\text{where} \quad \As^{\theta,p}_{\A|\X} &= \left\{ \left\{p \, \sigma^\theta_{a|x} + (1-p) \, \frac{\id}{4} \right\}_{a\in \A}\right\}_{ x \in \X} \,.
\end{align} 

This family of assemblages has an infinite number of elements. Each element is indexed by two parameters -- the angle and the probability, $\{\theta,p\}$ respectively. We will sometimes focus on a subset of $\FA$ with a fixed value of $p$. In such cases, for $p=\varrho$, we define $\FA_\varrho=\left\{ \As^{\theta,\varrho}_{\A|\X} \, \Big\vert \, \theta \in \left(0,\sfrac{\pi}{4}\right], p=\varrho \right\}$. For example, for $p=1$, we have the following family of assemblages:
\begin{align}\label{eq:thefam1}
\FA_1 &= \left\{ \As^{\theta,1}_{\A|\X} \, \Big\vert \, \theta \in \left(0,\sfrac{\pi}{4}\right], p =1 \right\}\,,\\ \nonumber
\text{where} \quad \As^{\theta,1}_{\A|\X} &= \left\{ \left\{ \sigma^\theta_{a|x}  \right\}_{a\in \A} \right\} _{ x \in \X} \,.
\end{align}
For simplicity, we denote $\As^{\theta,p}_{\A|\X}=\As^{\theta,p}$ in this section. 

We now study the properties of the pre-order among the resources in $\FA$. First, we run the SDP \ref{SDP-bipartite} and test which conversions between resources parametrized by different values of $\{\theta,p\}$ are possible. Second, in Section~\ref{sec:new-monotones}, we focus on $\FA_1$ and confirm the results obtained from the SDP analytically.

To test whether two assemblages in $\FA$ can be converted into each other, we check if they satisfy the constraints specified by Eq.~\eqref{SDP-bipartite}. The solutions to the SDP (computed in Matlab \cite{MATLAB:2010}, using the software CVX \cite{grant2013cvx,blondel2008recent}, the solver SDPT3 \cite{tutuncu1999sdpt3} and the toolbox QETLAB \cite{qetlab}; see the code at \cite{github}) are illustrated in Fig.~\ref{fig:SDP-bipartite}. In this figure, each dot represents one assemblage $\As^{\theta,p}$. For example, the point indexed by $\{ \theta=\pi/6, p=0.9 \}$ corresponds to $\As^{\pi/6,0.9} = \left\{\left\{0.9 \, \sigma^{\pi/6}_{a|x} + 0.1 \, \frac{\id}{4} \right\}_{a\in \A}\right\}_{x \in \X}$. 
The black dots correspond to nonfree assemblages, and the grey dot represents a free assemblage. 
Note that the assemblage $\As^{\pi/4,1}$ is one obtained when Alice and Bob share a quantum system prepared in a `maximally entangled'\footnote{Here, by a `maximally entangled' quantum state we mean one where entanglement is quantified as per a resource theory based on local operations and classical communication as the free operations \cite{LOCCentang,schmid2020standard}, which is the standard approach. In the case of two qubits this may be the singlet state.} state, and Alice performs two suitable Pauli measurements in her share of the system. The arrows represent possible conversions, e.g., the arrow pointing from $\As^{\pi/4,1}$ to $\As^{\pi/6,0.8}$ means that $\As^{\pi/4,1}$ can be converted into $\As^{\pi/6,0.8}$ under LOSR operations. The grey, dashed lines represent trivial conversions. For the sake of simplicity, Fig.~\ref{fig:SDP-bipartite} represents only nine assemblages, which is already sufficient to illustrate some interesting features of the pre-order among assemblages in $\FA$. 

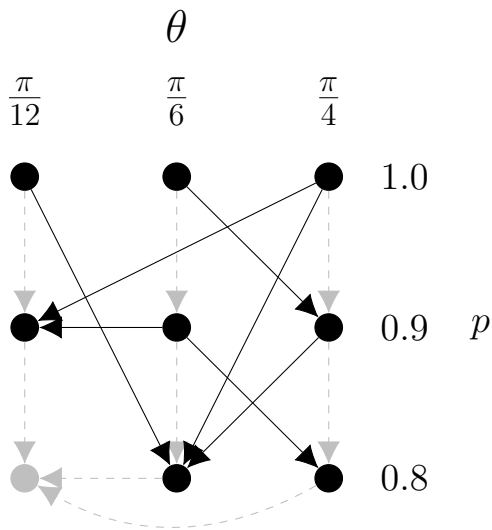
\begin{figure}
 \centering
  \usetikzlibrary{arrows.meta}
\colorlet{lgray}{gray!50}
\tikzset{>={Latex[width=3mm,length=3mm]}}

\begin{tikzpicture}
\foreach \x in {0,1,2}{
  \foreach \y in {1,2}{
      \node[circle,fill=black,draw] (\x;\y)  at (2*\x,2*\y)  {} ;
    }
} 

\foreach \x in {1,2}{
  \foreach \y in {0}{
      \node[circle,fill=black,draw] (\x;\y)  at (2*\x,2*\y)  {} ;
    }
}

\node[circle,draw=lgray,fill=lgray,draw] (0;0) at (0,0) {} ;

\foreach \i in {0,1,2}{
      {\draw[dashed, ->,draw=lgray] (\i;1) -- (\i;0);}
      {\draw[dashed,->,draw=lgray] (\i;2) -- (\i;1);}
    }
    

\draw[->,draw=black] (2;2) -- (0;1); 
\draw[->,draw=black] (2;2) -- (1;0);

\draw[->,draw=black] (1;2) -- (2;1);

\draw[->,draw=black] (0;2) -- (1;0); 

\draw[->,draw=black] (2;1) -- (1;0);

\draw[->,draw=black] (1;1) -- (0;1); 
{\draw[->,draw=black] (1;1) -- (2;0);}

\draw[dashed,->,draw=lgray] (2;0) to [out=-150,in=-30] (0;0); 

\draw[dashed,->,draw=lgray] (1;0) -- (0;0); 

\node[] at (2*3,2*1) {\Large $p$};
\node[] at (2*2.5,2*0) {\Large 0.8};
\node[] at (2*2.5,2*1) {\Large 0.9};
\node[] at (2*2.5,2*2) {\Large 1.0};

\node[] at (2*1,2*3) {\LARGE $\theta$};
\node[] at (2*0,2*2.5) {\LARGE $\frac{\pi}{12}$};
\node[] at (2*1,2*2.5) {\LARGE $\frac{\pi}{6}$};
\node[] at (2*2,2*2.5) {\LARGE $\frac{\pi}{4}$};

\end{tikzpicture}
  \caption{Possible conversions between elements of $\FA$. The black dots represent the assemblages $\As^{\theta,p}$. The arrows represent possible conversions.}
  \label{fig:SDP-bipartite}
\end{figure}

Fig.~\ref{fig:SDP-bipartite} displays some elements of $\FA_1$ that are unordered in our LOSR resource theory of assemblages; e.g., notice that no conversions are possible among assemblages where $p=1$. In contrast, some conversions {\em are} possible among assemblages for which $p=0.9$, and among assemblages for which $p=0.8$.
Moreover, this figure is not symmetric: conversions of assemblages with higher value of $\theta$ to those with lower value of $\theta$ are more common. For example, $\As^{\pi/4,1}$ can be converted into $\As^{\pi/12,0.9}$, but $\As^{\pi/12,1}$ cannot be converted into $\As^{\pi/4,0.9}$. Finally, all arrows point in one direction only, suggesting that there are no equivalent assemblages in this family. 

Notice that we used the SDP \ref{SDP-bipartite} to study the pre-order of assemblages in $\FA$, where $\A=\X=\{0,1\}$, and Bob's dimension is $2$. However, the SDP is not constrained to such simple examples. In principle, resource conversion can be evaluated via the SDP \ref{SDP-bipartite} for families of assemblages with an arbitrary cardinality of $\A$ and $\X$, and arbitrary Bob's dimension.

\subsubsection{EPR monotones}\label{sec:oldmonotones}

A common approach to unraveling the pre-order of resources in a resource theory is by employing so-called \textit{resource monotones}. Formally, a resource monotone is a function which is monotonic under free
operations. Resource monotones enable comparison of resources and identification of conversion relations and equivalence classes. 
That is, if the real numbers that resource monotone $M$ assigns to the pair of resources $R_1$ and $R_2$ satisfy $M(R_1) < M(R_2)$, then one can conclude that the monotone $M$ witnesses that $R_1 \not\longrightarrow R_2$. Strictly speaking, hence, a resource monotone usually gives only partial information about the pre-order of resources, and one generically may need a collection of resource monotones (called a \textit{complete set} of them) to have the complete information to fully specify the pre-order of resources based on them. Nevertheless, resource monotones are friendly ways to explore the pre-order of resources, and the real numbers in their co-domain are interpreted as a (possibly incomplete) \textit{quantification} of the nonfreeness of the resources in its domain.  

In the particular case of our resource theory, an EPR monotone is a function from the space of assemblages into real numbers, whose value does not increase under LOSR operations. 
Among the existing measures of bipartite assemblages, the \textit{`steerable' weight} \cite{weight}, \textit{robustness of 'steering'} \cite{robustness}, and \textit{relative entropy} \cite{gallego2015resource} were shown to be EPR monotones in a resource theory of `steering' with stochastic local operations assisted by one-way classical communication (S-1W-LOCC) being the set of free operations \cite{gallego2015resource}. The set of S-1W-LOCC operations allows, in particular, for one party to generate randomness and share it with the second party by classical communication; hence, the set of S-1W-LOCC operations strictly includes the set of LOSR operations. It follows that all EPR monotones defined for S-1W-LOCC are therefore also monotones for LOSR. It has moreover been shown that the first two of these monotones -- `steerable' weight and robustness of `steering' -- can be reformulated in a type-independent form, i.e., they can be used to compare the LOSR-nonclassicality of resources
of arbitrary types (where an EPR assemblage is one specific type of a resource) \cite{rosset2020type}. Let us now recall their formal definitions for the case of study in this section.

\begin{defn}[`Steerable' weight \cite{weight}]
The `steerable' weight of an assemblage $\As_{\A|\X}$ is given by the minimum $\mu$ such that $\As_{\A|\X}$ can be decomposed as
\begin{align}
    \As_{\A|\X} = \mu \, \As_{\A|\X}^{S} + (1-\mu) \, \As_{\A|\X}^{\mathrm{free}},
\end{align}
where $\As_{\A|\X}^{S}$ is an arbitrary nonfree assemblage and $\As_{\A|\X}^{\mathrm{free}}$ is an arbitrary free assemblage. 
\label{def:monotone_weight}
\end{defn}

We provide the following intuition for this definition. Imagine Alice wants to prepare a given assemblage $\As_{\A|\X}$ using the minimal amount of a nonfree resource. To achieve her goal, she prepares a free assemblage $\As_{\A|\X}^{\mathrm{free}}$ most of the time (that is, a $1-\mu$ fraction of the rounds), and sometimes (a $\mu$ fraction of the rounds), she prepares a nonfree assemblage $\As_{\A|\X}^{S}$. On average, hence she prepares $\As_{\A|\X}$. The `steerable' weight quantifies the minimal amount of $\As_{\A|\X}^{S}$ needed to generate the desired assemblage in this way. 

\begin{defn}[Robustness of `steering' \cite{robustness}]
The `steering' robustness of an assemblage $\As_{\A|\X}$ is given by the minimum $\nu$ such that the assemblage
\begin{align}
    \As_{\A|\X}^{\mathrm{free}} = \frac{1}{1+\nu} \, \As_{\A|\X} + \frac{\nu}{1+\nu} \, \As_{\A|\X}^{\prime}
\end{align}
is free, with $\As_{\A|\X}^{\prime}$ being an arbitrary assemblage.
\label{def:monotone_robustness}
\end{defn}

Robustness of `steering' can be understood as follows. Imagine Alice holds an assemblage $\As_{\A|\X}$ and she wants to make it free by mixing it with some other assemblage $\As_{\A|\X}^{\prime}$. Robustness of `steering' quantifies the minimal amount of mixing needed to make $\As_{\A|\X}$ a free assemblage. If one optimises $\As_{\A|\X}^{\prime}$ over all assemblages, this leads to mixing $\As_{\A|\X}$ with worst-case noise. If one restricts the type of noise that can be added to the original assemblage, this measure is referred to as generalized robustness or random robustness, depending on the type of noise added. 

We conjecture that relative entropy could be defined in a type-independent way as well, as similar measures exist for states and Bell scenarios \cite{gallego2017nonlocality}.

\subsubsection{New EPR monotones}\label{sec:new-monotones}
We now develop a method for obtaining EPR monotones from Bell inequalities and use it to construct a family of monotones that certify the incomparability of elements of $\FA_1$. We only focus on $\FA_1$ in this section, hence we drop the index $p=1$ and denote $\As^{\theta,1}=\As^{\theta}$. 

To prove that the elements of $\FA_1$ are unordered as per the LOSR resource theory of assemblages, it suffices to find a set of EPR LOSR monotones $\FM = \{\M_j\}$ such that, for every pair $(\theta_1,\theta_2)$ there exists a pair $(\M_{\theta_1},\M_{\theta_2})$ with the following properties: 
\begin{align}\label{eq:monotones_condition}
\begin{cases}
\M_{\theta_1}(\As^{\theta_1}) > \M_{\theta_1}(\As^{\theta_2}) \quad \text{-- which implies } \As^{\theta_2} \not\xrightarrow{\mathrm{LOSR}} \As^{\theta_1} \,,\\
\M_{\theta_2}(\As^{\theta_1}) < \M_{\theta_2}(\As^{\theta_2}) \quad \text{-- which implies } \As^{\theta_1} \not\xrightarrow{\mathrm{LOSR}} \As^{\theta_2} \,.
\end{cases}
\end{align} 
To construct such a family of monotones, we will make use of the EPR inequalities~\cite{cavalcanti2009experimental} constructed from the Bell inequalities presented in Ref.~\cite[Eq.~(1)]{bamps2015sum}. These Bell inequalities, which we recall in the Appendix \ref{app:bipartite-Bell}, are uniquely maximized by the states $\ket{\theta}$ (via a self-testing result~\cite{bamps2015sum}). In the Appendix \ref{app:steeringin-bipartite}, we use these inequalities to construct an EPR functional, which we denote $S_\eta[\As]$. This functional defines a family of EPR inequalities that are uniquely maximized by assemblages living in $\FA_1$. For a formalization of this statement and definition of $S_\eta[\As]$, see Appendix \ref{app:steeringin-bipartite} and Definition \ref{def:SFT} therein.

We construct the monotone $\M_\eta$ from the EPR functional $S_\eta[\As]$ following a yield-based construction:

\begin{defn}\label{def:Meta} The resource monotone $\M_\eta$, for $\eta \in (0,\sfrac{\pi}{4}]$, is defined as
\begin{align}
\M_\eta[\As] := \max_{\widetilde{\As}} \{ S_\eta[\widetilde{\As}]: \As \xrightarrow{\mathrm{LOSR}} \widetilde{\As} \}.
\end{align}
\end{defn}
We will now show that each of the monotones $\M_\eta[\As]$, when evaluated on assemblages in  $\FA_1$, is uniquely maximized by $\As^{\eta}$. First, note that
\begin{align}\label{eq:property1}
\M_\eta[\As] \leq S_\eta[\As^{\eta}] \quad \forall \, \eta \in \left(0,\sfrac{\pi}{4}\right]\,,
\end{align}
since all assemblages in this traditional bipartite scenario admit a quantum realisation. Hence, $S_\eta[\widetilde{\As}]$ is upperbounded by its maximum quantum violation, which is given by $S_\eta[\As^{\eta}]$. Moreover,
\begin{align} \label{eq:property2}
\M_\eta[\As^{\eta}] = S_\eta[\As^{\eta}] \,.
\end{align}
We now show that that equality in Eq.~\eqref{eq:property2} only holds when $\M_\eta$ and $S_\eta$ are evaluated on the same assemblage. 

\begin{thm}\label{thm:4}
Let $\M_\eta$ be an EPR monotone from Definition~\ref{def:Meta} and $S_\eta$ be an EPR functional given in Definition~\ref{def:SFT}. Then, if $\theta \neq \eta$, $\M_\eta[\As^{\theta}] < S_\eta[\As^{\eta}]$.
\end{thm}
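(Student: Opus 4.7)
The plan is to argue by contradiction using self-testing rigidity. By Eq.~\eqref{eq:property1}, $\M_\eta[\As^\theta]\le S_\eta[\As^\eta]$, so the task is to rule out equality for $\theta\neq\eta$. The central observation is that, by linearity of $S_\eta$ in the assemblage and the convex decomposition of a generic LOSR transformation into deterministic ones (Eqs.~\eqref{eq:det-bipartite}--\eqref{eq:det-bipartite2}), the supremum in Definition~\ref{def:Meta} is attained on an extremal LOSR strategy. Such a strategy consists of a fixed deterministic classical processing for Alice together with a fixed CPTP channel $\cE^*$ for Bob, so the resulting assemblage $\As^*$ admits a quantum realization with underlying state $(\id_A\otimes\cE^*)(\proj{\theta}_{AB})$ in which Alice's Hilbert space is the original two-dimensional qubit. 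Attainment of the supremum follows from a compactness argument once one imposes Carath\'eodory bounds on the shared-randomness cardinality and Stinespring dilation bounds (driven by the output dimension) on Bob's channel.

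Suppose, for contradiction, that $S_\eta[\As^*]=S_\eta[\As^\eta]$. The Bamps--Pironio self-testing theorem~\cite{bamps2015sum} for the tilted CHSH inequality then yields local isometries $V_A:\cH_A\to\cH_{A'}\otimes\cH_{A''}$ and $V_B$ on Bob's side such that $V_A\otimes V_B\bigl[(\id_A\otimes\cE^*)(\proj{\theta}_{AB})\bigr]V_A^\dagger\otimes V_B^\dagger=\proj{\eta}_{A'B'}\otimes \rho^{\mathrm{junk}}_{A''B''}$, with Alice's effective measurements mapped to the canonical Pauli-like ones tensored with identity on the junk registers. Taking the partial trace over Bob and exploiting that $\cE^*$ is trace-preserving, Alice's reduced state obeys $V_A\,\rho_A\,V_A^\dagger=\rho^{\mathrm{can}}_{A'}\otimes\rho^{\mathrm{junk}}_{A''}$, where $\rho_A=\mathrm{tr}_B\proj{\theta}$ has eigenvalues $(\cos^2\theta,\sin^2\theta)$ and $\rho^{\mathrm{can}}_{A'}=\cos^2\eta\,\proj{0}+\sin^2\eta\,\proj{1}$.

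Since $V_A$ is an isometry on a two-dimensional space, the left-hand side has rank exactly two; because $\rho^{\mathrm{can}}_{A'}$ already has rank two, $\rho^{\mathrm{junk}}_{A''}$ must be pure, and the nonzero spectrum on the right-hand side is exactly $(\cos^2\eta,\sin^2\eta)$. Matching with $(\cos^2\theta,\sin^2\theta)$ and using $\theta,\eta\in(0,\sfrac{\pi}{4}]$ (on which $\cos^2$ is strictly decreasing) forces $\theta=\eta$, contradicting the hypothesis. The main delicate step is the careful application of the mixed-state version of Bamps--Pironio self-testing to the extremal realization; the rank-two constraint on Alice's side---which would fail if one were to absorb the shared randomness into a quantum register on Alice's side---is essential, and it is precisely what the reduction to extremal strategies secures.
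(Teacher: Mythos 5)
Your proof is correct, and while it rests on the same engine as the paper's --- the self-testing rigidity of the tilted CHSH inequality of Ref.~\cite{bamps2015sum} --- it closes the argument by a genuinely different route. The paper splits into two cases according to whether the maximum in Definition~\ref{def:Meta} is attained by $\As^{\theta}$ itself or by a nontrivial LOSR processing $\As^*$; in the second case it takes an \emph{arbitrary} quantum realisation $(\rho,\{M_{a|x}\})$ of $\As^*$, invokes Remark~\ref{rem:thelast} to conclude $\rho\cong\ket{\eta}$, and then derives a contradiction with the LOSR-incomparability of distinct two-qubit pure entangled states proved in Ref.~\cite{schmid2020standard}. You instead first reduce to an extremal LOSR strategy (a deterministic comb for Alice plus a single CPTP map $\cE^*$ for Bob), which pins down a \emph{specific} realisation $(\id_A\otimes\cE^*)(\proj{\theta})$ of $\As^*$ on which Alice's system is still the original qubit, and then extract the contradiction from a spectral/rank comparison of Alice's marginal, using that $\cE^*$ acts only on Bob and is trace-preserving. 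This buys you two things: you do not need to import the LOSR pre-order result on pure entangled states as a black box, and you make explicit a step the paper glosses over, namely why the state realising $\As^*$ may be taken to be an LOSR-processing of $\ket{\theta}$ in the first place (the paper's phrase ``this means that $\ket{\theta}$ is more LOSR-entangled than $\ket{\eta}$'' tacitly assumes exactly the reduction you spell out). The price is the extra bookkeeping you correctly flag: attainment of the maximum at an extremal strategy (linearity of $S_\eta$ plus convexity and compactness of the LOSR set), and the mixed-state form of the self-testing statement, which should be justified by purifying $(\id_A\otimes\cE^*)(\proj{\theta})$ into Bob's side before applying the pure-state theorem --- after tracing out Bob this still yields $V_A\,\rho_A\,V_A^\dagger=\rho^{\mathrm{can}}_{A'}\otimes\rho^{\mathrm{junk}}_{A''}$, so your rank-two argument survives intact. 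Note also that your argument subsumes the paper's first case, since the identity operation is itself an extremal strategy.
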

\begin{proof}
Let us prove this by contradiction. Our starting assumption is that there exists a pair $(\theta\,,\, \eta)$ with $\theta \neq \eta$, such that $\M_\eta[\As^{\theta}] = S_\eta[\As^{\eta}]$. Then, one of the two should happen: 

\bigskip

\noindent \underline{First case:}  $\M_\eta[\As^{\theta}] = S_\eta[\As^{\theta}]$. \\
In this case, the solution to the maximisation problem in the computation of $\M_\eta$ is achieved  by $\As^{\theta}$ itself. \\
Our starting assumption then tells us that $S_\eta[\As^{\theta}] = S_\eta[\As^{\eta}]=S_\eta^{\max}$.\\
From Remark \ref{rem:iff} (see Appendix~\ref{app:steeringin-bipartite}) it follows that necessarily $\theta=\eta$, which contradicts our initial condition. 

\bigskip

\noindent \underline{Second case:}  $\M_\eta[\As^{\theta}] = S_\eta[\widetilde{\As}]$, with $\As^{\theta} \xrightarrow{\mathrm{LOSR}} \widetilde{\As}$ . \\
In this case, the solution to the maximisation problem in the computation of $\M_\eta$ is achieved by an LOSR processing of $\As^{\theta}$. \\
Our starting assumption then tells us that $S_\eta[\widetilde{\As}] = S_\eta[\As^{\eta}]=S_\eta^{\max}$.\\
Let $\rho$ and $\{\{M_{a|x}\}_{a\in A}\}_{x\in X}$ be any quantum state and measurements that realise the quantum assemblage $\widetilde{\As}$. From Remark \ref{rem:thelast} (see Appendix~\ref{app:steeringin-bipartite}), we know that $\rho$ is equivalent to $\ket{\eta}$ up to local isometries. But since local isometries are free LOSR operations, this means that $\ket{\theta}$ is more LOSR-entangled\footnote{We say that a state $\ket{\theta}$ is more LOSR-entangled than a different state $\ket{\eta}$ if $\ket{\theta}$ can be converted freely to $\ket{\eta}$ with LOSR operations.} than $\ket{\eta}$, which contradicts the result of Ref.~\cite{schmid2020standard} that all two-qubit pure entangled states are LOSR-inequivalent. 

\end{proof}

We can now prove that $\FA_1$ is composed of pairwise unordered resources by identifying a pair of monotones that satisfies Eqs.~\eqref{eq:monotones_condition}. As we see next, this is achieved by choosing $\M_j = \M_{\theta_j}$.

\begin{thm}\label{thm:M-bipartite}
For every pair $\theta_1 \neq \theta_2$, the monotones $M_{\theta_1}$ and $M_{\theta_2}$ given by Definition \ref{def:Meta} satisfy 
\begin{align} \label{eq:6}
\M_{\theta_1}(\As^{\theta_1}) &> \M_{\theta_1}(\As^{\theta_2}) \,, \\
\M_{\theta_2}(\As^{\theta_1}) &< \M_{\theta_2}(\As^{\theta_2}) \,. \label{eq:7}
\end{align}
\end{thm}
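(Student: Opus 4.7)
The plan is to show that Theorem~\ref{thm:M-bipartite} follows as a direct corollary of the two key properties already established: the tightness equation $\M_\eta[\As^{\eta}] = S_\eta[\As^{\eta}]$ from Eq.~\eqref{eq:property2}, and the strict separation statement $\M_\eta[\As^{\theta}] < S_\eta[\As^{\eta}]$ for $\theta \neq \eta$ from Theorem~\ref{thm:4}. In other words, each monotone $\M_\eta$ attains its maximal value $S_\eta[\As^\eta]$ on the family $\FA_1$ at, and only at, the assemblage $\As^\eta$. The pairwise incomparability result we want is just the statement that these maximizers are unique.

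Concretely, to prove Eq.~\eqref{eq:6}, I would first invoke Eq.~\eqref{eq:property2} with $\eta=\theta_1$ to write
\begin{equation*}
\M_{\theta_1}[\As^{\theta_1}] \;=\; S_{\theta_1}[\As^{\theta_1}] \,.
\end{equation*}
Then, since $\theta_1 \neq \theta_2$, Theorem~\ref{thm:4} (applied with $\eta = \theta_1$ and $\theta = \theta_2$) yields
\begin{equation*}
\M_{\theta_1}[\As^{\theta_2}] \;<\; S_{\theta_1}[\As^{\theta_1}] \,.
\end{equation*}
Combining these two lines immediately gives $\M_{\theta_1}[\As^{\theta_1}] > \M_{\theta_1}[\As^{\theta_2}]$, which is Eq.~\eqref{eq:6}. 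The proof of Eq.~\eqref{eq:7} is symmetric: swap the roles of $\theta_1$ and $\theta_2$, use Eq.~\eqref{eq:property2} at $\eta = \theta_2$ to get $\M_{\theta_2}[\As^{\theta_2}] = S_{\theta_2}[\As^{\theta_2}]$, and use Theorem~\ref{thm:4} at $\eta = \theta_2$, $\theta = \theta_1$ to get $\M_{\theta_2}[\As^{\theta_1}] < S_{\theta_2}[\As^{\theta_2}]$.

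There is essentially no obstacle here, because all the genuine content has already been absorbed into Theorem~\ref{thm:4}. That earlier theorem is where the real work happens: ruling out the scenario in which an LOSR-processing of $\As^{\theta}$ could reproduce the maximum value $S_\eta^{\max}$ without $\theta = \eta$, which in turn rests on the no-equivalence result for two-qubit pure entangled states under LOSR from Ref.~\cite{schmid2020standard}. The present theorem merely packages Theorem~\ref{thm:4} into the form needed to witness the pairwise incomparability of the elements of $\FA_1$ via the family $\FM = \{\M_\eta\}_{\eta \in (0,\pi/4]}$, verifying precisely the hypotheses of Eq.~\eqref{eq:monotones_condition}.
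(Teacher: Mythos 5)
Your proof is correct and is essentially identical to the paper's: both combine the tightness property $\M_{\theta_1}[\As^{\theta_1}] = S_{\theta_1}[\As^{\theta_1}]$ with the strict bound $\M_{\theta_1}[\As^{\theta_2}] < S_{\theta_1}[\As^{\theta_1}]$ from Theorem~\ref{thm:4}, and then argue symmetrically for Eq.~\eqref{eq:7}. No gaps.
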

\begin{proof}
Let us first prove Eq.~\eqref{eq:6}. On the one hand, $\M_{\theta_1}(\As^{\theta_1}) = S_{\theta_1}[\As^{\theta_1}]$. On the other hand, since $\theta_1 \neq \theta_2$, Theorem \ref{thm:4} implies that $\M_{\theta_1}(\As^{\theta_2}) < S_{\theta_1}[\As^{\theta_1}]$. Therefore,  Eq.~\eqref{eq:6} follows. \\
The proof of Eq.~\eqref{eq:7} follows similarly. 
\end{proof}

\begin{cor}\label{cor:bip}
The infinite family of EPR monotones $\FM = \{\M_\eta \, \vert \, \eta \in (0,\sfrac{\pi}{4}] \}$ certifies that the infinite family of assemblages $\FA_1$ is composed of pairwise unordered resources. 
\end{cor}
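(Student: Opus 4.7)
The plan is to deduce the corollary directly from Theorem \ref{thm:M-bipartite} together with the monotone non-increasing property that defines any LOSR resource monotone. Pairwise unorderedness of $\FA_1$ means that for every distinct pair $\theta_1, \theta_2 \in (0, \sfrac{\pi}{4}]$, we have both $\As^{\theta_1} \not\xrightarrow{\mathrm{LOSR}} \As^{\theta_2}$ and $\As^{\theta_2} \not\xrightarrow{\mathrm{LOSR}} \As^{\theta_1}$. So the goal is just to fix such a pair and exhibit the two witnesses from $\FM$.

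Concretely, I would fix $\theta_1 \neq \theta_2$ in $(0, \sfrac{\pi}{4}]$ and pick the two monotones $\M_{\theta_1}, \M_{\theta_2} \in \FM$ already considered in Theorem \ref{thm:M-bipartite}. By that theorem, $\M_{\theta_1}(\As^{\theta_1}) > \M_{\theta_1}(\As^{\theta_2})$. Since $\M_{\theta_1}$ is a resource monotone (it is the yield of the EPR functional $S_{\theta_1}$ over LOSR-reachable assemblages, hence non-increasing under LOSR by construction), the existence of a free conversion $\As^{\theta_2} \xrightarrow{\mathrm{LOSR}} \As^{\theta_1}$ would force $\M_{\theta_1}(\As^{\theta_2}) \geq \M_{\theta_1}(\As^{\theta_1})$, contradicting the strict inequality. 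Hence $\As^{\theta_2} \not\xrightarrow{\mathrm{LOSR}} \As^{\theta_1}$. The symmetric argument using the second inequality in Theorem \ref{thm:M-bipartite}, namely $\M_{\theta_2}(\As^{\theta_1}) < \M_{\theta_2}(\As^{\theta_2})$, rules out the other direction, yielding $\As^{\theta_1} \not\xrightarrow{\mathrm{LOSR}} \As^{\theta_2}$.

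Since $\theta_1, \theta_2$ were arbitrary, every pair of distinct elements in $\FA_1$ is incomparable, so $\FA_1$ is pairwise unordered as claimed, and the witnessing family is precisely $\FM$. There is no real obstacle here: all the technical work — the construction of $\M_\eta$, its monotonicity under LOSR, the strict separation in Theorem \ref{thm:4}, and its upgrade to the two-sided separation in Theorem \ref{thm:M-bipartite} — has already been done. The only subtlety worth explicitly flagging in the write-up is that $\M_\eta$ is indeed LOSR-monotone, which follows immediately from its yield-based definition in Definition \ref{def:Meta}: if $\As \xrightarrow{\mathrm{LOSR}} \As'$, then any $\As^*$ reachable from $\As'$ is also reachable from $\As$ by composing LOSR operations, so the feasible set of the maximisation defining $\M_\eta[\As']$ is contained in that defining $\M_\eta[\As]$.
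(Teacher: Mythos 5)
Your proposal is correct and follows the same route as the paper: Corollary~\ref{cor:bip} is obtained directly from Theorem~\ref{thm:M-bipartite} via the witnessing criterion of Eq.~\eqref{eq:monotones_condition}, exactly as you argue. Your explicit check that $\M_\eta$ is LOSR-nonincreasing (by composability of LOSR operations shrinking the feasible set in Definition~\ref{def:Meta}) is a detail the paper leaves implicit, but it is the same argument.
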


We defined new measures of EPR assemblages and showed that they certify that the elements of $\FA_1$ are unordered in the resource theory of assemblages under LOSR operations. As we showed using the SDP \ref{SDP-bipartite}, this property of the pre-order appears to be unique to assemblages $\As^{\theta,p}$ with $p=1$.

\section{The multipartite EPR scenario}\label{sec:multi}

We now develop a resource theory for multipartite EPR scenarios under LOSR operations. This is the first time multipartite EPR scenarios have been studied in a resource-theoretic framework. 
Similarly to the bipartite case, we show that resource conversion can be decided with a single instance of an SDP in this paradigm. Moreover, we define new measures of multipartite nonclassicality and use them to analytically study the properties of the pre-order. Our results show that multipartite scenarios are easier and more natural to understand from an LOSR perspective rather than an LOCC one; 
we elaborate on this in Section~\ref{sec:relwork-comparison}.

In this paper, we focus on one particular type of multipartite EPR scenario with multiple Alices and one Bob. However, the number of quantum parties (Bobs) can also be increased. Indeed, one possible multipartite setup consists of a single Alice and multiple Bobs. In such a scenario, results from bipartite scenarios may sometimes directly generalise. However, this generalisation might not be straight-forward and a detailed analysis of this multipartite scenario is beyond the scope of this work. For a description of a scenario with multiple Bobs, we refer the reader to Refs.~\cite{cavalcanti2011unified,cavalcanti2015detection}.
Focusing on scenarios with more-than-one Alice is however necessary when one wants to allow for post-quantum assemblages, which is crucial when exploring ways to single out quantum phenomena from basic principles -- a.k.a.~studying quantum `from the outside'.

\subsection{Definition of the scenario  and free assemblages}

In the multipartite EPR scenario of interest to us, $k+1$ separated parties share a physical system. In analogy to the bipartite scenario, we consider $k$ parties called Alices that hold measurement devices. Each Alice, labeled by $A_{i \in \{1...k\}}$, decides on a classical input $x_i$ from the set $x_i\in \{1,...,m_A\} =: \X$ and generates a classical outcome $a_i$ from the set $a_i \in \{0,...,o_A-1\} =: \A$ with probability $p^{i}(a_i|x_i)$. Without loss of generality,  we will take the sets $\X$ and $\A$ to be the same for all the Alices. When Alices perform the measurements on their subsystems, they update their knowledge about Bob's subsystem which is now described by a conditional marginal state $\rho_{a_1...a_k|x_1...x_k}$, which depends on all Alices inputs and outputs. In this scenario, the elements of the assemblage $\As_{\A_1\ldots\A_k|\X_1\ldots\X_k} = \{\{\sigma_{a_1...a_k|x_1...x_k}\}_{a_1...a_k}\}_{ x_1...x_k}$ are given by $\sigma_{a_1...a_k|x_1...x_k} := p(a_1...a_k|x_1...x_k) \rho_{a_1...a_k|x_1...x_k}$. The multipartite EPR scenario is illustrated in Fig.~\ref{fig:multipartite-scenario} for $k=2$ (two Alices, one Bob). 

In this paper, we are interested in studying the common-cause processes between multiple parties, hence we focus on the global object given by the assemblage, and not on the individual agents. Nevertheless, we would like to note that it is always possible to recover the individual parties by tracing out the irrelevant subsystems. It would be interesting to study what inferences each Alice can make about the other parties. However, this problem is related to the quantum marginal problem~\cite{klyachko2006quantum} and is beyond the scope of our framework. 

\begin{figure}[h!]
  \begin{center}
  \subcaptionbox{\label{fig:multipartite-scenario}}
{\put(-70,70){\includegraphics[width=0.3\textwidth]{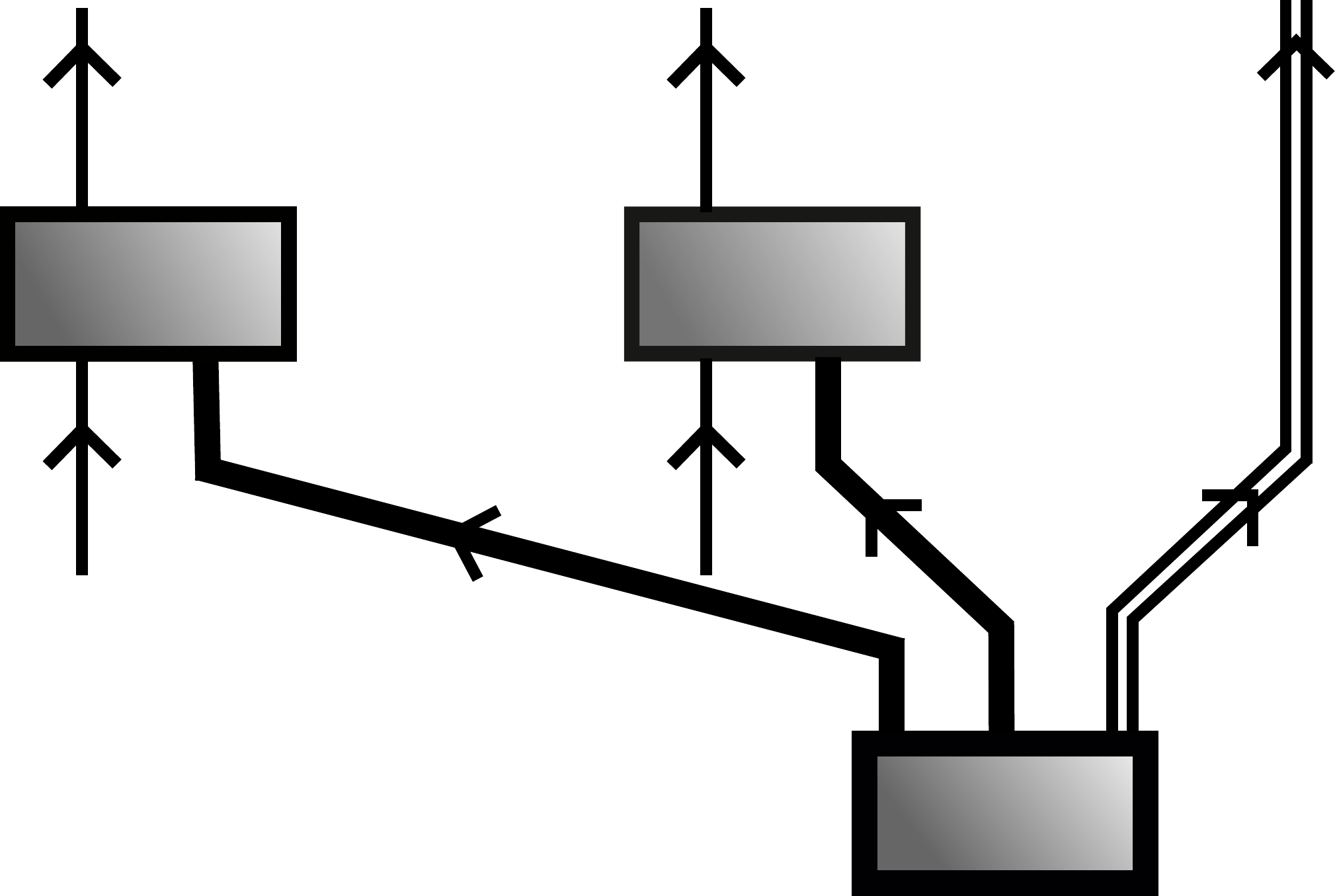}}
\put(-58,105){$x_1$}
\put(-53,152){$a_1$}
\put(5,105){$x_2$}
\put(10,152){$a_2$}
\put(70,152){$\rho_{a_1a_2|x_1x_2}$}
\put(0,0){$(a)$}
}
\hspace{95mm} 
\subcaptionbox{\label{fig:multipartite-scenario-LOSR}}
{\put(-130,0){\includegraphics[width=0.4\textwidth]{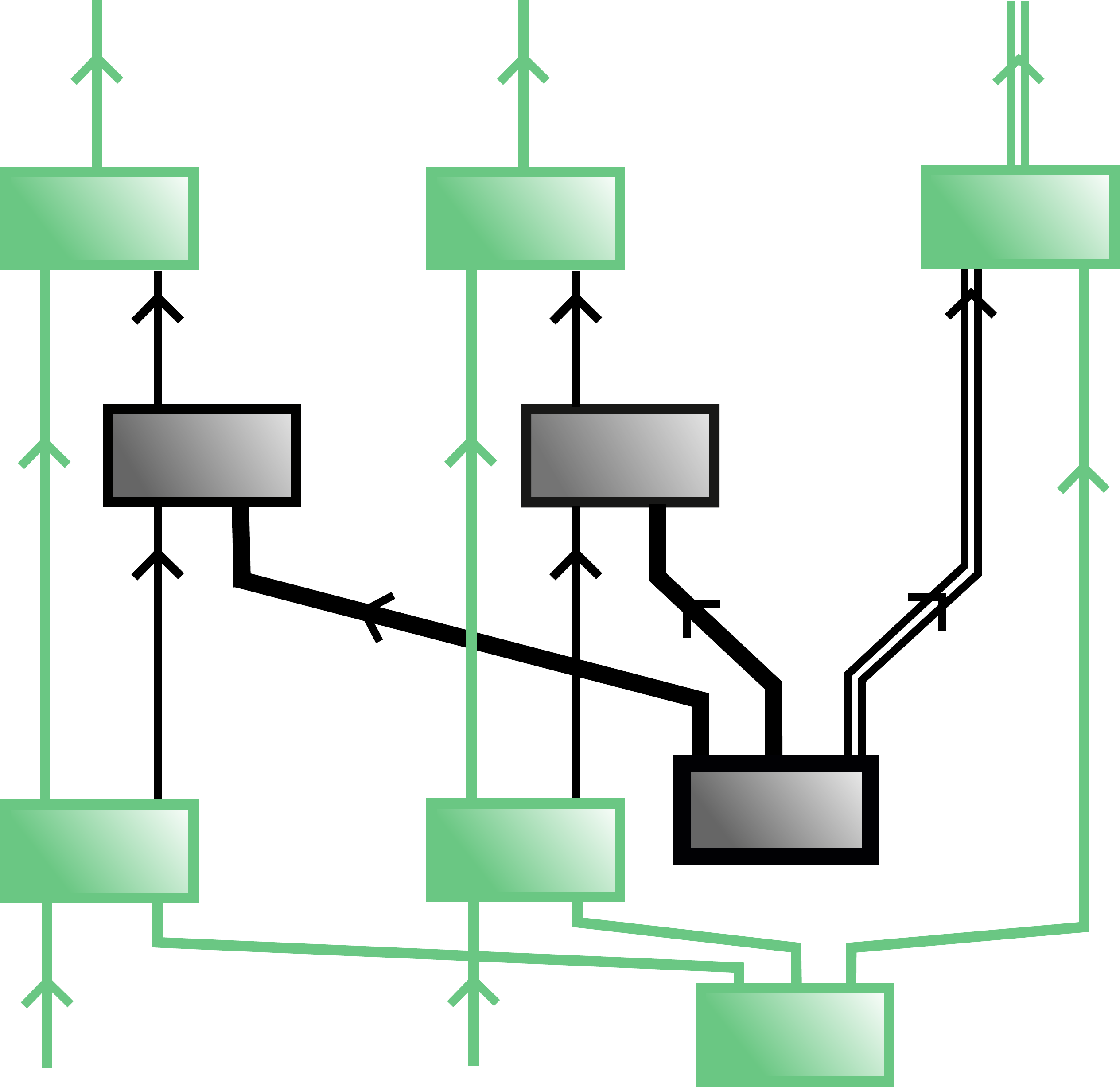}}
\put(-100,71){$x_1$}
\put(-33,71){$x_2$}
\put(-100,118){$a_1$}
\put(-33,118){$a_2$}
\put(-64,90){$c_2$}
\put(-134,90){$c_1$}
\put(-5,5){$\lambda$}
\put(32,118){$\rho_{a_1a_2|x_1x_2}$}
\put(-117,5){$x_1^{\prime}$}
\put(-47,5){$x_2^{\prime}$}
\put(-37,160){$a_2^{\prime}$}
\put(-107,160){$a_1^{\prime}$}
\put(40,160){$\rho_{a_1^{\prime}a_2^{\prime}|x_1^{\prime}x_2^{\prime}}$}
\put(-47,-30){$(b)$}
}

  \end{center}
  \caption{Depiction of a multipartite EPR scenario for $k=2$ (two Alices, one Bob). Quantum systems are depicted by double lines, classical systems by single lines. A system that may be classical or quantum is depicted by a thick line. (a)  Quantum assemblage: Bob receives a quantum system, and the Alices receive systems that may be quantum or classical. Each Alice performs measurements labeled by classical variables $x_1$ and $x_2$, and obtains classical measurement outputs $a_1$ and $a_2$. (b) The most general LOSR operation on a quantum assemblage in a multipartite EPR scenario.}
\end{figure}

In a quantum common-cause multipartite EPR scenario, then, Bob and the Alices share a quantum system, and for each $i \in \{1,\ldots,k\}$, the $i$-th Alice is allowed to perform on her share of the system generalised measurements represented by the set $\{\{M_{a_i|x_i}\}_{a_i}\}_{x_i}$. Then, Bob's unnormalised states can be realised as $\sigma_{a_1...a_k|x_1...x_k}=\text{tr}_{A_1...A_k} \{ (M_{a_1|x_1} \otimes ... \otimes M_{a_k|x_k} \otimes \id) \rho\}$, where $\rho$ is the quantum state of the shared system. These quantumly-realised assemblages are formalised as follows:
\begin{defn}\textbf{Multipartite quantum assemblage.}\\
An assemblage $\As_{\A_1\ldots\A_k|\X_1\ldots\X_k} $ has a quantum realisation iff there exist Hilbert spaces $\cH_{A_i} $ with $i \in \{1,\ldots,k\}$, a state $\rho$ in $\mathcal{H}_{A_1} \otimes \ldots \otimes \cH_{A_k} \otimes \mathcal{H}_B$, and a POVM $\{M_{a_i|x_i}\}_{a_i \in \A}$ on $\mathcal{H}_{A_i}$ for each $x_i \in \X$ and $i \in \{1, \ldots, k\}$, such that 
\begin{align}\label{eq:quantum-multi}
\sigma_{a_1...a_k|x_1...x_k}=\text{tr}_{A_1...A_k} \{ (M_{a_1|x_1} \otimes ... \otimes M_{a_k|x_k} \otimes \id) \rho\}
\end{align}
for all $a_1,\ldots,a_k \in \A$ and $x_1, \ldots, x_k \in \X$.
\label{def:Q-multipartite}
\end{defn}

In analogy to the bipartite case, a multipartite assemblage is LOSR-free if the parties (Alices and Bob) can generate it using local operations (classical and quantum) and classical shared randomness. The elements of an LOSR-free assemblage can be decomposed as $\sigma_{a_1\ldots a_k|x_1\ldots x_k}=\sum_{\lambda} p(\lambda) p^{1}(a_1|x_1,\lambda) \ldots p^{k}(a_k|x_k,\lambda) \, \sigma_{\lambda}$, where $p^{j}(a_j|x_j,\lambda)$ is a conditional probability distribution for the $j$-th Alice for every $\lambda$, and $\sigma_{\lambda}$ are unnormalised quantum states which, similarly to the bipartite case, satisfy $\tr{\sum_\lambda \sigma_\lambda}{} = 1$. 
Notice that this definition of a free multipartite assemblage coincides with the definition of an ``unsteerable multipartite assemblage'' in Refs.~\cite{cavalcanti2011unified,cavalcanti2015detection,cavalcanti2016quantum}, in the sense of ``fully unsteerable''\footnote{This is in analogy to a fully-separable quantum state \cite{LOCCentang}.}. Therefore our LOSR underpinning of the resource-theoretic understanding of assemblages coincides with what people in the literature understand as a ``totally-useless assemblage". Notice, however, that unlike in Ref.~\cite{exposure}, we do not render unsteerable assemblages where the correlations observed among the Alices' measurement outputs are non-classical. In Section~\ref{sec:relwork-multi}, we review different definitions of a multipartite free assemblage and compare them to our approach.  

\subsection{LOSR transformations between multipartite assemblages}

The most general LOSR transformation of a multipartite assemblage consists of a comb for each Alice (which locally pre- and post-processes the variables on each Alice's wing), and a CPTP  map (which post-processes Bob's quantum system), with all these actions being coordinated by a classical random variable $\lambda$. Formally, a generic LOSR operation is illustrated in Fig.~\ref{fig:multipartite-scenario-LOSR}, and transforms an assemblage $\As_{\A_1\ldots \A_k|\X_1\ldots \X_k}$  into a new one as follows:
\begin{align}\label{eq:LOSRtrans-multi} 
\sigma'_{a'_1\ldots a'_k|x'_1\ldots x'_k} = \sum_{\substack{\lambda \\ c_1 \ldots c_k}} \sum_{\substack{a_1\ldots a_k \\ x_1\ldots x_k }} \quad p(c_1,x_1|&x_1^{\prime}, \lambda)  \ldots p(c_k,x_k|x_k^{\prime}, \lambda)  
\quad p(a_1^{\prime}|a_1,c_1) \ldots p(a_k^{\prime}|a_k,c_k) \nonumber \\
& p(\lambda) \quad p(a_1, \ldots, a_k|x_1, \ldots, x_k) \quad
\mathcal{E}_{\lambda}(\rho_{a_1\ldots a_k|x_1\ldots x_k})\,,
\end{align}
where, similarly to the bipartite case,  
\begin{compactitem}
\item $p(c_i,x_i|x_i^{\prime},\lambda) $ encodes the classical pre-processing of the $i$-th Alice's input $x_i$ as a function of $x_i^\prime$ and the shared classical randomness $\lambda$. Here, $c_i$ denotes the variable to be transmitted through the $i$-th Alice's classical side channel towards her post-processing stage. 
\item $p(a_i^{\prime}|a_i,c_i) $ encodes the classical post-processing of the $i$-th Alice's output $a_i$, as a function of the classical information $c_i$ kept from the pre-processing stage. The output of the process is the $i$-th Alice's new outcome $a_i^\prime$. 
\item $ \mathcal{E}_{\lambda}[\cdot]$ is the CPTP map corresponding to Bob's local post-processing of his quantum system, as a function of the shared classical randomness $\lambda$.
\end{compactitem}

Now recall that, similarly to the bipartite case, an LOSR-free assemblage is one that can be created from local operations and shared randomness. Hence, classical assemblages are all and only the ones that can be generated through the free operations of choice, consistent with the unifying assessment of `free of cost' that this resource-theoretic underpinning brings. Moreover, if $\As_{\A_1\ldots \A_k|\X_1\ldots \X_k}$ is free, $\As'_{\A_1^{\prime}\ldots \A_k^{\prime}|\X_1^{\prime}\ldots \X_k^{\prime}}$ is free as well, hence the set of free multipartite assemblages (our free resources) is closed under LOSR operations, as it should be.

A final remark pertains to a particular way to express a generic LOSR transformation, similarly to the discussion in the bipartite scenario. Let us represent each Alice's local comb with a single probability distribution $p(a_{i}^{\prime}, x_i|a_i,x_{i}^{\prime}, \lambda)$. By Fine's argument \cite{FinePRL} and discussion in Ref.~\cite{cowpie}, such local combs can be decomposed as a convex combination of deterministic combs as follows
\begin{align}
    p(a_{i}^{\prime}, x_i|a_i,x_{i}^{\prime}, \lambda) = \sum_{\tilde{\lambda}} p(\tilde{\lambda}|\lambda) D(x_i|x_i^{\prime}, \tilde{\lambda})\,D(a_i^{\prime}|a_i,x_i^{\prime}, \tilde{\lambda}).
\end{align}
Here, each deterministic probability distribution assigns a fixed outcome $a_i^{\prime}$ (resp.~$x_i$) for each possible choice of $a_i$, $x_i^{\prime}$, and $\tilde{\lambda}$ (resp.~$x_i^{\prime}$ and $\tilde{\lambda}$). Let us now use this observation to rewrite Eq.~\eqref{eq:LOSRtrans-multi}; for clarity in the presentation let us focus on the tripartite case (two Alices, one Bob). Define the CPTNI map $\tilde{\mathcal{E}}_{\tilde{\lambda}}$ as:
\begin{align}
\tilde{\mathcal{E}}_{\tilde{\lambda}}(\sigma_{a_1 a_2|x_1 x_2})= \sum_{\lambda} p(\lambda) p(\tilde{\lambda}|\lambda) \mathcal{E}_{\lambda}(p(a_1 a_2|x_1 x_2)\rho_{a_1 a_2|x_1 x_2}).
\end{align}
A generic LOSR operation transforms an assemblage $\As_{\A_1 \A_2|\X_1 \X_2}$  into a new one as follows:
\begin{align}\label{eq:LOSRtrans-multi2} 
\sigma'_{a'_1 a'_2|x'_1 x'_2} = \sum_{\tilde{\lambda}} \sum_{\substack{a_1 a_2 \\ x_1 x_2 }} \, D(x_1|x_1^{\prime}, \tilde{\lambda})\,D(a_1^{\prime}|a_1,x_1^{\prime}, \tilde{\lambda})\, D(x_2|x_2^{\prime}, \tilde{\lambda})\,D(a_2^{\prime}|a_2,x_2^{\prime}, \tilde{\lambda})\,
\tilde{\mathcal{E}}_{\tilde{\lambda}}(\sigma_{a_1 a_2|x_1 x_2}).
\end{align}
Eq.~\eqref{eq:LOSRtrans-multi2} provides the simplified characterisation of a generic multipartite LOSR transformation that we will use throughout.

\subsection{Resource conversion as a semidefinite test}\label{sec:sdp-multi}

For clarity in the presentation, we will still focus on the specific multipartite scenario, illustrated in Fig.~\ref{fig:multipartite-scenario}. Our method, however, extends to scenarios with an arbitrary number of Alices. Given two assemblages generated in this setup, $\As_{\A_1 \A_2|\X_1 \X_2}$ and $\As'_{\A_1^{\prime}\A_2^{\prime}|\X_1^{\prime}\X_2^{\prime}}$, testing whether $\As_{\A_1 \A_2|\X_1 \X_2}$ can be converted into $\As'_{\A_1^{\prime}\A_2^{\prime}|\X_1^{\prime}\X_2^{\prime}}$ under LOSR operations amounts to checking if $\As'_{\A_1^{\prime}\A_2^{\prime}|\X_1^{\prime}\X_2^{\prime}}$ admits a decomposition as per Eq.~\eqref{eq:LOSRtrans-multi2} for the case of two Alices. 
Similarly to the bipartite scenario, the possibility of the conversion can be decided with a single instance of an SDP.

First, notice that the map $\tilde{\mathcal{E}}_{\lambda}(\sigma_{a_1a_2|x_1x_2})$ can be represented in the operator form in terms of its Choi state $W_\lambda$ as follows: 
\begin{align}
\tilde{\mathcal{E}}_{\lambda}(\sigma_{a_1a_2|x_1x_2})= d_B\, \tr{W_{\lambda}\,(\,\id_{B^{\prime}} \otimes \sigma_{a_1a_2|x_1x_2}^{T})}{B}\,,
\end{align}
where $d_B$ is the dimension of Bob's Hilbert space. Therefore, for $\As'_{\A_1^{\prime}\A_2^{\prime}|\X_1^{\prime}\X_2^{\prime}}$ to decompose as in Eq.~\eqref{eq:LOSRtrans-multi2}, each $\sigma'_{a_1^{\prime}a_2^{\prime}|x_1^{\prime}x_2^{\prime}}$ must admit the following decomposition:
\begin{align}\label{eq:sigSDP-multi}
    \sigma'_{a_1^{\prime}a_2^{\prime}|x_1^{\prime}x_2^{\prime}}= \sum_{\substack{\lambda \\ c_1,c_2}}\sum_{\substack{a_1a_2\\x_1x_2}} D(x_1|x_1^{\prime}, {\lambda})\,D(a_1^{\prime}|a_1,x_1^{\prime}, {\lambda})\, D(x_2|x_2^{\prime}, {\lambda})\,&D(a_2^{\prime}|a_2,x_2^{\prime}, {\lambda})\, \nonumber \\
   &d_B\, \tr{W_{\lambda}\,(\,\id_{B^{\prime}} \otimes \sigma_{a_1a_2|x_1x_2}^{T})}{B}\,.
\end{align}
This relation between elements $\sigma'_{a_1^{\prime}a_2^{\prime}|x_1^{\prime}x_2^{\prime}}$ and $\sigma_{a_1a_2|x_1x_2}$ can be tested via an SDP. The SDP that tests whether $\As_{\A_1\A_2|\X_1\X_2}$ can be converted into $\As'_{\A_1^{\prime}\A_2^{\prime}|\X_1^{\prime}\X_2^{\prime}}$ under LOSR operations reads as follows.
\begin{sdp}\label{SDP-multi}\textbf{$\As_{\A_1\A_2|\X_1\X_2}\, \overset{\text{LOSR}}{\longrightarrow} \, \As'_{\A_1^{\prime}\A_2^{\prime}|\X_1^{\prime}\X_2^{\prime}}$.}\\

The assemblage $\As_{\A_1\A_2|\X_1\X_2}$ can be converted into the assemblage $\As'_{\A_1^{\prime}\A_2^{\prime}|\X_1^{\prime}\X_2^{\prime}}$ under LOSR operations if and only if the following SDP is feasible
\begin{align}
\begin{split}
\textrm{given} \;\;\;& \{ \{\sigma_{a_1a_2|x_1x_2}\}_{a_1,a_2}\}_{x_1,x_2}\,,\; \{\{ \sigma'_{a_1^{\prime}a_2^{\prime}|x_1^{\prime}x_2^{\prime}}\}_{a_1^{\prime},a_2^{\prime}}\}_{x_1^{\prime},x_2^{\prime}}\,,\\ 
& \{D(a_1^{\prime}|a_1,x_1^{\prime},\lambda)\}_{\lambda,a_1^{\prime},a_1,x_1^{\prime}}\,,\, D(x_1|x_1^{\prime},\lambda)\}_{\lambda,x_1,x_1^{\prime}} \,, \\
& \{D(a_2^{\prime}|a_2,x_2^{\prime},\lambda)\}_{\lambda,a_2^{\prime},a_2,x_2^{\prime}}\,,\, D(x_2|x_2^{\prime},\lambda)\}_{\lambda,x_2,x_2^{\prime}}  \\
    \textrm{find} \;\;\;& \{(W_{\lambda})_{BB'}\}_{\lambda}  \\
    \textrm{s.t.} \;\;\;& \begin{cases} W_{\lambda} \geq 0\,,\\
      \tr{W_{\lambda}}{B^{\prime}} \propto \frac{1}{d}\,\id_B \;\;\;\; \forall \lambda\,, \\
      \sum_{\lambda} \tr{W_{\lambda}}{B^{\prime}} = \frac{1}{d}\, \id_B\,,\\
          \sigma'_{a_1^{\prime}a_2^{\prime}|x_1^{\prime}x_2^{\prime}}= \sum_{\lambda}\sum_{\substack{a_1a_2\\x_1x_2}} & D(a_1^{\prime}|a_1,x_1^{\prime},\lambda)\, D(x_1|x_1^{\prime},\lambda) \, D(a_2^{\prime}|a_2,x_2^{\prime},\lambda) \\
   &D(x_2|x_2^{\prime},\lambda) \, d_B\, \tr{W_{\lambda}\,(\,\id_{B^{\prime}} \otimes \sigma_{a_1a_2|x_1x_2}^{T})}{B}\,.
     \end{cases}
    \end{split}
\end{align}
\end{sdp}
Similarly to SDP \ref{SDP-bipartite}, SDP \ref{SDP-multi} is a feasibility problem. 

\subsection{Properties of the pre-order}

The properties of the pre-order of multipartite assemblages can be studied numerically, with the SDP \ref{SDP-multi}, or analytically, with EPR LOSR monotones. The SDP \ref{SDP-multi} can be used to test conversions between multipartite assemblages, and a plot similar to that in Fig.~\ref{fig:SDP-bipartite} can be made to illustrate the pre-order of any multipartite family. Hence, we will not repeat this analysis for the multipartite case. In this section, we study the pre-order analytically, but the results can be easily verified with SDP \ref{SDP-multi}. Although our methods apply to general assemblages, here we focus on quantumly-realisable assemblages, and study the properties of the pre-order for a particular family of resources therein.

Consider an EPR scenario with $N=k+1$ parties, where $\A_i=\X_i=\{0,1\}$ for $i\in\{1,...,N-1\}$. Assume all parties share the state $\rho_N^{\theta}$ defined as
\begin{align}\label{eq:Nass}
\rho_N^{\theta}&=\ket{GHZ_N^{\theta}}\bra{GHZ_N^{\theta}} \,, \\
\text{with} \quad \ket{GHZ_N^{\theta}}&=\cos{\theta} \ket{0}^{\otimes N} + \sin{\theta} \ket{1}^{\otimes N} \,, \nonumber 
\end{align}
and the measurements that Alices perform upon an input $x_i \in \X$ are given by
\begin{align}\label{eq:Nmeas}
\widetilde{M}_{a_i|0} &= \frac{\id + (-1)^{a_i} \sigma_z}{2}\,,\quad \widetilde{M}_{a_i|1} = \frac{\id + (-1)^{a_i} \sigma_x}{2}\,. 
\end{align}
Let us define a family of assemblages $\FA^{(N)}$ as: 
\begin{align}\label{eq:thefamN}
\FA^{(N)} &= \left\{ \As^\theta_{\A_1\ldots \A_{N-1}|\X_1\ldots \X_{N-1}} \, \Big\vert \, \theta \in \left(0,\sfrac{\pi}{4}\right] \right\}\,,\\ \nonumber
\text{where} \quad\As^\theta_{\A_1\ldots \A_{N-1}|\X_1\ldots \X_{N-1}} &= \left\{\left\{\sigma^\theta_{a_1...a_{N-1}|x_1...x_{_{N-1}}}\right\}_{a_i\in\A_i}\right\}_{ x_i \in \X_i} \,,\\
\text{with} \quad \sigma^\theta_{a_1...a_{N-1}|x_1...x_{_{N-1}}} &= \tr{\widetilde{M}_{a_1|x_1} \otimes ... \otimes \widetilde{M}_{a_{N-1}|x_{N-1}} \otimes \id \, \rho_N^{\theta}}{A_1...A_{N-1}} \,. \nonumber
\end{align}

For simplicity in the notation, we here denote $\As^\theta_N := \As^\theta_{\A_1\ldots \A_{N-1}|\X_1\ldots \X_{N-1}}$ and $\As_N := \As_{\A_1\ldots \A_{N-1}|\X_1\ldots \X_{N-1}}$.

This family $\FA^{(N)}$, for fixed $N$, has an infinite number of elements. We claim that the elements of $\FA^{(N)} $ are unordered as per the LOSR resource theory of assemblages. To show this, we define a set of EPR monotones $\{S_{\eta}^{(N)}\}_\eta$ using the Bell inequalities derived in Ref.~\cite[Eq.~(13)]{multiGHZ}. This procedure is presented in the Appendix~\ref{app:steeringin-multi}, and the EPR functional is given in Definition \ref{def:NSFT-multi}. For each value of $\eta$, we construct the monotone $\M_{\eta}^{(N)}$ from the EPR functional $S_{\eta}^{(N)}$ following a yield-based construction:

\begin{defn}\label{def:MetaN} The EPR monotone $\M_{\eta}^{(N)}$, for $\eta \in (0,\sfrac{\pi}{4}]$, is defined as
\begin{align}
\M_{\eta}^{(N)}[\As_N] := \max_{\widetilde{\As}_N} \{ S_{\eta}^{(N)}[\widetilde{\As}_N]: \As_N \xrightarrow{\mathrm{LOSR}} \widetilde{\As}_N \}.
\end{align}
\end{defn}
In Appendix~\ref{app:steeringin-multi}, we show that the monotones $\M_{\eta}^{(N)}[\As_N]$ satisfy properties analogous to these given in Eqs.~\eqref{eq:property1} and \eqref{eq:property2}, and Theorems \ref{thm:4} and \ref{thm:M-bipartite}, for the monotones $\M_{\eta}[\As]$. It follows that:
\begin{cor}\label{cor:multi}
The infinite family of EPR monotones $\FM^{(N)} = \{\M_\eta^{(N)} \, \vert \, \eta \in (0,\sfrac{\pi}{4}] \}$ certifies that the infinite family of assemblages $\FA^{(N)}$ is composed of pairwise unordered resources. 
\end{cor}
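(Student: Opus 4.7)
The plan is to establish Corollary~\ref{cor:multi} as a direct consequence of multipartite analogs of Theorems~\ref{thm:4} and~\ref{thm:M-bipartite}, obtained by transcribing the bipartite argument with every ingredient replaced by its multipartite counterpart (as promised in the text just before the corollary). I would organise the argument in three steps.

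First, I would establish the two baseline properties of each $\M_\eta^{(N)}$ that correspond to Eqs.~\eqref{eq:property1} and~\eqref{eq:property2}. The upper bound $\M_\eta^{(N)}[\As_N] \leq S_\eta^{(N)}[\As_N^\eta]$ follows because every assemblage in the enveloping theory is quantumly realisable and the right-hand side is, by construction in Appendix~\ref{app:steeringin-multi} from the Bell inequalities of Ref.~\cite{multiGHZ}, the maximum quantum value of $S_\eta^{(N)}$. The equality $\M_\eta^{(N)}[\As_N^\eta] = S_\eta^{(N)}[\As_N^\eta]$ is witnessed by the trivial (identity) LOSR transformation inside the defining maximisation.

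Second, I would prove the multipartite analog of Theorem~\ref{thm:4}: if $\theta \neq \eta$, then $\M_\eta^{(N)}[\As_N^\theta] < S_\eta^{(N)}[\As_N^\eta]$. Arguing by contradiction, suppose equality holds; then the maximisation defining $\M_\eta^{(N)}$ is attained either at $\As_N^\theta$ itself (Case 1) or at some strictly LOSR-processed assemblage $\As_N^*$ (Case 2). In Case 1, the self-testing statement underlying the multipartite counterpart of Remark~\ref{rem:iff} forces $\theta = \eta$, a contradiction. In Case 2, the multipartite counterpart of Remark~\ref{rem:thelast} implies that any quantum realisation of $\As_N^*$ uses a shared state equivalent to $\ket{GHZ_N^\eta}$ up to local isometries; since isometries are free LOSR operations, this yields a free conversion $\ket{GHZ_N^\theta} \xrightarrow{\mathrm{LOSR}} \ket{GHZ_N^\eta}$, contradicting the LOSR-inequivalence of distinct members of $\{\ket{GHZ_N^\theta}\}_\theta$.

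Third, from the preceding step the multipartite analog of Theorem~\ref{thm:M-bipartite} follows in one line: for any $\theta_1 \neq \theta_2$, one has $\M_{\theta_1}^{(N)}(\As_N^{\theta_1}) = S_{\theta_1}^{(N)}[\As_N^{\theta_1}] > \M_{\theta_1}^{(N)}(\As_N^{\theta_2})$, and symmetrically with $\theta_1$ and $\theta_2$ exchanged. By the defining monotonicity of $\M_\eta^{(N)}$, these strict inequalities witness $\As_N^{\theta_1} \not\xrightarrow{\mathrm{LOSR}} \As_N^{\theta_2}$ and $\As_N^{\theta_2} \not\xrightarrow{\mathrm{LOSR}} \As_N^{\theta_1}$, and ranging over all distinct pairs yields the corollary.

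The main obstacle lies entirely in Case 2 of the second step. The required LOSR-inequivalence of the family $\{\ket{GHZ_N^\theta}\}_\theta$ does \emph{not} follow from the two-qubit result of Ref.~\cite{schmid2020standard} used in the bipartite case, and multipartite self-testing is typically more delicate than its bipartite counterpart. I would expect the Appendix to supply both (i) a self-testing statement for the maximum quantum value of $S_\eta^{(N)}$ that singles out $\ket{GHZ_N^\eta}$ up to local isometries, and (ii) a separate argument (e.g.\ via LOSR-monotonic functions of local marginals such as a one-party entropy or a multipartite entanglement measure) showing that different values of $\theta$ give LOSR-incomparable GHZ-like states. Everything else in the proof is a routine transcription of the bipartite development.
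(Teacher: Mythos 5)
Your proposal follows the paper's proof (Appendix~\ref{app:proof-cor-multi}) essentially step for step: the two baseline properties, the two-case contradiction argument of Theorem~\ref{thm:4-multi}, and the final monotone-pair argument of Theorem~\ref{th:app-multi} are all exactly as in the paper. The one ingredient you flagged as the "main obstacle" is indeed where the new content lies, and the paper supplies it in two pieces: (i) the self-testing statement for the maximal quantum value of $S_\eta^{(N)}$ is imported from Ref.~\cite{multiGHZ} as Remark~\ref{rem:STarg-multi}, and (ii) the pairwise LOSR-inconvertibility of the states $\ket{GHZ_N^\theta}$ is proved as Lemma~\ref{GHZunorder} by applying the Schmidt-coefficient factorization criterion of Ref.~\cite[Corollary 8]{schmid2020standard} (Remark~\ref{rem:bipartitions}): every bipartition of $\ket{GHZ_N^\theta}$ has Schmidt rank $2$ with squared coefficients $(\cos^2\theta,\sin^2\theta)$, and no ancilla state $\ket{\zeta}$ can make the required tensor-factorization of Schmidt vectors hold when $\theta\neq\phi$. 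One caution about your suggested alternative for (ii): a one-party entropy (or any single LOCC/LOSR entanglement monotone of the marginals) would only rule out conversions that \emph{increase} entanglement, i.e.\ $\rho_N^\theta \not\rightarrow \rho_N^\eta$ for $\eta>\theta$; it cannot exclude the degrading direction, which is precisely the direction that \emph{is} allowed under LOCC (by majorization) but forbidden under LOSR. So the full strength of the factorization condition across bipartitions (or some equally strong LOSR-specific criterion) is genuinely needed there, and your proof is only complete once that lemma is in place.
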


This result shows how, when LOSR are considered to be free operations in a resource theory, methods used in the bipartite scenarios can be leveraged to the multipartite case.

\section{Related Work}\label{sec:relwork}

Although the causal approach we take in this paper is not the standard description of the EPR scenario, the objects that we study, i.e., assemblages, have been widely investigated for their information-theoretic and  foundational relevance. In particular, their role in one-side device-independent quantum key distribution (1S-DI-QKD) protocols motivated a formulation of the resource theory of `steering', where the set of free operations is deemed to be stochastic local operations assisted by one-way classical communication (S-1W-LOCC) \cite{gallego2015resource}. This set of operations -- just like LOSR -- is a valid set of free operations in a resource theory of assemblages, as it maps free resources to free resources. However, the conceptual underpinnings of our resource theory of nonclassicality of assemblages and of the resource theory of `steering' under S-1W-LOCC operations are very different. In this section, we discuss some conceptual and technical differences between these two resource theories. Moreover, we show that the choice of free operations in a resource theory has significant consequences for the definition of a free multipartite assemblage.

\subsection{S-1W-LOCC as the set of free operations}

\subsubsection{Formalisation of S-1W-LOCC operations}

To start with, let us recall the formal definition of a S-1W-LOCC operation \cite{gallego2015resource} which is illustrated in Fig.~\ref{fig:1W-LOCC}. Consider a bipartite EPR scenario. First, Bob performs an instrument on his subsystem that produces both a classical system represented by the variable $\omega$, and a quantum system. This action corresponds to a completely-positive, trace-non-increasing map $\mathcal{E}_{\omega}$ on Bob's quantum system. Next, he communicates a classical message that depends on $\omega$ to Alice; without loss of generality, we can take the communicated message to be $\omega$ itself. In the second laboratory, Alice at first generates the classical input variable $x'$. Once she has received $\omega$, she generates the new input variable $x$ which can depend on both $x'$ and $\omega$. Then, she performs the measurement labelled by $x$ on her share of the system; she will then obtain a measurement outcome $a$. Finally, Alice classically processes the variables $x'$, $\omega$ and $a$, to produce the classical output $a'$ of her measurement process. In the S-1W-LOCC operations, the instrument does not necessary need to be complete, i.e., if the probability for each outcome $\omega$ is $P_{\omega}$, then the only guarantee on the sum of these probabilities is $0<\sum_{\omega}P_{\omega}\leq 1$. It means that the total S-1W-LOCC operation does not need to happen with certainty, but with some non-zero probability. It was shown in Ref.~\cite{gallego2015resource} that these bipartite S-1W-LOCC operations map free assemblages to free assemblages in bipartite EPR scenarios; hence, they are formally a valid set of free operations in a resource theory of assemblages therein. 

\begin{figure}[h!]
  \begin{center}
  \subcaptionbox
{\put(-60,0){\includegraphics[width=0.18\textwidth]{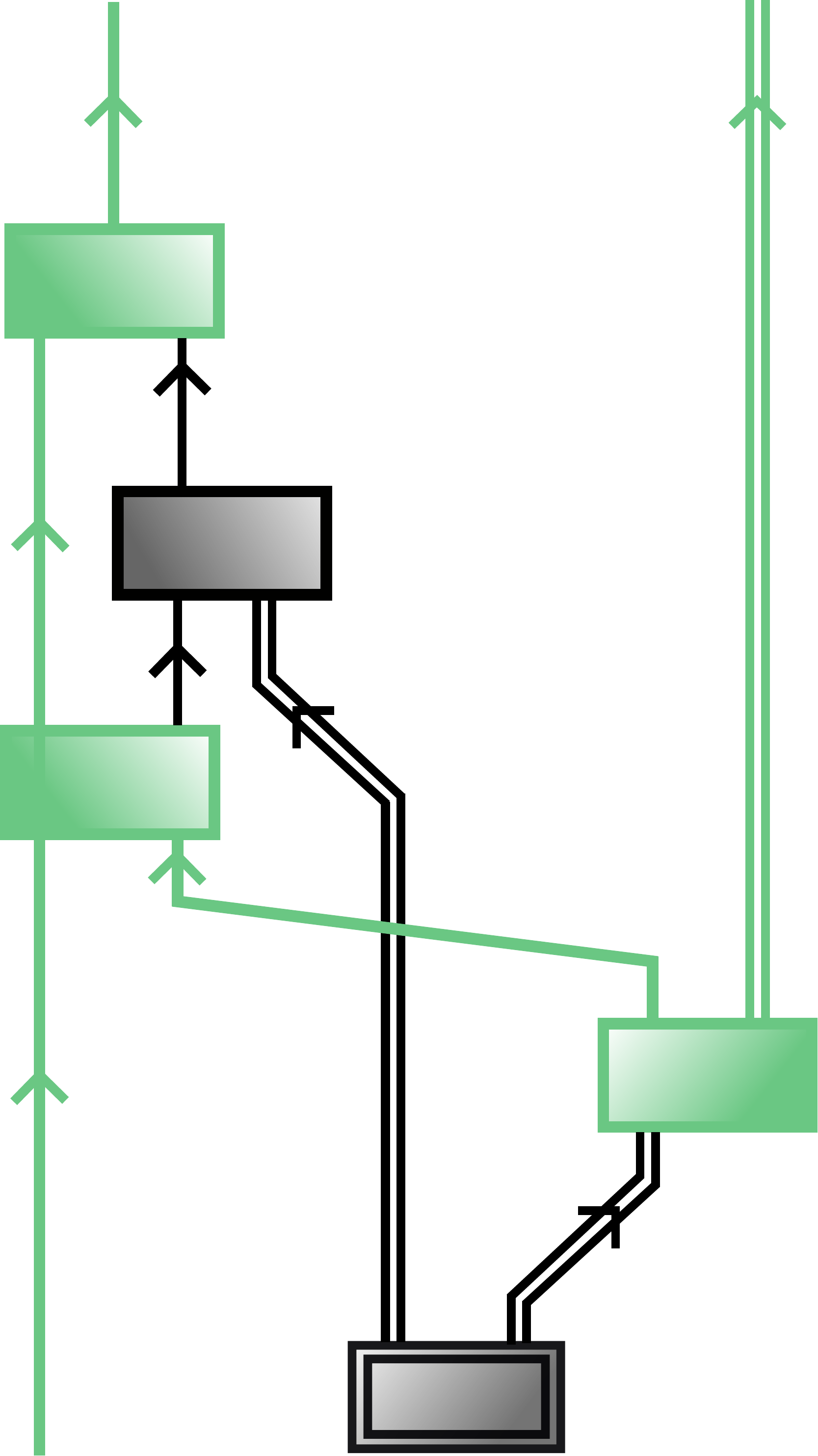}}
\put(-5,54){$\omega$}
\put(-53,77){$x$}
\put(-73,5){$x'$}
\put(-53,100){$a$}
\put(-65,130){$a'$}
\put(25,130){$\mathcal{E}_{\omega} (\rho_{a|x})$}
\put(5,15){$\rho_{a|x}$}
}

  \end{center}
  \caption{The most general 1W-LOCC operation (in green) applied to an EPR assemblage (in black).}
  \label{fig:1W-LOCC}
\end{figure}

 \subsubsection{Comparing S-1W-LOCC and LOSR}\label{sec:relwork-comparison}

The resource theory of `steering' under S-1W-LOCC and our resource theory of nonclassicality of assemblages are both formally valid resource theories that quantify nonclassicality in the bipartite EPR scenario. In this subsection, we will argue that the causal view on assemblages has several advantages over the S-1W-LOCC approach. 

The set of free operations in any resource theory should be motivated by natural physical constraints in the scenario of interest. The set of S-1W-LOCC operations is hence a meaningful choice in scenarios wherein one-way classical communication is possible. Indeed, the choice of S-1W-LOCC as the set of free operations was motivated by the fact that it is the most general set that does not compromise the security of 1S-DI-QKD protocols~\cite{gallego2015resource}. 
 In contrast, the motivation behind our LOSR approach is fundamentally different. Rather than using a methodology which singles out one particular 
 task for which assemblages are known to be useful, we take a principled approach that applies to any scenario with a common-cause causal structure. In particular, we follow a general construction that can be applied to study nonclassicality in arbitrary causal networks, as first introduced in Ref.~\cite{cowpie} (see in particular Appendix~A3 therein). 
Indeed, our resource theory of EPR assemblages is simply a type-specific instance of the type-independent resource theory introduced in Ref.~\cite{schmid2020type}. This is the most significant conceptual advantage of our LOSR approach -- it unifies the study of nonclassicality of assemblages with the study of nonclassicality of arbitrary processes in common-cause (e.g., Bell) scenarios. This unified view leads to a deeper understanding of nonclassical correlations, as well as to new technical tools for studying them~\cite{schmid2020standard,schmid2020type,rosset2020type}. 

It is also worth noting that, at a technical level, LOSR conversions of assemblages are much easier to characterize and study than the S-1W-LOCC conversions. 

The fact that we have taken a principled approach to constructing our resource theory also implies a final advantage: our framework extends naturally and uniquely to more general scenarios, including multipartite EPR scenarios, as well as EPR scenarios where the processes about which Alice is learning are dynamic (e.g., general channels or Bob-with-input processes\footnote{Bob-with-input processes are those in which Bob can locally influence the state preparation
of his system~\cite{sainz2020bipartite}.}). It is not clear how prior approaches generalize to these cases; indeed, in past work there is no consensus even on the simplest question of how to define free assemblages in multipartite EPR scenarios (we discuss this point further in the following sections).

 Finally, for completeness, we note that the LOSR and S-1W-LOCC resource theories do in fact lead to different pre-orders, and hence give different answers to questions about the relative value of assemblages.

\begin{prop}\label{prop:pre-order}
The pre-order of assemblages under LOSR operations is different than the pre-order of assemblages under S-1W-LOCC operations.
\end{prop}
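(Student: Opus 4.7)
The plan is to exhibit a single explicit pair of assemblages that are LOSR-incomparable but S-1W-LOCC-comparable. The starting observation is that LOSR is contained in S-1W-LOCC---any LOSR protocol can be simulated within S-1W-LOCC by letting Bob locally sample the shared random variable $\lambda$ and forward it to Alice over the one-way classical channel, after which both parties act exactly as they would under LOSR---so every LOSR conversion is already an S-1W-LOCC conversion. Hence the two pre-orders can only disagree when S-1W-LOCC admits \emph{extra} conversions, and one explicit example will suffice.

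I would draw the pair from the family $\FA_1$ of Eq.~\eqref{eq:thefam1}: set $\As_1 := \As^{\pi/4,1}$, obtained by applying the Pauli $\sigma_z$ and $\sigma_x$ measurements to the maximally entangled two-qubit state $\ket{\pi/4}$, and $\As_2 := \As^{\pi/6,1}$, obtained by applying the same measurements to the partially entangled state $\ket{\pi/6} = \cos(\pi/6)\ket{00} + \sin(\pi/6)\ket{11}$. Corollary~\ref{cor:bip} then tells me that every two distinct elements of $\FA_1$ are LOSR-unordered, so in particular $\As_1 \not\xrightarrow{\mathrm{LOSR}} \As_2$.

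Next I would construct an explicit S-1W-LOCC protocol realising $\As_1 \to \As_2$. Bob applies to his qubit a single-outcome stochastic filter with Kraus operator $F = \ket{0}\!\bra{0} + \tan(\pi/6)\,\ket{1}\!\bra{1}$; since $F^\dagger F = \ket{0}\!\bra{0} + \tfrac{1}{3}\ket{1}\!\bra{1} \leq \id$, this is a valid S-1W-LOCC instrument element in the sense of Ref.~\cite{gallego2015resource}, with total success probability $\tfrac{2}{3} < 1$. A direct check yields $(\id_A \otimes F)\ket{\pi/4} = \sqrt{2/3}\,\ket{\pi/6}$, and consequently, after Bob's filter followed by Alice's unchanged Pauli measurements, the assemblage elements become $F\,\sigma^{\pi/4}_{a|x}\,F^\dagger = \tfrac{2}{3}\,\sigma^{\pi/6}_{a|x}$, witnessing the claimed stochastic conversion.

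The only remaining subtlety---and the main thing to nail down---is the normalisation convention used to compare assemblages under S-1W-LOCC. If stochastic outputs are identified with their conditional-on-success normalisations, consistent with the explicit allowance $\sum_\omega P_\omega < 1$ in Ref.~\cite{gallego2015resource}, then the protocol above is a bona fide conversion $\As_1 \to \As_2$ and we are done. If instead assemblages are compared up to exact equality, the very same example still suffices: the sub-normalised $\tfrac{2}{3}\As_2$ is reachable from $\As_1$ via S-1W-LOCC, whereas LOSR---composed of a CPTP map on Bob's side and stochastic classical processing on Alice's---preserves the trace of the input and therefore cannot produce any sub-normalised assemblage from the normalised $\As_1$. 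Either way, the two pre-orders are witnessed to differ.
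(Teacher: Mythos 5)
Your proposal is correct and takes essentially the same route as the paper: both establish LOSR-incomparability within $\FA_1$ via Corollary~\ref{cor:bip} and then exhibit a local filter on Bob's side (your $F$ is just the paper's $M_{0}=\cos\theta\ket{0}\!\bra{0}+\sin\theta\ket{1}\!\bra{1}$ rescaled and specialised to $\theta=\pi/6$) realising $\As^{\pi/4,1}\to\As^{\theta,1}$ as a stochastic S-1W-LOCC (indeed stochastic-LOSR) conversion, with the conditional-on-success normalisation handled explicitly. The only thing the paper adds is a two-outcome, trace-preserving 1W-LOCC protocol (with Bob communicating the outcome and Alice flipping her output accordingly) showing the pre-orders differ even for deterministic maps, a strengthening not required for the proposition as stated.
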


The proof of this proposition is given in Appendix~\ref{app:proof-prop-preorder}. The main idea is to find a conversion that is possible with S-1W-LOCC operations and is impossible with LOSR operations. In the proof we focus on the family $\As^{\theta,p}_{\A|\X}$ defined in Eq.~\eqref{eq:thefam}. We show that for S-1W-LOCC operations, the assemblage $\As^{\pi/4,1}_{\A|\X}$ is above all others in the family; whereas in LOSR, all assemblages $\As^{\theta,1}_{\A|\X}$ are incomparable for $\theta \in \left(0,\sfrac{\pi}{4}\right]$. In Appendix~\ref{app:proof-prop-preorder} we provide an S-1W-LOCC map that maps $\As^{\pi/4,1}_{\A|\X}$ to $\As^{\theta,1}_{\A|\X}$ for $\theta \in \left(0,\sfrac{\pi}{4}\right)$. This result is analogous to the difference in the pre-order in the resource theories of LOCC-entanglement and LOSR-entanglement \cite{schmid2020standard}\footnote{There are two different resource theories of entanglement established in the literature. If one allows local operations and classical communication (LOCC) as the set of free operations, it is well known that a maximally entangled state can be converted into any partially entangled pure state of the same Schmidt rank~\cite{LOCCentang}. However, if one takes LOSR to be the set of free operations, such states are incomparable~\cite{schmid2020standard}. Here, we use the terms LOCC-entanglement and LOSR-entanglement to distinguish these two resource theories  \cite{LOCCentang,schmid2020standard}. 
}. 

In proving Proposition~\ref{prop:pre-order},  we note a deficiency in Theorem~5 in Ref.~\cite{gallego2015resource}. The  theorem  claims  that  there  is  no universal assemblage in a particular setting from which we obtain any other assemblage in the same setting with S-1W-LOCC maps.  The proof applies if one assumes that the universal assemblage is not $\As^{\pi/4,1}_{\A|\X}$, thus limiting the applicability of the theorem. Therefore, the proof given in Ref.~\cite{gallego2015resource} is incomplete.

\subsection{Multipartite free assemblages}\label{sec:relwork-multi}

Historically, a bipartite assemblage would be considered nonfree or nonclassical\footnote{In the literature, nonfree assemblages are referred to as `steerable' assemblages.} if Alice and Bob need to share a quantum system prepared in an entangled state in order to produce it. If the assemblage could be generated by performing measurements on a system prepared on a separable state, then the assemblage is free. 

{\em Both} for bipartite {\em and} multipartite scenarios, this is exactly the distinction between free and nonfree in our approach: an assemblage is free if and only if it can be generated by measurements on a fully-separable state. Although the definition of a free multipartite assemblage coincides with our approach in some past literature~\cite{cavalcanti2011unified,cavalcanti2015detection,cavalcanti2016quantum}, alternative approaches to studying multipartite EPR scenarios also exist . We now discuss some of the different sets of assemblages that have been proposed as free sets in the multipartite case, and contrast them with our proposal.

For simplicity, we will focus on the case of a tripartite EPR scenario with two Alices and one {Bob, although the three definitions easily generalise to EPR scenarios with more than two Alices.} In this setup, one proposal is given in Ref.~\cite{uola2020quantum}, where an assemblage is called free if it can be decomposed as
\begin{align}\label{eq:uola_unsteer}
    \sigma_{a_1a_2|x_1x_2}=\sum_\lambda p(\lambda) p(a_1,a_2|x_1,x_2,\lambda) \rho_{\lambda},
\end{align}
where $p(\lambda)$ is a normalised probability distribution on $\lambda$, $\rho_\lambda$ is a normalised quantum state for each $\lambda$, and $p(a_1,a_2|x_1,x_2,\lambda)$ is only required to be a valid conditional probability distribution. Indeed, here $p(a_1,a_2|x_1,x_2,\lambda)$ is allowed to even be a signalling correlation between the Alices, for each $\lambda$, as long as this signalling is averaged out in Eq.~\eqref{eq:uola_unsteer} to give rise to a valid quantum assemblage
$\{\{\sigma_{a_1a_2|x_1x_2}\}_{a_1,a_2\in\A}\}_{ x_1,x_2 \in \X}$.
This condition for $\{p(a_1,a_2|x_1,x_2,\lambda)\}$ is a form of fine-tuning~\cite{Wood_2015}, and is conceptually problematic. 
Furthermore, it is clear that any assemblage that is nontrivially fine-tuned in this manner is not consistent with our hypothesis about the causal structure of an EPR scenario. Note also that, by this definition, there exist assemblages that are free and yet for which the correlations shared by the Alices (when one marginalizes over Bob's system) are highly nonclassical-- e.g., a Popescu-Rohrlich box correlations \cite{popescu1994quantum}. As such, this approach is not suited to studying nonclassicality in general, but could only be suitable for studying nonclassicality {\em across the Alices-vs-Bob} partition. A different proposal for the definition of a free assemblage is given in Ref.~\cite{exposure}, where an assemblage is deemed free if it can be expressed as
\begin{align}\label{eq:TO-LHS}
\begin{split}
    \sigma_{a_1a_2|x_1x_2}&=\sum_{\lambda} p(\lambda) p^{A_1 \rightarrow A_2}(a_1,a_2|x_1,x_2,\lambda) \rho_{\lambda} \\
    &=\sum_{\lambda} p^{\prime}(\lambda) p^{A_2 \rightarrow A_1}(a_1,a_2|x_1,x_2,\lambda) \rho^{\prime}_{\lambda},
    \end{split}
\end{align}
where for each value of $\lambda$, the probability distribution $p^{A_1 \rightarrow A_2}(a_1,a_2|x_1,x_2,\lambda)$ is non-signalling from Alice$_2$ to Alice$_1$, and the probability distribution $p^{A_2 \rightarrow A_1}(a_1,a_2|x_1,x_2,\lambda)$ is non-signalling from Alice$_1$ to Alice$_2$. Such a decomposition is called the time-ordered local hidden state model (TO-LHS) of an assemblage. This set has been motivated by the fact that it is the largest set of free assemblages for which an anomalous phenomenon termed `exposure of steering'~\cite{exposure} does not occur (under particular assumptions about the set of free operations, distinct from those we have advocated for), and we elaborate further on this point in Section~\ref{sec:relwork-exposure}. 
 In particular, we note that in any standard resource theory (i.e., in the sense of Ref.~\cite{coecke2016mathematical}), both the free resources and the free transformations on them should be derivable from a single set of free operations.
What this implies here is that it should be the case that the assemblages which admit of TO-LHS models constitute {\em all and only} those assemblages which can be constructively generated using S-1W-LOCC operations. This is, however, not the case, since the set of TO-LHS assemblages includes some post-quantum tripartite assemblages (e.g., where the Alices share a Popescu-Rohrlich box; see the next subsection), and these cannot be generated by a S-1W-LOCC protocol.

These two definitions of free assemblage contrast with our LOSR resource theory of nonclassicality of assemblages, where an assemblage is LOSR-free if it can be written as
\begin{align}\label{eq:unsteer_multi}
    \sigma_{a_1a_2|x_1x_2}=\sum_\lambda p(\lambda) p(a_1|x_1,\lambda) p(a_2|x_2,\lambda) \sigma_{\lambda}.
\end{align}
Classicality in our resource-theoretic approach is captured by the structure of the network, and imposes that the assemblage is free, i.e., classical, if the parties -- Alices and Bob --  are related solely by a shared classical common cause. This definition of a free assemblage has indeed appeared in the literature before in the context of EPR scenarios, and not from a resource-theoretic perspective: in Refs.~\cite{cavalcanti2011unified,cavalcanti2015detection,cavalcanti2016quantum}, a free assemblage is defined as one that arises when all parties share a quantum system prepared in a fully separable state, and the Alices perform local measurements on their subsystems -- this leads to all and only assemblages of the form of Eq.~\eqref{eq:unsteer_multi}.

\bigskip

Finally, it is worth noting that in this discussion we have focused on multipartite EPR scenarios with multiple Alices and one Bob. However, a possible multipartite EPR scenario comprises also multiple Bobs. Such a multipartite scenario is beyond the scope of this work, however, it would be interesting to see which other approaches to defining free assemblages it gives rise to\footnote{In the case of our LOSR resource theory of nonclassical assemblages the situation is straightforward: all the parties share a classical common cause, and each Bob locally prepares a quantum system on a state that depends on the value of such shared classical randomness.}.

\subsection{Exposure of `steering'}\label{sec:relwork-exposure}
 
\textit{Exposure of `steering'} refers to a process in which a resourceful assemblage in an EPR scenario can be created from a free assemblage in a multipartie EPR scenario, by performing a global operation on the Alices. Notice that such an operation may change  the type of scenario under study; e.g., it may transform a tripartite assemblage into a bipartite one. 

A type-changing transformation of this nature is considered in Ref.~\cite{exposure}, where Taddei~\etal ~focus on a tripartite EPR scenario with two Alices and one Bob. As a case study, they take a free tripartite assemblage to be defined as per Eq.~\eqref{eq:uola_unsteer}, when no extra constraints are imposed on $p(a_1,a_2|x_1,x_2,\lambda)$ other than the implicit fine tuning condition. The global operation on the two Alices is a wiring operation -- the measurement outcome of one Alice is taken as the choice of measurement for the other Alice. Taddei~\etal ~argue that this global operation should not increase the Alices' capability to remotely 
`steer' Bob's system, since no global action is performed across the Bob-vs-Alices divide, i.e., this global operation should be a free operation (at least relative to the Bob-vs-Alices divide). Despite this intuition, they find that a wiring between the Alices in a tripartite free assemblage may give rise to a new bipartite assemblage that is nonfree. 

This then raises important conceptual questions, since a well-defined 
\textit{compositional} 
resource theory should not allow one to create nonfree resources by using free operations on free resources, \textit{even if} the free transformation is type-changing. 
The set of free resources in any resource theory must be closed under free operations~\cite{coecke2016mathematical}. 
The spectre of `steering' exposure emphasizes the importance of choosing one's set of free resources and free operations wisely and consistently.

As Taddei \etal ~observe, then, `steering' exposure implies an inconsistency taking Eq.~\eqref{eq:uola_unsteer} to define the set of free resources (together with taking wirings-and-S-1W-LOCC as free operations).
They then derive the largest set of assemblages that do not allow for `steering' exposure under these free operations, and show this set to be those assemblages that admit a TO-LHS model.
Hence, they argue that the set of free assemblages should be taken to be defined by Eq.~\eqref{eq:TO-LHS}.  

One awkwardness of this proposal is that the set of TO-LHS assemblages includes assemblages that are not quantumly-realizable; i.e., that cannot be produced by Bob and the Alices sharing a multipartite quantum system and performing local measurements on it. An example of this arises when $p^{A_1 \rightarrow A_2}(a_1,a_2|x_1,x_2,\lambda) = p^{A_2 \rightarrow A_1}(a_1,a_2|x_1,x_2,\lambda)$ is equal to `a Popescu-Rohrlich box' \cite{popescu1994quantum} for all $\lambda$. This sort of definition, then, is clearly not suitable for studying generic nonclassicality in multipartite EPR scenarios, but rather could {\em only} be suitable as a way of studying {\em nonclassicality across the Alices-vs-Bob partition}. 

Another problem with Taddei \etal's approach is that it comes at the expense of introducing an arbitrary divide between free resources and free processes. That is, as argued in the previous section, the set of free resources defined by Eq.~\eqref{eq:TO-LHS} does not constitute all and only those that can be constructively built out of S-1W-LOCC operations, as the standard approach to resource theories requires. Indeed, any TO-LHS assemblage for which the marginal correlations shared between the Alices is post-quantum will be a counterexample, since the classical communication from Bob to the Alices is not sufficient to set up such correlations.

Note that Taddei \etal's approach is only {\em one} way to avoid exposure phenomena. A second way to avoid it is to consider a different set of free operations. Indeed, in our approach, neither the free operations {\em nor} the free resources are taken to be the sets that Taddei \etal~considered. Furthermore, our approach does not allow for any exposure phenomena, nor does it face the issues we just outlined which arise in Taddei \etal's TO-LHS approach. Note first that in our approach, wirings are not free LOSR operations, since our approach does not only aim to characterize nonclassicality between the Alices-vs-Bob partition, but rather {\em any} nonclassicality of common cause, shared between {\em any} of the parties. As such, wirings among black-box parties can of course increase the nonclassicality of a given common-cause process.
In our approach, then, the question of exposure is whether nonfree transformations that act globally on the Alices---but do not act on Bob---are capable of transforming free resources to resources which require nonclassical common causes {\em shared across the Alices-vs-Bob divide}. 
But such exposure phenomena are not possible in our approach. Indeed, the set of LOSR-free assemblages is strictly contained within the set of TO-LHS assemblages, and hence (by Taddei \etal's result) they cannot yield exposure. 

For completeness, we further note that in our resource theory, it is indeed the case that the free resources are all and only those that can be constructed out of free (i.e., LOSR) operations, so that the consistency condition mentioned above is satisfied.

\section{Conclusions and outlook}

In this paper, we developed a resource theory of nonclassicality for assemblages in EPR scenarios, with its free operations defined as local operations and shared randomness. This choice of free processes is motivated by the causal structure of the EPR scenario, together with a viewpoint on EPR scenarios wherein Alice's actions allow her to make \textit{inferences about} the physical state of Bob's system rather than to \textit{remotely steer} (i.e.,~influence) it. 
This is the first resource theory for assemblages in multipartite EPR scenarios. 

We proved that resource conversions in our resource theory can be tested using a single instance of a semidefinite program, in both bipartite and multipartite EPR scenarios. These semidefinite programs -- which we give -- are the first tools that allow one to explore systematically free conversions of assemblages. We also proved that the pre-order of (both bipartite and multipartite) assemblages contain an infinite number of incomparable assemblages. To prove this, we constructed new EPR monotones, which may be useful in their own right for other tasks.  

The causal approach that we have endorsed here brings a new perspective to fundamental questions regarding EPR scenarios. Firstly, we have motivated a principled approach to defining the set of classically-explainable assemblages in multipartite EPR scenarios from a resource-theoretic perspective. 
In addition, our approach (in contrast to the approach of Ref.~\cite{exposure}) ensures that the free set of resources and the free set of operations are consistent, in the sense that the free resources are all and only those that can be constructed out of free operations. Finally, our approach ensures that there are no free type-changing operations that do not preserve the free set (i.e., that there is no `steering exposure'), and sheds light on previous approaches to ensuring this property. 
As far as we know, this is the first resource theory defined for EPR assemblages that satisfies these (in our opinion) important properties, highlighting the value of taking our principled LOSR approach.

The resource theory of nonclassicality of assemblages is an instance of a general construction for studying the nonclassicality of common-cause processes. This principled approach was previously used to study LOSR-entanglement~\cite{schmid2020standard}, nonclassical correlations in Bell scenarios \cite{cowpie}, and indeed arbitrary types of common-cause processes~\cite{schmid2020type,rosset2020type}. Our work continues the development of this program in the specific context of EPR scenarios, and in doing so provides a perspective on EPR `steering' which unifies it with other pertinent notions of nonclassicality. 
A natural continuation of this work would be to study other EPR-like scenarios, wherein the process of Bob about which Alice is learning is not merely a quantum output system, but rather a dynamical process. Indeed, in our forthcoming paper we develop a resource theory for Bob-with-input scenarios and channel EPR scenarios.

Another relevant research avenue would be to extend the enveloping theory to include post-quantum common-cause processes~\cite{sainz2015postquantum,sainz2020bipartite,schmid2021postquantum} (e.g., described by processes in arbitrary generalized probabilistic theories~\cite{hardy2001quantum, barrett2007information,chiribella2009theoretical}). Scenarios such as the multipartite EPR scenario and the Bob-with-Input scenario are of particular importance for these questions because -- unlike the bipartite EPR scenario -- they allow for post-quantum assemblages. In order to quantify the nonclassicality of such assemblages, one would again take the free set to be defined by LOSR operations, and apply all the tools and monotones developed herein. Alternatively, one could study not the nonclassicality, but rather the {\em post-quantumness} of these resources, by taking the free set to be defined by \textit{local operations and shared entanglement} (LOSE operations), as argued in Ref.~\cite{schmid2021postquantum,hoban2018channel}. In this latter resource theory, one would be studying not the possibilities allowed by quantum theory, but rather the limitations imposed by quantum theory.

Given that both entanglement and incompatibility are necessary resources for generating nonclassical assemblages, another interesting line of work would be to explore how our resource theory of EPR assemblages emerges from the resource theories of incompatibility~\cite{buscemi2020complete} and entanglement~\cite{schmid2020standard}. 

Finally, an interesting question pertains to the notion of self-testing of quantum states, which is relevant for the certification of quantum devices in communication and information processing protocols.  Ref.~\cite{schmid2020standard} shows that self-testing of states can indeed be defined in a resource-theoretic framework. It would be interesting to see how self-testing of quantum assemblages (cf.  Refs.~\cite{supic2016, gheorghiu2017,Chen2021robustselftestingof}) may be understood from the tools we developed in our work, such as the resource-conversion SDP tests.

\section*{Acknowledgments}
BZ, DS, and ABS acknowledge support by the Foundation for Polish Science (IRAP project, ICTQT, contract no. 2018/MAB/5, co-financed by EU within Smart Growth Operational Programme). MJH and ABS acknowledge the FQXi large grant ``The Emergence of Agents from Causal Order'' (FQXi FFF Grant number FQXi-RFP-1803B). BZ acknowledges partial support by the National Science Centre, Poland 2021/41/N/ST2/02242. This research was supported by Perimeter Institute for Theoretical Physics. Research at Perimeter Institute is supported in part by the Government of Canada through the Department of Innovation, Science and Economic Development Canada
and by the Province of Ontario through the Ministry of Colleges and Universities. The diagrams within this manuscript were prepared using TikZ and Inkscape.

\bibliography{LOSRbib}

\begin{thebibliography}{10}

\bibitem{einstein1935can}
Albert Einstein, Boris Podolsky, and Nathan Rosen.
\newblock ``Can quantum-mechanical description of physical reality be
  considered complete?''.
\newblock
  \href{https://dx.doi.org/https://doi.org/10.1103/PhysRev.47.777}{Physical
  review {\bf 47}, 777}~(1935).

\bibitem{schrodinger1935discussion}
Erwin Schr{\"o}dinger.
\newblock ``Discussion of probability relations between separated systems''.
\newblock
  \href{https://dx.doi.org/https://doi.org/10.1017/S0305004100013554}{Mathematical
  Proceedings of the Cambridge Philosophical Society {\bf 31},
  555--563}~(1935).

\bibitem{cavalcanti2009experimental}
Eric~Gama Cavalcanti, Steve~J Jones, Howard~M Wiseman, and Margaret~D Reid.
\newblock ``Experimental criteria for steering and the einstein-podolsky-rosen
  paradox''.
\newblock
  \href{https://dx.doi.org/https://doi.org/10.1103/PhysRevA.80.032112}{Physical
  Review A {\bf 80}, 032112}~(2009).

\bibitem{wiseman2007steering}
Howard~M Wiseman, Steve~James Jones, and Andrew~C Doherty.
\newblock ``Steering, entanglement, nonlocality, and the
  {E}instein-{P}odolsky-{R}osen paradox''.
\newblock
  \href{https://dx.doi.org/https://doi.org/10.1103/PhysRevLett.98.140402}{Physical
  review letters {\bf 98}, 140402}~(2007).

\bibitem{uola2020quantum}
Roope Uola, Ana~CS Costa, H~Chau Nguyen, and Otfried G{\"u}hne.
\newblock ``Quantum steering''.
\newblock
  \href{https://dx.doi.org/https://doi.org/10.1103/RevModPhys.92.015001}{Reviews
  of Modern Physics {\bf 92}, 015001}~(2020).

\bibitem{cavalcanti2015detection}
Daniel Cavalcanti, Paul Skrzypczyk, Gabriel Aguilar, Ranieri~V Nery, PH~Souto
  Ribeiro, and Stephen Walborn.
\newblock ``Detection of entanglement in asymmetric quantum networks and
  multipartite quantum steering''.
\newblock \href{https://dx.doi.org/https://doi.org/10.1038/ncomms8941}{Nature
  communications {\bf 6}, 1--6}~(2015).

\bibitem{mattar2017experimental}
Alejandro M{\'a}ttar, Paul Skrzypczyk, Gabriel Aguilar, Ranieri~V Nery,
  PH~Souto Ribeiro, Stephen Walborn, and Daniel Cavalcanti.
\newblock ``Experimental multipartite entanglement and randomness certification
  of the w state in the quantum steering scenario''.
\newblock
  \href{https://dx.doi.org/https://doi.org/10.1088/2058-9565/aa629b}{Quantum
  Science and Technology {\bf 2}, 015011}~(2017).

\bibitem{branciard2012one}
Cyril Branciard, Eric~G Cavalcanti, Stephen~P Walborn, Valerio Scarani, and
  Howard~M Wiseman.
\newblock ``One-sided device-independent quantum key distribution: {S}ecurity,
  feasibility, and the connection with steering''.
\newblock
  \href{https://dx.doi.org/https://doi.org/10.1103/PhysRevA.85.010301}{Physical
  Review A {\bf 85}, 010301}~(2012).

\bibitem{gianissecretsharing}
Yu~Xiang, Ioannis Kogias, Gerardo Adesso, and Qiongyi He.
\newblock ``Multipartite gaussian steering: Monogamy constraints and quantum
  cryptography applications''.
\newblock
  \href{https://dx.doi.org/https://link.aps.org/doi/10.1103/PhysRevA.95.010101}{Phys.
  Rev. A {\bf 95}, 010101}~(2017).

\bibitem{pusey2013negativity}
Matthew~F Pusey.
\newblock ``Negativity and steering: A stronger {P}eres conjecture''.
\newblock
  \href{https://dx.doi.org/https://doi.org/10.1103/PhysRevA.88.032313}{Physical
  Review A {\bf 88}, 032313}~(2013).

\bibitem{weight}
Paul Skrzypczyk, Miguel Navascu{\'e}s, and Daniel Cavalcanti.
\newblock ``Quantifying {E}instein-{P}odolsky-{R}osen steering''.
\newblock
  \href{https://dx.doi.org/https://doi.org/10.1103/PhysRevLett.112.180404}{Physical
  review letters {\bf 112}, 180404}~(2014).

\bibitem{robustness}
Marco Piani and John Watrous.
\newblock ``Necessary and sufficient quantum information characterization of
  {E}instein-{P}odolsky-{R}osen steering''.
\newblock
  \href{https://dx.doi.org/https://doi.org/10.1103/PhysRevLett.114.060404}{Physical
  review letters {\bf 114}, 060404}~(2015).

\bibitem{gallego2015resource}
Rodrigo Gallego and Leandro Aolita.
\newblock ``Resource theory of steering''.
\newblock
  \href{https://dx.doi.org/https://doi.org/10.1103/PhysRevX.5.041008}{Physical
  Review X {\bf 5}, 041008}~(2015).

\bibitem{pearl2000causality}
Judea Pearl.
\newblock ``Causality''.
\newblock
  \href{https://dx.doi.org/https://doi.org/10.1017/CBO9780511803161}{Cambridge
  University Press}. ~(2009).

\bibitem{Wood_2015}
Christopher~J Wood and Robert~W Spekkens.
\newblock ``The lesson of causal discovery algorithms for quantum correlations:
  causal explanations of bell-inequality violations require fine-tuning''.
\newblock
  \href{https://dx.doi.org/https://doi.org/10.1088/1367-2630/17/3/033002}{New
  Journal of Physics {\bf 17}, 033002}~(2015).

\bibitem{de2014nonlocality}
Julio~I De~Vicente.
\newblock ``On nonlocality as a resource theory and nonlocality measures''.
\newblock
  \href{https://dx.doi.org/https://doi.org/10.1088/1751-8113/47/42/424017}{Journal
  of Physics A: Mathematical and Theoretical {\bf 47}, 424017}~(2014).

\bibitem{geller2014quantifying}
Joshua Geller and Marco Piani.
\newblock ``Quantifying non-classical and beyond-quantum correlations in the
  unified operator formalism''.
\newblock
  \href{https://dx.doi.org/https://doi.org/10.1088/1751-8113/47/42/424030}{Journal
  of Physics A: Mathematical and Theoretical {\bf 47}, 424030}~(2014).

\bibitem{gallego2017nonlocality}
Rodrigo Gallego and Leandro Aolita.
\newblock ``Nonlocality free wirings and the distinguishability between bell
  boxes''.
\newblock
  \href{https://dx.doi.org/https://doi.org/10.1103/PhysRevA.95.032118}{Physical
  Review A {\bf 95}, 032118}~(2017).

\bibitem{acin2012randomness}
Antonio Ac{\'\i}n, Serge Massar, and Stefano Pironio.
\newblock ``Randomness versus nonlocality and entanglement''.
\newblock
  \href{https://dx.doi.org/https://doi.org/10.1103/PhysRevLett.108.100402}{Physical
  review letters {\bf 108}, 100402}~(2012).

\bibitem{cowpie}
Elie Wolfe, David Schmid, Ana~Bel{\'e}n Sainz, Ravi Kunjwal, and Robert~W
  Spekkens.
\newblock ``Quantifying {B}ell: The resource theory of nonclassicality of
  common-cause boxes''.
\newblock
  \href{https://dx.doi.org/https://doi.org/10.22331/q-2020-06-08-280}{Quantum
  {\bf 4}, 280}~(2020).

\bibitem{schmid2020standard}
David Schmid, Thomas~C Fraser, Ravi Kunjwal, Ana~Bel{\'e}n Sainz, Elie Wolfe,
  and Robert~W Spekkens.
\newblock ``Understanding the interplay of entanglement and nonlocality:
  motivating and developing a new branch of entanglement theory''~(2020).

\bibitem{schmid2020type}
David Schmid, Denis Rosset, and Francesco Buscemi.
\newblock ``The type-independent resource theory of local operations and shared
  randomness''.
\newblock
  \href{https://dx.doi.org/https://doi.org/10.22331/q-2020-04-30-262}{Quantum
  {\bf 4}, 262}~(2020).

\bibitem{rosset2020type}
Denis Rosset, David Schmid, and Francesco Buscemi.
\newblock ``Type-independent characterization of spacelike separated
  resources''.
\newblock
  \href{https://dx.doi.org/https://doi.org/10.1103/PhysRevLett.125.210402}{Physical
  Review Letters {\bf 125}, 210402}~(2020).

\bibitem{schmid2020unscrambling}
David Schmid, John~H Selby, and Robert~W Spekkens.
\newblock ``Unscrambling the omelette of causation and inference: The framework
  of causal-inferential theories''~(2020).

\bibitem{Jaynes}
E.T. Jaynes.
\newblock ``Probability in quantum theory''.
\newblock \href{https://dx.doi.org/https://doi.org/10.1201/9780429502880}{in
  Complexity, Entropy, and the Physics of Information edited by W. H. ZurekPage
  381}~(1990).

\bibitem{PhysRevA.86.012103}
Stephen~D. Bartlett, Terry Rudolph, and Robert~W. Spekkens.
\newblock ``Reconstruction of {G}aussian quantum mechanics from {L}iouville
  mechanics with an epistemic restriction''.
\newblock
  \href{https://dx.doi.org/https://link.aps.org/doi/10.1103/PhysRevA.86.012103}{Phys.
  Rev. A {\bf 86}, 012103}~(2012).

\bibitem{fuchs2002quantum}
Christopher~A Fuchs.
\newblock ``Quantum mechanics as quantum information (and only a little
  more)''~(2002).

\bibitem{schmidinitial}
David Schmid, Katja Ried, and Robert~W. Spekkens.
\newblock ``Why initial system-environment correlations do not imply the
  failure of complete positivity: A causal perspective''.
\newblock
  \href{https://dx.doi.org/https://link.aps.org/doi/10.1103/PhysRevA.100.022112}{Phys.
  Rev. A {\bf 100}, 022112}~(2019).

\bibitem{schmid2021guiding}
David Schmid.
\newblock ``Guiding our interpretation of quantum theory by principles of
  causation and inference''.
\newblock \href{https://dx.doi.org/http://hdl.handle.net/10012/17136}{PhD
  thesis, University of Waterloo}~(2021).

\bibitem{coecke2016mathematical}
Bob Coecke, Tobias Fritz, and Robert~W Spekkens.
\newblock ``A mathematical theory of resources''.
\newblock
  \href{https://dx.doi.org/https://doi.org/10.1016/j.ic.2016.02.008}{Information
  and Computation {\bf 250}, 59--86}~(2016).

\bibitem{chitambar2019quantum}
Eric Chitambar and Gilad Gour.
\newblock ``Quantum resource theories''.
\newblock
  \href{https://dx.doi.org/https://doi.org/10.1103/RevModPhys.91.025001}{Reviews
  of Modern Physics {\bf 91}, 025001}~(2019).

\bibitem{nielsen1999conditions}
Michael~A Nielsen.
\newblock ``Conditions for a class of entanglement transformations''.
\newblock
  \href{https://dx.doi.org/https://doi.org/10.1103/PhysRevLett.83.436}{Physical
  Review Letters {\bf 83}, 436}~(1999).

\bibitem{bennett1996concentrating}
Charles~H Bennett, Herbert~J Bernstein, Sandu Popescu, and Benjamin Schumacher.
\newblock ``Concentrating partial entanglement by local operations''.
\newblock
  \href{https://dx.doi.org/https://doi.org/10.1103/PhysRevA.53.2046}{Physical
  Review A {\bf 53}, 2046}~(1996).

\bibitem{sanders2009necessary}
Yuval~Rishu Sanders and Gilad Gour.
\newblock ``Necessary conditions for entanglement catalysts''.
\newblock
  \href{https://dx.doi.org/https://doi.org/10.1103/PhysRevA.79.054302}{Physical
  Review A {\bf 79}, 054302}~(2009).

\bibitem{buscemi2012all}
Francesco Buscemi.
\newblock ``All entangled quantum states are nonlocal''.
\newblock
  \href{https://dx.doi.org/https://doi.org/10.1103/PhysRevLett.108.200401}{Physical
  review letters {\bf 108}, 200401}~(2012).

\bibitem{schmid2021postquantum}
David Schmid, Haoxing Du, Maryam Mudassar, Ghi Coulter-de Wit, Denis Rosset,
  and Matty~J Hoban.
\newblock ``Postquantum common-cause channels: the resource theory of local
  operations and shared entanglement''.
\newblock
  \href{https://dx.doi.org/https://doi.org/10.22331/q-2021-03-23-419}{Quantum
  {\bf 5}, 419}~(2021).

\bibitem{barrett2005nonlocal}
Jonathan Barrett, Noah Linden, Serge Massar, Stefano Pironio, Sandu Popescu,
  and David Roberts.
\newblock ``Nonlocal correlations as an information-theoretic resource''.
\newblock
  \href{https://dx.doi.org/https://doi.org/10.1103/PhysRevA.71.022101}{Physical
  review A {\bf 71}, 022101}~(2005).

\bibitem{brunner2009nonlocality}
Nicolas Brunner and Paul Skrzypczyk.
\newblock ``Nonlocality distillation and postquantum theories with trivial
  communication complexity''.
\newblock
  \href{https://dx.doi.org/https://doi.org/10.1103/PhysRevLett.102.160403}{Physical
  review letters {\bf 102}, 160403}~(2009).

\bibitem{marvian2016quantify}
Iman Marvian and Robert~W Spekkens.
\newblock ``How to quantify coherence: Distinguishing speakable and unspeakable
  notions''.
\newblock
  \href{https://dx.doi.org/https://doi.org/10.1103/PhysRevA.94.052324}{Physical
  Review A {\bf 94}, 052324}~(2016).

\bibitem{marvian2016quantum}
Iman Marvian, Robert~W Spekkens, and Paolo Zanardi.
\newblock ``Quantum speed limits, coherence, and asymmetry''.
\newblock
  \href{https://dx.doi.org/https://doi.org/10.1103/PhysRevA.93.052331}{Physical
  Review A {\bf 93}, 052331}~(2016).

\bibitem{winter2016operational}
Andreas Winter and Dong Yang.
\newblock ``Operational resource theory of coherence''.
\newblock
  \href{https://dx.doi.org/https://doi.org/10.1103/PhysRevLett.116.120404}{Physical
  review letters {\bf 116}, 120404}~(2016).

\bibitem{brandao2013resource}
Fernando~GSL Brandao, Micha{\l} Horodecki, Jonathan Oppenheim, Joseph~M Renes,
  and Robert~W Spekkens.
\newblock ``Resource theory of quantum states out of thermal equilibrium''.
\newblock
  \href{https://dx.doi.org/https://doi.org/10.1103/PhysRevLett.111.250404}{Physical
  review letters {\bf 111}, 250404}~(2013).

\bibitem{skrzypczyk2014work}
Paul Skrzypczyk, Anthony~J Short, and Sandu Popescu.
\newblock ``Work extraction and thermodynamics for individual quantum
  systems''.
\newblock \href{https://dx.doi.org/https://doi.org/10.1038/ncomms5185}{Nature
  communications {\bf 5}, 1--8}~(2014).

\bibitem{janzing2000thermodynamic}
Dominik Janzing, Pawel Wocjan, Robert Zeier, Rubino Geiss, and Th~Beth.
\newblock ``Thermodynamic cost of reliability and low temperatures: tightening
  landauer's principle and the second law''.
\newblock
  \href{https://dx.doi.org/https://doi.org/10.1023/A:1026422630734}{International
  Journal of Theoretical Physics {\bf 39}, 2717--2753}~(2000).

\bibitem{horodecki2013fundamental}
Micha{\l} Horodecki and Jonathan Oppenheim.
\newblock ``Fundamental limitations for quantum and nanoscale thermodynamics''.
\newblock \href{https://dx.doi.org/https://doi.org/10.1038/ncomms3059}{Nature
  communications {\bf 4}, 1--6}~(2013).

\bibitem{gour2015resource}
Gilad Gour, Markus~P M{\"u}ller, Varun Narasimhachar, Robert~W Spekkens, and
  Nicole~Yunger Halpern.
\newblock ``The resource theory of informational nonequilibrium in
  thermodynamics''.
\newblock
  \href{https://dx.doi.org/https://doi.org/10.1016/j.physrep.2015.04.003}{Physics
  Reports {\bf 583}, 1--58}~(2015).

\bibitem{holmes2020quantifying}
Zo{\"e} Holmes, Erick Hinds~Mingo, Calvin Y-R Chen, and Florian Mintert.
\newblock ``Quantifying athermality and quantum induced deviations from
  classical fluctuation relations''.
\newblock \href{https://dx.doi.org/https://doi.org/10.3390/e22010111}{Entropy
  {\bf 22}, 111}~(2020).

\bibitem{buscemi2020complete}
Francesco Buscemi, Eric Chitambar, and Wenbin Zhou.
\newblock ``Complete resource theory of quantum incompatibility as quantum
  programmability''.
\newblock
  \href{https://dx.doi.org/https://doi.org/10.1103/PhysRevLett.124.120401}{Physical
  review letters {\bf 124}, 120401}~(2020).

\bibitem{sainz2015postquantum}
Ana~Bel{\'e}n Sainz, Nicolas Brunner, Daniel Cavalcanti, Paul Skrzypczyk, and
  Tam{\'a}s V{\'e}rtesi.
\newblock ``Postquantum steering''.
\newblock
  \href{https://dx.doi.org/https://doi.org/10.1103/PhysRevLett.115.190403}{Physical
  review letters {\bf 115}, 190403}~(2015).

\bibitem{sainz2020bipartite}
Ana~Bel{\'e}n Sainz, Matty~J Hoban, Paul Skrzypczyk, and Leandro Aolita.
\newblock ``Bipartite postquantum steering in generalized scenarios''.
\newblock
  \href{https://dx.doi.org/https://doi.org/10.1103/PhysRevLett.125.050404}{Physical
  Review Letters {\bf 125}, 050404}~(2020).

\bibitem{leifer2013towards}
Matthew~S Leifer and Robert~W Spekkens.
\newblock ``Towards a formulation of quantum theory as a causally neutral
  theory of bayesian inference''.
\newblock
  \href{https://dx.doi.org/https://doi.org/10.1103/PhysRevA.88.052130}{Physical
  Review A {\bf 88}, 052130}~(2013).

\bibitem{piani2015channel}
Marco Piani.
\newblock ``Channel steering''.
\newblock
  \href{https://dx.doi.org/https://doi.org/10.1364/JOSAB.32.0000A1}{JOSA B {\bf
  32}, A1--A7}~(2015).

\bibitem{gisin1989stochastic}
Nicolas Gisin.
\newblock ``Stochastic quantum dynamics and relativity''.
\newblock
  \href{https://dx.doi.org/http://doi.org/10.5169/seals-116034}{Helvetica
  Physica Acta {\bf 62}, 363--371}~(1989).

\bibitem{hughston1993complete}
Lane~P Hughston, Richard Jozsa, and William~K Wootters.
\newblock ``A complete classification of quantum ensembles having a given
  density matrix''.
\newblock
  \href{https://dx.doi.org/https://doi.org/10.1016/0375-9601(93)90880-9}{Physics
  Letters A {\bf 183}, 14--18}~(1993).

\bibitem{chiribella2009theoretical}
Giulio Chiribella, Giacomo~Mauro D’Ariano, and Paolo Perinotti.
\newblock ``Theoretical framework for quantum networks''.
\newblock
  \href{https://dx.doi.org/https://doi.org/10.1103/PhysRevA.80.022339}{Physical
  Review A {\bf 80}, 022339}~(2009).

\bibitem{NielsenChuang}
Michael~A. Nielsen and Isaac~L. Chuang.
\newblock ``Quantum computation and quantum information: 10th anniversary
  edition''.
\newblock
  \href{https://dx.doi.org/https://doi.org/10.1017/CBO9780511976667}{Cambridge
  University Press}. ~(2011).

\bibitem{FinePRL}
Arthur Fine.
\newblock ``{Hidden Variables, Joint Probability, and the Bell Inequalities}''.
\newblock
  \href{https://dx.doi.org/https://link.aps.org/doi/10.1103/PhysRevLett.48.291}{Phys.
  Rev. Lett. {\bf 48}, 291--295}~(1982).

\bibitem{choi1975completely}
Man-Duen Choi.
\newblock ``Completely positive linear maps on complex matrices''.
\newblock
  \href{https://dx.doi.org/https://doi.org/10.1016/0024-3795(75)90075-0}{Linear
  algebra and its applications {\bf 10}, 285--290}~(1975).

\bibitem{jamiolkowski1972linear}
Andrzej Jamio{\l}kowski.
\newblock ``Linear transformations which preserve trace and positive
  semidefiniteness of operators''.
\newblock
  \href{https://dx.doi.org/https://doi.org/10.1016/0034-4877(72)90011-0}{Reports
  on Mathematical Physics {\bf 3}, 275--278}~(1972).

\bibitem{MATLAB:2010}
``Matlab''.
\newblock  url:~\url{https://www.mathworks.com/}.

\bibitem{grant2013cvx}
Michael Grant and Stephen Boyd.
\newblock ``Cvx: {MATLAB} software for disciplined convex programming''.
\newblock  url:~\url{http://cvxr.com/cvx}.

\bibitem{blondel2008recent}
Michael Grant and Stephen Boyd.
\newblock ``Graph implementations for nonsmooth convex programs''.
\newblock In V.~Blondel, S.~Boyd, and H.~Kimura, editors, Recent Advances in
  Learning and Control.
\newblock Pages 95--110.
\newblock Lecture Notes in Control and Information Sciences. Springer-Verlag
  Limited~(2008).

\bibitem{tutuncu1999sdpt3}
K.~C. Toh, M.~J. Todd, and R.~H. Tütüncü.
\newblock ``{SDPT3} — a matlab software package for semidefinite programming,
  version 1.3''.
\newblock
  \href{https://dx.doi.org/https://doi.org/10.1080/10556789908805762}{Optimization
  Methods and Software {\bf 11}, 545--581}~(1999).

\bibitem{qetlab}
Nathaniel Johnston.
\newblock ``{QETLAB}: a {MATLAB} toolbox for quantum entanglement''.
\newblock  url:~\url{http://qetlab.com}.

\bibitem{github}
Beata Zjawin, David Schmid, Matty~J. Hoban, and Ana~Bel{\'e}n Sainz.
\newblock
  code:~\href{https://github.com/beatazjawin/Quantifying-EPR}{beatazjawin/Quantifying-EPR}.

\bibitem{LOCCentang}
Ryszard Horodecki, Pawe\l{} Horodecki, Micha\l{} Horodecki, and Karol
  Horodecki.
\newblock ``Quantum entanglement''.
\newblock
  \href{https://dx.doi.org/https://link.aps.org/doi/10.1103/RevModPhys.81.865}{Rev.
  Mod. Phys. {\bf 81}, 865--942}~(2009).

\bibitem{bamps2015sum}
C{\'e}dric Bamps and Stefano Pironio.
\newblock ``Sum-of-squares decompositions for a family of
  {C}lauser-{H}orne-{S}himony-{H}olt-like inequalities and their application to
  self-testing''.
\newblock
  \href{https://dx.doi.org/https://doi.org/10.1103/PhysRevA.91.052111}{Physical
  Review A {\bf 91}, 052111}~(2015).

\bibitem{cavalcanti2011unified}
Eric~G Cavalcanti, Qiongyi~Y He, Margaret~D Reid, and Howard~M Wiseman.
\newblock ``Unified criteria for multipartite quantum nonlocality''.
\newblock
  \href{https://dx.doi.org/https://doi.org/10.1103/PhysRevA.84.032115}{Physical
  Review A {\bf 84}, 032115}~(2011).

\bibitem{klyachko2006quantum}
Alexander~A Klyachko.
\newblock ``Quantum marginal problem and n-representability''.
\newblock
  \href{https://dx.doi.org/https://doi.org/10.1088/1742-6596/36/1/014}{Journal
  of Physics: Conference Series {\bf 36}, 014}~(2006).

\bibitem{cavalcanti2016quantum}
Daniel Cavalcanti and Paul Skrzypczyk.
\newblock ``Quantum steering: A review with focus on semidefinite
  programming''.
\newblock
  \href{https://dx.doi.org/https://doi.org/10.1088/1361-6633/80/2/024001}{Reports
  on Progress in Physics {\bf 80}, 024001}~(2016).

\bibitem{exposure}
Márcio~M Taddei, Thais de~Lima Silva, Ranieri~V. Nery, Gabriel Aguilar,
  Stephen~P Walborn, and Leandro Aolita.
\newblock ``Exposure of subtle multipartite quantum nonlocality''.
\newblock
  \href{https://dx.doi.org/https://doi.org/10.1038/s41534-021-00402-5}{npj
  Quantum Information {\bf 7}, 1--8}~(2021).

\bibitem{multiGHZ}
Flavio Baccari, Remigiusz Augusiak, Ivan {\v{S}}upi{\'c}, Jordi Tura, and
  Antonio Ac{\'\i}n.
\newblock ``Scalable {B}ell inequalities for qubit graph states and robust
  self-testing''.
\newblock
  \href{https://dx.doi.org/https://doi.org/10.1103/PhysRevLett.124.020402}{Physical
  review letters {\bf 124}, 020402}~(2020).

\bibitem{popescu1994quantum}
Sandu Popescu and Daniel Rohrlich.
\newblock ``Quantum nonlocality as an axiom''.
\newblock
  \href{https://dx.doi.org/https://doi.org/10.1007/BF02058098}{Foundations of
  Physics {\bf 24}, 379--385}~(1994).

\bibitem{hardy2001quantum}
Lucien Hardy.
\newblock ``Quantum theory from five reasonable axioms''~(2001).

\bibitem{barrett2007information}
Jonathan Barrett.
\newblock ``Information processing in generalized probabilistic theories''.
\newblock
  \href{https://dx.doi.org/https://doi.org/10.1103/PhysRevA.75.032304}{Physical
  Review A {\bf 75}, 032304}~(2007).

\bibitem{hoban2018channel}
Matty~J Hoban and Ana~Bel{\'e}n Sainz.
\newblock ``A channel-based framework for steering, non-locality and beyond''.
\newblock
  \href{https://dx.doi.org/https://doi.org/10.1088/1367-2630/aabea8}{New
  Journal of Physics {\bf 20}, 053048}~(2018).

\bibitem{supic2016}
Ivan {\v{S}}upi{\'c} and Matty~J Hoban.
\newblock ``Self-testing through {EPR}-steering''.
\newblock
  \href{https://dx.doi.org/https://doi.org/10.1088/1367-2630/18/7/075006}{New
  Journal of Physics {\bf 18}, 075006}~(2016).

\bibitem{gheorghiu2017}
Alexandru Gheorghiu, Petros Wallden, and Elham Kashefi.
\newblock ``Rigidity of quantum steering and one-sided device-independent
  verifiable quantum computation''.
\newblock
  \href{https://dx.doi.org/https://doi.org/10.1088/1367-2630/aa5cff}{New
  Journal of Physics {\bf 19}, 023043}~(2017).

\bibitem{Chen2021robustselftestingof}
Shin-Liang Chen, Huan-Yu Ku, Wenbin Zhou, Jordi Tura, and Yueh-Nan Chen.
\newblock ``Robust self-testing of steerable quantum assemblages and its
  applications on device-independent quantum certification''.
\newblock
  \href{https://dx.doi.org/https://doi.org/10.22331/q-2021-09-28-552}{Quantum
  {\bf 5}, 552}~(2021).

\end{thebibliography}
\bibliographystyle{quantum}

\appendix

\section{The bipartite tilted Bell inequalities}\label{app:bipartite-Bell}

Consider a bipartite Bell scenario with two dichotomic measurements per party. 
The Bell inequality studied in Refs.~\cite{bamps2015sum, acin2012randomness} for this scenario, defined in \cite[Eq.~(1)]{bamps2015sum}, reads: 
\begin{align}\label{eq:app-tiltedBell}
I_\alpha = \alpha \av{A_0} + \av{A_0B_0} + \av{A_0B_1} + \av{A_1B_0} - \av{A_1B_1}\,,
\end{align}
where $\alpha \in [0,2]$ is a real parameter. For $\alpha = 0$, $I_0$ correspond to the CHSH inequality. 

The set $\I = \{I_\alpha \, \vert \, \alpha \in [0,2] \}$ defines a family of Bell inequalities for the bipartite Bell scenario, indexed by the parameter $\alpha$. It was proved in Ref.~\cite{acin2012randomness} that the maximum quantum violation of $I_\alpha$ is given by $I_\alpha^{\max} = \sqrt{8+2\alpha^2}$, and the classical bound of $I_\alpha$ is given by $I_\alpha^{\mathrm{C}} = 2 + \alpha$. The value $I_\alpha^{\max}$ is achieved by measuring the observables 
\begin{align}\label{eq:app-observables}
A_0 &= \sigma_z\,,\quad & B_0= \cos(\mu) \, \sigma_z + \sin(\mu) \, \sigma_x\,,\\ \nonumber
A_1 &=\sigma_x  \,,\quad &B_1= \cos(\mu) \, \sigma_z - \sin(\mu) \, \sigma_x \,,
\end{align}
on the quantum state
\begin{align}\label{eq:app-thetastate}
\ket{\theta}=\cos{\theta} \ket{00} + \sin{\theta} \ket{11} \,, 
\end{align}
with the relation between the parameters $\alpha$, $\theta$, and $\mu$ being the following: 
\begin{align}
\alpha = \frac{2}{\sqrt{1+2\tan^2(2\theta) }}\,,\quad \tan(\mu) = \sin(2\theta)\,.
\end{align}
In Ref. \cite[Section III and Appendix A]{bamps2015sum}, it was proved that the Bell inequalities specified by Eq.~\eqref{eq:app-tiltedBell} provide a robust self-test for the reference states given in Eq.~\eqref{eq:app-thetastate}. 

\begin{rem}\label{rem:STarg}
Let $\widetilde{A}_x$, $\widetilde{B}_y$, and $\ket{\widetilde{\psi}}$ be physical observables and a physical quantum state that achieve the value $I_\alpha^{\max}$ for the Bell inequality $I_\alpha$. Then, these are equal to (up to local isometries) the state and observables given in Eqs.~\eqref{eq:app-thetastate} and \eqref{eq:app-observables}. 

\end{rem}

\section{Constructing bipartite EPR inequalities from Bell inequalities}\label{app:steeringin-bipartite}

We now show how to use the tilted Bell inequalities from the set $\I$, defined in Appendix \ref{app:bipartite-Bell}, to construct a family of EPR inequalities~\cite{cavalcanti2009experimental}. Let us begin by transforming a Bell functional $I_\alpha \in \I$ into an EPR functional $S$. A generic EPR functional is defined as: 
\begin{align}
S[\As] = \Tr{ \sum_{a\in \A, \, x\in \X} F_{a,x} \, \sigma_{a|x}}\,.
\end{align}
The Bell expression $I_\alpha$ can be expressed in terms of the elements of $\As$ as follows:
\begin{align}
I_\alpha &= \alpha \av{A_0} + \av{A_0B_0} + \av{A_0B_1} + \av{A_1B_0} - \av{A_1B_1} \nonumber \\
&= \alpha \Tr{(M_{0|0} - M_{1|0})\otimes \id_B \, \rho} + \Tr{(M_{0|0} - M_{1|0})\otimes B_0 \, \rho} + \Tr{(M_{0|0} - M_{1|0})\otimes B_1 \, \rho} \nonumber \\
&\quad + \Tr{(M_{0|1} - M_{1|1})\otimes B_0 \, \rho} - \Tr{(M_{0|1} - M_{1|1})\otimes B_1 \, \rho} \nonumber \\
&= \Tr{\alpha \, \sigma_{0|0} - \alpha \, \sigma_{1|0} + B_0 \, \sigma_{0|0} - B_0 \, \sigma_{1|0} + B_1 \, \sigma_{0|0} - B_1 \, \sigma_{1|0} + B_0 \, \sigma_{0|1} - B_0 \, \sigma_{1|1}  } \nonumber \\
&\quad +\Tr{- B_1 \, \sigma_{0|1} + B_1 \, \sigma_{1|1}} \nonumber
\end{align}
From the above expansion of $I_\alpha$, we then define an EPR functional $S[\As]$ by setting: 
\begin{align}\label{eq:theF}
F_{0,0} = \alpha \id + B_0 + B_1 = - F_{1,0} = \,,\, 
F_{0,1} = B_0 - B_1 = - F_{1,1} \,.
\end{align}
In the expression for  $S[\As]$ given by the operators of Eq.~\eqref{eq:theF} there are plenty of things that still need to be specified: the value of the parameter $\alpha$, as well as the operators $B_0$ and $B_1$. This is the freedom we will leverage to construct a family of EPR inequalities. 

\begin{defn}\label{def:SFT} \textbf{EPR functional $S_\eta[\As]$} \\
We define the EPR functional $S_\eta[\As]$ via the operators of Eq.~\eqref{eq:theF} by taking: 
\begin{align}\label{eq:theStpar}
\alpha &= \frac{2}{\sqrt{1+2\tan^2(2\eta) }}\,, \\ \nonumber
B_0 &= \cos(\mu) \, \sigma_z + \sin(\mu) \, \sigma_x \,,\\ \nonumber
B_1 &= \cos(\mu) \, \sigma_z - \sin(\mu) \, \sigma_x \,, \\ \nonumber
\tan(\mu) &= \sin(2\eta)\,.
\end{align}
\end{defn}
A few remarks regarding the properties of $S_\eta[\As]$ are in order.
\begin{rem}
The maximum quantum value $S_\eta^{\max}$ of $S_\eta[\As]$ is given by 
\begin{align}\label{eq:Smax}
S_\eta^{\max} = 2\sqrt{2} \, \sqrt{1 + \frac{1}{1+2\tan^2(2\eta)}}\,.
\end{align}
\end{rem}
\begin{proof}
First notice that `optimising $S_\eta[\As]$ over quantum assemblages' is equivalent to `optimising $I_\alpha$ over quantum correlations constrained on Bob's observables being those from Eq.~\eqref{eq:theStpar}, and on $\alpha$ and $\eta$ being related as in Eq.~\eqref{eq:theStpar}'. 

Since the maximum quantum violation $I_\alpha^{\max}$ can indeed be achieved by the operators for Bob from Eq.~\eqref{eq:theStpar}, then $S_\eta^{\max} = I_\alpha^{\max}$. Using the relation between $\alpha$ and $\eta$ we obtain  Eq.~\eqref{eq:Smax}. 
\end{proof}

\begin{rem}\label{rem:iff}
The quantum assemblage $\As^{\theta}_{\A|\X}$, defined in Eq.~\eqref{eq:thefam1}, satisfies $S_\eta[\As^{\theta}_{\A|\X}] = S_\eta^{\max}$ if and only if $\eta=\theta$.
\end{rem}
\begin{proof}
The `if' direction is straightforward to prove, by noticing that the state and measurements used to prepare $\As^{\theta}_{\A|\X}$, together with Bob's observables from Eq.~\eqref{eq:theStpar}, produce correlations that achieve $I_\alpha^{\max}$ in the corresponding Bell experiment, where $\alpha$ and $\eta$ are related as in Eq.~\eqref{eq:theStpar}. 

The `only if' direction follows from a proof by contradiction.
Assume that there exists $\As^{\theta}_{\A|\X}$ with $\theta \neq \eta$, with $S_\eta[\As^{\theta}_{\A|\X}] = S_\eta^{\max}$. 
Then,  the state and measurements used to prepare $\As^{\theta}_{\A|\X}$, together with Bob's observables from Eq.~\eqref{eq:theStpar}, become a quantum realisation of correlations that achieve $I_\alpha^{\max}$ for the Bell functional $I_\alpha$, where $\alpha = \frac{2}{\sqrt{1+2\tan^2(2\eta) }}$. 

By Remark \ref{rem:STarg} then one concludes that there exists a local isometry in Alice's lab that takes $\ket{\theta}$ to $\ket{\eta}$. This local isometry, however, can never exist, since $\ket{\theta}$ cannot be transformed into $\ket{\eta}$ by such local operations \cite{schmid2020standard}. Hence, it must be the case that $\theta = \eta$. 
\end{proof}

\begin{rem}\label{rem:thelast}
Let $\As_{\A|\X}$ be a quantum assemblage, and let $\rho$ and $\{\{M_{a|x}\}_{a\in \A}\}_{x\in \X}$ be a quantum state and measurements that realise it. Then, $S_\eta[\As_{\A|\X}] = S_\eta^{\max}$ if and only if: \\
- The observables $A^\prime_0 = M_{0|0} - M_{1|0}$ and $A^\prime_1 = M_{0|1} - M_{1|1}$ are equivalent (up to local isometries) to $A_0$ and $A_1$, respectively, from Eq.~\eqref{eq:app-observables},\\
- The quantum state $\rho$ is equivalent, up to local isometries, to $\ket{\eta}$. 
\end{rem}
\begin{proof}
The `if' direction follows trivially from Remark \ref{rem:iff}. \\
To prove the `only if' direction, notice that $\rho$, $\{A^\prime_j\}_{j=0,1}$, and Bob's observables $\{B_j\}_{j=0,1}$ from Eq.~\eqref{eq:theStpar}, provide a quantum realisation of correlations that achieve the maximum value of the Bell functional $I_\alpha$, with $\alpha$ and $\eta$ related as in Eq.~\eqref{eq:theStpar}. From Remark \ref{rem:STarg} it follows that $\rho$ and $\{A^\prime_j\}_{j=0,1}$ must be equal to (up to local isometries) the state and observables of Eqs.~\eqref{eq:app-thetastate} and \eqref{eq:app-observables}, which proves our claim. 
\end{proof}

\section{The multipartite Bell inequalities}\label{app:multipartite-bell}

Consider $N$-partite Bell scenario with two dichotomic measurements per party. The Bell inequality studied in Ref.~\cite{multiGHZ} for this scenario, defined in Ref.~\cite[Eq.~(13)]{multiGHZ}, reads: 
\begin{align} \label{eq:app-multi-Bell}
\begin{split}
I_\alpha^{(N)} &= (N-1) \av{(B_0+B_1)A_0^{(1)}...A_0^{(N-1)}} 
+ (N-1)  \frac{\cos{2\alpha}}{\sqrt{1-\cos^2{2\alpha}}} (\av{B_0}-\av{B_1}) \\
&+ \frac{1}{\sqrt{1-\cos^2{2\alpha}}} \sum_{i=1}^{N-1} \av{(B_0-B_1)A_1^{(i)}}\,,
\end{split}
\end{align}
where $\alpha \in (0,\pi/4]$. 

The set $\I^{(N)} = \{I_\alpha^{(N)} \, \vert \, \alpha \in (0,\pi/4] \}$ defines, for fixed $N$, a family of Bell inequalities for the multipartite scenario, indexed by the parameter $\alpha$. It was shown in Ref.~\cite{multiGHZ} that the maximum quantum violation of $I_\alpha^{(N)}$ is given by $I_\alpha^{(N) \max} = 2\sqrt{2}(N-1)$, and the classical bound of $I_\alpha^{(N)}$ is given by $I_\alpha^{(N)\,\mathrm{C}} = (N-1) \frac{1-\cos{2\alpha}}{\sqrt{1-\cos{2\alpha}}}$. The value $I_\alpha^{(N) \max}$ is achieved by measuring the observables 
\begin{align}\label{eq:app-multi-observables}
A_0^{(i)} &= \sigma_x\,,\quad & B_0=  \cos(\mu) \, \sigma_z +  \sin(\mu) \, \sigma_x\,,\\
A_1^{(i)} &=\sigma_z  \,,\quad &B_1= - \cos(\mu) \, \sigma_z + \sin(\mu) \, \sigma_x \,, \label{eq:app-multi-observables2}
\end{align}
for $i \in \{1...$N-1$\}$, on the quantum state
\begin{align}\label{eq:app-multi-state}
   \ket{GHZ_N^{\theta}}&=\cos{\theta} \ket{0}^{\otimes N} + \sin{\theta} \ket{1}^{\otimes N} \,, 
\end{align}
with the relation between the parameters $\alpha$, $\theta$, and $\mu$ being the following: 
\begin{align}
\alpha = \theta\,,\quad 2 \sin^2(\mu) = \sin^2(2\theta)\,.
\end{align}
It was also proved in Ref.~\cite{multiGHZ} that the inequality \eqref{eq:app-multi-Bell} can be used to self-test the the state in Eq.~\eqref{eq:app-multi-state} for any $\theta \in (0,\pi/4]$.

\begin{rem}\label{rem:STarg-multi}
Let $\widetilde{A}_x^{(i)}$, $\widetilde{B}_y$, and $\ket{\widetilde{\psi}}$ be physical observables and a physical quantum state that achieve the value $I_\alpha^{(N) \max}$ for the Bell inequality $I_\alpha^{(N)}$. Then, these are equal to (up to local isometries) the state and observables of Eqs.~\eqref{eq:app-multi-state}, \eqref{eq:app-multi-observables}, and \eqref{eq:app-multi-observables2}.
\end{rem}

\section{Constructing multipartite EPR inequalities from Bell inequalities}\label{app:steeringin-multi}

We now define a family of multipartite EPR functionals $S_{\alpha}^{(N)}$ from the multipartite Bell inequalities $I_\alpha^{(N)} \in \I^{(N)}$. A generic $N$-partite EPR functional is defined as:
\begin{align}
S^{(N)}[\As_N] = \Tr{ \sum_{\substack{a_i\in \A, \\ x_i\in \X}} F_{a_1\ldots a_{N-1},x_1\ldots x_{N-1}} \, \sigma_{a_1\ldots a_{N-1}|x_1\ldots x_{N-1}}}\,.
\end{align} 
To transform the Bell functional to the EPR functional, notice the following relations. First, for any choice of $x_1\ldots x_{N-1}$, we can write:
\begin{align}
    \av{B}=\Tr{B \, \rho_B}=\sum_{a_1\ldots a_{N-1}} \Tr{B \, \sigma_{a_1\ldots a_{N-1}|x_1\ldots x_{N-1}}} \,.
\end{align}
Moreover, 
\begin{align}
\begin{split}
    \av{(B_0-B_1)A_1^{(i)}}&=\Tr{\id^{(1)} \otimes \ldots \otimes (\widetilde{M}_{0|1}-\widetilde{M}_{1|1})^{(i)} \otimes \ldots \otimes \id^{(N-1)} \otimes (B_0-B_1) \, \rho } \\
    &= \Tr{(B_0-B_1)  \, (\sigma_{0|1}^{(i)}-\sigma_{1|1}^{(i)})}  \,,
\end{split}
\end{align}
where $\sigma_{a|x}^{(i)}=\sum_{a_j, j \neq i}\sigma_{a_1\ldots (a_i=a) \ldots a_{N-1}|x_1 \ldots (x_i=x) \ldots x_{N-1}}$. 

Lastly, 
\begin{align}
\begin{split}
    \av{(B_0+B_1)A_0^{(1)}\ldots A_0^{(N-1)}}&=\Tr{(\widetilde{M}_{0|0}-\widetilde{M}_{1|0})^{(1)} \otimes \ldots \otimes (\widetilde{M}_{0|0}-\widetilde{M}_{1|0})^{(N-1)} \otimes (B_0+B_1) \, \rho } \\
    &= \sum_{a_1,\ldots,a_{N-1}} (-1)^{a_1+\ldots+a_{N-1}} \Tr{\widetilde{M}_{a_1|0}^{(1)} \otimes \ldots \otimes \widetilde{M}_{a_{N-1}|0}^{(N-1)} \otimes (B_0+B_1) \, \rho }  \\
    &= \sum_{a_, \ldots, a_{N-1}} (-1)^{a_1+\ldots+a_{N-1}} \Tr{ (B_0+B_1) \, \sigma_{a_1\ldots a_{N-1}|0\ldots 0} }.
\end{split}
\end{align}
Using the above relations, we define an EPR functional $S^{(N)}_{\alpha}[\As_N]$ by setting: 
\begin{align}\label{eq:theF-multi}
F_{a_1\ldots a_{N-1},0\ldots0} = (N-1) (-1)^{a_1+\ldots+a_{N-1}} (B_0 + B_1) +&  (N-1)  \frac{\cos{2\alpha}}{\sqrt{1-\cos^2{2\alpha}}} (B_0 - B_1), \\ 
F_{a_1\ldots (a_i=a) \ldots x_{N-1},x_1\ldots (x_i=1)\ldots x_{N-1}} = (-1)^{a} \frac{1}{\sqrt{1-\cos^2{2\alpha}}} &(B_0 - B_1). \nonumber
\end{align}
From the operators of Eq.~\eqref{eq:theF-multi}, we construct a family of EPR functionals as follows.
\begin{defn}\label{def:NSFT-multi} \textbf{EPR functional $S_\eta^{(N)}[\As_N]$.} \\
We define the EPR functional $S_\eta^{(N)}[\As_N]$ via the operators of Eq.~\eqref{eq:theF-multi} by taking: 
\begin{align}\label{eq:theStpar-multi}
\alpha &= \eta \,, \\ \nonumber
B_0 &= \cos(\mu) \, \sigma_z + \sin(\mu) \, \sigma_x \,,\\ \nonumber
B_1 &=- \cos(\mu) \, \sigma_z + \sin(\mu) \, \sigma_x \,, \\ \nonumber
2 \sin^2(\mu) &= \sin^2(2\eta)\,.
\end{align}
\end{defn}
The properties of the multipartite EPR functional from Definition \ref{def:NSFT-multi} can be studied in a similar manner to the  properties of the bipartite EPR functional from Definition \ref{def:SFT}. We now make a few remarks about these properties.

\begin{rem}
The maximum quantum value $S_\eta^{(N) \max}$ of $S_\eta^{(N)}[\As_N]$ is given by 
\begin{align}\label{eq:SmaxN}
S_\eta^{(N)\max} = 2\sqrt{2} \, (N-1)\,.
\end{align}
\end{rem}
\begin{proof}
The problem of optimising $S_\eta^{(N)}[\As_N]$ over quantum assemblages is equivalent to the problem optimising $I_\alpha^{(N)}$ over quantum correlations constrained on Bob's observables being those from Eq.~\eqref{eq:theStpar-multi} for $\alpha=\eta$. Since the maximum quantum violation $I_\alpha^{(N) \max}$ can indeed be achieved by the operators for Bob from Eq.~\eqref{eq:theStpar-multi}, then $S_\eta^{(N) \max} = I_\alpha^{(N) \max}$.
\end{proof}

In order to explore quantum assemblages that achieve this maximum violation, first we need to prove a claim about the interconvertibility of the quantum states of the form given in Eq.~\eqref{eq:Nass} under LOSR operations. Let us recall Corollary 8 in \cite{schmid2020standard}.

\begin{rem}[{\cite[Corollary 8]{schmid2020standard}}]\label{rem:bipartitions}
An n-partite pure state $\ket{\psi}$ can be converted to an n-partite pure state $\ket{\phi}$ by LOSR only if
\begin{align}\label{eq:bipartitions}
 \exists \ket{\zeta}, \forall \beta: 
   (\lambda_{\psi}^{(\beta)})^{\downarrow} = (\lambda_{\phi}^{(\beta)} \otimes \lambda_{\zeta}^{(\beta)})^{\downarrow}\,
\end{align}
where for a pure state $\ket{\omega}$, $\lambda_{\omega}^{(\beta)}$ denotes the vector of its squared Schmidt coefficients with respect to bipartition $\beta$ of the n-partite system.
\end{rem}
We are now in position to prove the following Lemma.

\begin{lem}\label{GHZunorder}
Let $\rho_N^{\theta}=\ket{GHZ_N^{\theta}}\bra{GHZ_N^{\theta}}$ be defined as per Eq.~\eqref{eq:Nass}. Then
\begin{align}
\rho_N^{\theta} \overset{\text{LOSR}}{\not\longrightarrow} \, \rho_N^{\phi}\,,
\end{align}
for any $\theta \neq \phi \in \left( 0,\sfrac{\pi}{4} \right]$.
\end{lem}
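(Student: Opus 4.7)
The plan is to apply the bipartition Schmidt condition of Remark \ref{rem:bipartitions} directly to the GHZ family, using the fact that across every bipartition the squared Schmidt vector of $\ket{GHZ_N^\theta}$ is the same two-element vector $(\cos^2\theta,\sin^2\theta)$. Since both $\cos^2\theta$ and $\sin^2\theta$ are strictly positive for $\theta\in(0,\sfrac{\pi}{4}]$, this Schmidt vector has rank exactly $2$ for every bipartition $\beta$.

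Suppose for contradiction that $\rho_N^\theta \xrightarrow{\mathrm{LOSR}} \rho_N^\phi$ for some $\theta\neq\phi$ in $(0,\sfrac{\pi}{4}]$. By Remark \ref{rem:bipartitions}, there exists a pure state $\ket{\zeta}$ such that for every bipartition $\beta$ of the $N$ parties,
\begin{equation*}
(\cos^2\theta,\sin^2\theta)^\downarrow \;=\; \bigl((\cos^2\phi,\sin^2\phi)\otimes\lambda_\zeta^{(\beta)}\bigr)^\downarrow.
\end{equation*}
The right-hand side is a multiset of the form $\{\cos^2\phi\,\mu_i\,,\,\sin^2\phi\,\mu_i\}_i$, where $\mu_i$ are the entries of $\lambda_\zeta^{(\beta)}$; since $\cos^2\phi$ and $\sin^2\phi$ are both strictly positive, its number of nonzero entries equals twice the number of nonzero entries of $\lambda_\zeta^{(\beta)}$. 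For this to equal a two-element vector, $\lambda_\zeta^{(\beta)}$ must have Schmidt rank one, i.e., $\ket{\zeta}$ must factorize across $\beta$.

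The key step is then to note that a pure state factorizing across every bipartition of a multipartite system is necessarily a fully product state; hence $\lambda_\zeta^{(\beta)}=(1)$ for all $\beta$. Plugging this back, the condition reduces to $(\cos^2\theta,\sin^2\theta)^\downarrow = (\cos^2\phi,\sin^2\phi)^\downarrow$. Since $\theta,\phi\in(0,\sfrac{\pi}{4}]$ implies $\cos^2\theta\geq\sin^2\theta$ and $\cos^2\phi\geq\sin^2\phi$ (so no sorting reshuffles the vectors), this forces $\cos^2\theta=\cos^2\phi$, and since $\cos^2$ is strictly monotonic on $(0,\sfrac{\pi}{4}]$, we obtain $\theta=\phi$, contradicting our assumption.

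The only step that requires a small justification beyond Remark \ref{rem:bipartitions} is the claim that a pure state which is a product across every bipartition is fully product; this is standard and can be proved by an easy induction on $N$, peeling off one party at a time using the bipartition $\{i\}$ vs.\ the rest. The rest of the argument is simply a counting of nonzero entries in tensor-product distributions combined with monotonicity of $\cos^2$ on $(0,\sfrac{\pi}{4}]$, so I do not anticipate any real obstacle.
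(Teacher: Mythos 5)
Your proof is correct and follows essentially the same route as the paper's: both reduce the claim to the bipartition Schmidt-coefficient criterion of Remark \ref{rem:bipartitions} and conclude that no suitable $\ket{\zeta}$ can exist. The only difference is that where the paper outsources the final step to Corollary 9 of Ref.~\cite{schmid2020standard}, you supply it explicitly via the (correct) nonzero-entry counting argument forcing $\lambda_\zeta^{(\beta)}$ to be trivial, which makes your version more self-contained.
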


\begin{proof}
Let $\rho_N^{\theta}=\ket{GHZ_N^{\theta}}\bra{GHZ_N^{\theta}}$ and  $\rho_N^{\phi}=\ket{GHZ_N^{\phi}}\bra{GHZ_N^{\phi}}$ with  $\theta \neq \phi \in \left( 0,\sfrac{\pi}{4} \right]$. All bipartitions $\beta$ of $\rho_N^{\theta}$ and $\rho_N^{\phi}$ have the same Schmidt rank, which is 2 for all $\beta$. However, for all $\beta$,
the vectors of the squared Schmidt coefficients $\lambda_{\theta}^{(\beta)}$ are different than $\lambda_{\phi}^{(\beta)}$, since 

$$
\lambda_{\theta}^{(\beta)} = \begin{bmatrix} \cos^2 \theta \\ \sin^2 \theta \end{bmatrix} \quad \text{and} \quad \lambda_{\phi}^{(\beta)} = \begin{bmatrix} \cos^2 \phi \\ \sin^2 \phi \end{bmatrix}.
$$
By applying Corollary 9 of Ref.~\cite{schmid2020standard}, it follows that there exists no $\ket{\zeta}$ such that the condition specified in Eq.~\eqref{eq:bipartitions} is satisfied. From Remark \ref{rem:bipartitions} it follows that $\rho_N^{\theta}$ and  $\rho_N^{\phi}$
are such that neither converts to the other under LOSR operations.
\end{proof}

\begin{rem}\label{rem:iffN}
The multipartite quantum assemblage $\As_N^{\theta}$ satisfies $S_\eta^{(N)}[\As_N^{\theta}] = S_\eta^{(N) \max}$ if and only if $\eta=\theta$.
\end{rem}
\begin{proof}
The `if' direction is straightforward to prove, by noticing that the state and measurements used to prepare $\As_N^{\theta}$, given by Eqs.~\eqref{eq:Nass} and \eqref{eq:Nmeas}, together with Bob's observables from Eq.~\eqref{eq:theStpar-multi}, produce correlations that achieve $I_\alpha^{(N) \max}$ in the corresponding Bell experiment, where $\alpha=\eta$.

The `only if' direction follows from a proof by contradiction.
Assume that there exists $\As_N^{\theta}$ with $\theta \neq \eta$, such that $S_\eta^{(N)}[\As_N^{\theta}] = S_\eta^{(N) \max}$. 
Then, the state and measurements used to prepare $\As_N^{\theta}$, together with Bob's observables from Eq.~\eqref{eq:theStpar-multi}, become a quantum realisation of correlations that achieve $I_\alpha^{(N) \max}$ for the Bell functional $I_\alpha^{(N)}$, where $\alpha =\eta$. Then, it follows from Remark \ref{rem:STarg-multi} that there exists a local isometry in Alice's lab that takes $\rho_N^{\theta}$ to $\rho_N^{\eta}$ (both defined in Eq.~\eqref{eq:Nass}). However, such local isometry does not exist, as it would contradict Lemma \ref{GHZunorder}. Hence, it must be the case that $\theta = \eta$. 

\end{proof}

\begin{rem}\label{rem:thelastN}
Let $\As_N$ be a quantum assemblage, and let $\rho$ and $\{\{M_{a_i|x_i}\}_{a_i\in \A}\}_{x_i\in \X\,,\, i\in\{1,\ldots,N-1\}}$ be a quantum state and measurements that realise it. Then, $S_\eta^{(N)}[\As_N] = S_\eta^{(N) \max}$ if and only if: \\
- The observables $A^{\prime(i)}_0 = M_{0|0} - M_{1|0}$ and $A^{\prime(i)}_1 = M_{0|1} - M_{1|1}$ are equivalent (up to local isometries) to $A^{(i)}_0$ and $A^{(i)}_1$, respectively, from Eqs.~\eqref{eq:app-multi-observables} and \eqref{eq:app-multi-observables2} for each Alice, i.e., $i\in\{1,\ldots,N-1\}$,\\
- The quantum state $\rho$ is equivalent, up to local isometries, to $\rho_N^{\eta}$. 
\end{rem}
\begin{proof}
The `if' direction follows trivially from Remark \ref{rem:iffN}. \\
To prove the `only if' direction, notice that $\rho$ and $\{A^{\prime(i)}_j\}_{j=0,1\,,\,i\in\{1,\ldots,N-1\}}$, together with Bob's observables $\{B_j\}_{j=0,1}$ from Eq.~\eqref{eq:theStpar-multi}, provide a quantum realisation of correlations that achieve the maximum value of the Bell functional $I_\alpha^{(N)}$, with $\alpha=\eta$. From Remark \ref{rem:STarg-multi} it follows that $\rho$ and $\{A^{\prime(i)}_j\}_{j=0,1\,,\,i\in\{1,\ldots,N-1\}}$ must be equal to (up to local isometries) the state and observables of Eq.~\eqref{eq:app-multi-state},  \eqref{eq:app-multi-observables}, and \eqref{eq:app-multi-observables2}, which proves our claim. 
\end{proof}

\section{Proof of Corollary \ref{cor:multi}}\label{app:proof-cor-multi}

In this section, we give a proof of Corollary \ref{cor:multi}, which pertains to multipartite quantum assemblages.

Let us make a few remarks about the properties of the resource monotone $\M_\eta^{(N)}$ given by Definition \ref{def:MetaN}. First, note that when $\As_N$ is a quantum assemblage, its LOSR processing $\widetilde{\As}_N$ is also a quantum assemblage. Hence, $S_\eta^{(N)}[\widetilde{\As}_N]$ is upperbounded by its maximum quantum violation, which is given by $S_\eta^{(N)}[\As^{\eta}_N]$:
\begin{align}
\M_\eta^{(N)}[\As_N] \leq S_\eta^{(N)}[\As^{\eta}_N] \quad \forall \, \eta \in \left(0,\sfrac{\pi}{4}\right]\,.
\end{align}
Second, note that $\M_\eta^{(N)}[\As^{\eta}_N] = S_\eta^{(N)}[\As^{\eta}_N]$. These two properties set the stage for the following theorem:

\begin{thm}\label{thm:4-multi}
For a monotone $\M_\eta^{(N)}$ given in definition \ref{def:MetaN} and an EPR functional $S_\eta^{(N)}$ specified in definition \ref{def:NSFT-multi}, if $\theta \neq \eta$, then $\M_\eta^{(N)}[\As^{\theta}_N] < S_\eta^{(N)}[\As^{\eta}_N]$.
\end{thm}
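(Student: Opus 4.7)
My plan is to mirror the bipartite proof of Theorem~\ref{thm:4}, adapting it to the multipartite GHZ setting via the analogous tools already assembled in Appendix~\ref{app:steeringin-multi}. I will argue by contradiction: suppose there exist $\theta \neq \eta$ in $\left(0,\sfrac{\pi}{4}\right]$ such that $\M_\eta^{(N)}[\As^{\theta}_N] = S_\eta^{(N)}[\As^{\eta}_N] = S_\eta^{(N)\,\max}$. By definition of $\M_\eta^{(N)}$ as a yield-based monotone, the maximum on the right is attained by some assemblage $\As^*_N$ reachable from $\As^{\theta}_N$ via LOSR, i.e.\ $\As^{\theta}_N \xrightarrow{\mathrm{LOSR}} \As^*_N$ with $S_\eta^{(N)}[\As^*_N] = S_\eta^{(N)\,\max}$. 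I will then split into two cases, exactly as in the bipartite proof.

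In the first case, $\As^*_N = \As^{\theta}_N$ itself achieves the maximum. Remark~\ref{rem:iffN} says that $S_\eta^{(N)}[\As_N^{\theta}] = S_\eta^{(N)\,\max}$ if and only if $\eta = \theta$, directly contradicting $\theta \neq \eta$. In the second case, $\As^*_N \neq \As^{\theta}_N$ is a proper LOSR-processing of $\As^{\theta}_N$. Fix any quantum realisation of $\As^*_N$ in terms of some state $\rho^*$ on $\mathcal{H}_{A_1}\otimes\cdots\otimes\mathcal{H}_{A_{N-1}}\otimes\mathcal{H}_B$ and POVMs on the Alices. Remark~\ref{rem:thelastN} then forces $\rho^*$ to be equivalent, up to local isometries, to $\rho_N^{\eta}$. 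Since local isometries are free LOSR operations, composing them with the LOSR map that produced $\As^*_N$ from $\As^{\theta}_N$ would yield an LOSR protocol whose underlying shared state is $\rho_N^{\eta}$, starting from a protocol whose shared state is $\rho_N^{\theta}$; tracing out the Alices' measurement ancillas, this amounts to converting $\rho_N^{\theta}$ into $\rho_N^{\eta}$ under LOSR. That directly contradicts Lemma~\ref{GHZunorder}, which forbids any such conversion for $\theta \neq \phi$ in $\left(0,\sfrac{\pi}{4}\right]$.

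The only nontrivial conceptual step — and the one I expect to require the most care — is the passage from ``$\As^*_N$ is an LOSR-image of $\As^{\theta}_N$ and admits a quantum realisation on $\rho_N^{\eta}$'' to ``$\rho_N^{\theta}$ converts to $\rho_N^{\eta}$ under LOSR on the shared state alone.'' In the bipartite case this was essentially automatic because one can absorb any local CPTNI action on Bob's side and any comb on Alice's side into the preparation of the shared state, reducing the question to LOSR-convertibility of two-qubit pure states (which fails by the result of Ref.~\cite{schmid2020standard}). For the multipartite case I will do the same reduction: Alice's local combs and Bob's local CPTP map, together with the shared classical randomness, can be absorbed into a state-to-state LOSR transformation on the level of the underlying multipartite quantum state, and then apply Lemma~\ref{GHZunorder} in place of the two-qubit result. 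Combining both cases closes the contradiction and establishes the strict inequality $\M_\eta^{(N)}[\As^{\theta}_N] < S_\eta^{(N)}[\As^{\eta}_N]$ for $\theta \neq \eta$.
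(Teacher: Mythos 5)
Your proof is correct and follows essentially the same route as the paper's: the same proof by contradiction with the same two-case split, invoking Remark~\ref{rem:iffN} in the first case and Remark~\ref{rem:thelastN} together with Lemma~\ref{GHZunorder} in the second. The reduction you flag as delicate (absorbing the combs, Bob's CPTP map, and the shared randomness into a state-level LOSR conversion $\rho_N^{\theta} \xrightarrow{\mathrm{LOSR}} \rho_N^{\eta}$) is exactly the step the paper also relies on, stated there somewhat more tersely.
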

\begin{proof}
Let us prove this by contradiction. Our starting assumption is that there exists a pair $(\theta\,,\, \eta)$ with $\theta \neq \eta$, such that $\M_\eta^{(N)}[\As^{\theta}_N] = S_\eta^{(N)}[\As^{\eta}_N]$. Then, one of the two should happen: 

\bigskip

\noindent \underline{First case:}  $\M_\eta^{(N)}[\As^{\theta}_N] = S_\eta^{(N)}[\As^{\theta}_N]$. \\
In this case, the solution to the maximisation problem in the computation of $\M_\eta^{(N)}$ is achieved  by $\As^{\theta}_N$ itself. \\
Our starting assumption then tells us that $S_\eta^{(N)}[\As^{\theta}_N] = S_\eta^{(N)}[\As^{\eta}_N]=S_\eta^{(N) \max}$.\\
From Remark \ref{rem:iffN} it follows that necessarily $\theta=\eta$, which contradicts our initial condition. 

\bigskip

\noindent \underline{Second case:}  $\M_\eta^{(N)}[\As^{\theta}_N] = S_\eta^{(N)}[\widetilde{\As}_N]$, with $\As^{\theta}_N \overset{\mathrm{LOSR}}{\longrightarrow} \widetilde{\As}_N$ . \\
In this case, the solution to the maximisation problem in the computation of $\M_\eta$ is achieved by an LOSR processing of $\As^{\theta}_N$. 
Our starting assumption then tells us that $S_\eta^{(N)}[\widetilde{\As}_N] = S_\eta^{(N)}[\As^{\eta}_N]=S_\eta^{(N) \max}$.
Let $\rho$ and $\{\{M_{a_i|x_i}\}_{a_i\in A}\}_{x_i\in X\,,\,i\in\{1,\ldots,N-1\}}$ be any quantum state and measurements that realise the quantum assemblage $\widetilde{\As}_N$ in $N$-partite EPR scenario. From Remark \ref{rem:thelastN}, we know that $\rho$ is equivalent to $\rho^{\eta}_N$ up to local isometries. But since local isometries are free LOSR operations, this means that $\rho^{\theta}_N\overset{\mathrm{LOSR}}{\longrightarrow} \rho^{\eta}_N$, which contradicts Lemma \ref{GHZunorder}.
\end{proof}

We can now prove that $\FA^{(N)}$ is composed of pairwise unordered resources. In analogue to the bipartite case, to prove this claim it suffices to find a set of EPR LOSR monotones $\FM = \{\M_j\}$ such that, for every pair $(\theta_1,\theta_2)$ there exists a pair $(\M_1,\M_2)$ with $\M_1(\As^{\theta_1}_N) > \M_1(\As^{\theta_2}_N) \,,\quad \M_2(\As^{\theta_1}_N) < \M_2(\As^{\theta_2}_N).$ As we see next, this is achieved by choosing $\M_k = \M^{(N)}_{\theta_k}$.

\begin{thm}\label{th:app-multi}
For every pair $\theta_1 \neq \theta_2$, the monotones $M_{\theta_1}^{(N)}$ and $M_{\theta_2}^{(N)}$ given by Definition \ref{def:MetaN} satisfy 
\begin{align} \label{eq:6N}
\M_{\theta_1}^{(N)}(\As^{\theta_1}_N) &> \M_{\theta_1}^{(N)}(\As^{\theta_2}_N) \,, \\
\M_{\theta_2}^{(N)}(\As^{\theta_1}_N) &< \M_{\theta_2}^{(N)}(\As^{\theta_2}_N) \,. \label{eq:7N}
\end{align}
\end{thm}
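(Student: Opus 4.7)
The plan is to directly adapt the bipartite argument used in the proof of Theorem \ref{thm:M-bipartite}, now invoking the multipartite analogue Theorem \ref{thm:4-multi} of Theorem \ref{thm:4}. All the heavy lifting (self-testing of $\rho_N^\eta$ from the tilted multipartite Bell inequality, plus LOSR-incomparability of distinct $GHZ_N^\theta$ states) has already been done in Lemma \ref{GHZunorder}, Remarks \ref{rem:iffN} and \ref{rem:thelastN}, and Theorem \ref{thm:4-multi}, so the final step is essentially a one-line consequence.

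To prove Eq.~\eqref{eq:6N}, I first note that $\As_N^{\theta_1}$ is itself a feasible candidate in the maximisation defining $\M_{\theta_1}^{(N)}(\As_N^{\theta_1})$ (take the trivial LOSR operation), which gives $\M_{\theta_1}^{(N)}(\As_N^{\theta_1}) \geq S_{\theta_1}^{(N)}[\As_N^{\theta_1}]$. Conversely, because the enveloping theory consists of quantumly-realisable assemblages and this set is closed under LOSR operations, every candidate $\As_N^*$ in the optimisation satisfies $S_{\theta_1}^{(N)}[\As_N^*] \leq S_{\theta_1}^{(N)\max} = S_{\theta_1}^{(N)}[\As_N^{\theta_1}]$, the last equality holding by Remark \ref{rem:iffN}. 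Therefore $\M_{\theta_1}^{(N)}(\As_N^{\theta_1}) = S_{\theta_1}^{(N)}[\As_N^{\theta_1}]$.

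For the other term in Eq.~\eqref{eq:6N}, since $\theta_1 \neq \theta_2$, Theorem \ref{thm:4-multi} gives $\M_{\theta_1}^{(N)}(\As_N^{\theta_2}) < S_{\theta_1}^{(N)}[\As_N^{\theta_1}]$. Chaining these two statements yields $\M_{\theta_1}^{(N)}(\As_N^{\theta_1}) > \M_{\theta_1}^{(N)}(\As_N^{\theta_2})$, which is Eq.~\eqref{eq:6N}. The proof of Eq.~\eqref{eq:7N} is obtained by interchanging the roles of $\theta_1$ and $\theta_2$ throughout, which is legitimate since the argument used neither an ordering on $(\theta_1,\theta_2)$ nor any property beyond $\theta_1 \neq \theta_2$.

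The main (minor) subtlety — which is not really an obstacle — is to explicitly record that the upper bound $S_\eta^{(N)}[\As_N^*] \leq S_\eta^{(N)\max}$ used in the first step genuinely applies to every $\As_N^*$ appearing in the maximisation. This is fine because the $\As_N^{\theta}$ in $\FA^{(N)}$ are quantumly realisable and LOSR operations preserve quantum realisability, so the supremum in Definition \ref{def:MetaN} is taken over quantum assemblages, on which $S_\eta^{(N)}$ is bounded by its quantum maximum. With this observation in place, Theorem \ref{th:app-multi} follows immediately, and Corollary \ref{cor:multi} is then obtained by applying it to every pair of distinct parameters in $(0,\sfrac{\pi}{4}]$.
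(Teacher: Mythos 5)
Your proposal is correct and follows essentially the same route as the paper: establish $\M_{\theta_1}^{(N)}(\As_N^{\theta_1}) = S_{\theta_1}^{(N)}[\As_N^{\theta_1}]$ (which the paper records as a property just before Theorem~\ref{thm:4-multi}), then invoke Theorem~\ref{thm:4-multi} for the strict inequality on $\As_N^{\theta_2}$, and argue Eq.~\eqref{eq:7N} by symmetry. Your explicit justification of the yield equality via the trivial LOSR operation and the quantum bound is a slightly more detailed rendering of the same argument.
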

\begin{proof}
Let us first prove Eq.~\eqref{eq:6N}. We know that $\M_{\theta_1}^{(N)}(\As^{\theta_1}_N) = S_{\theta_1}^{(N)}[\As^{\theta_1}_N]$. However, since $\theta_1 \neq \theta_2$, Theorem \ref{thm:4-multi} implies that $\M_{\theta_1}^{(N)}(\As^{\theta_2}_N) < S_{\theta_1}^{(N)}[\As^{\theta_1}_N]$. Therefore,  Eq.~\eqref{eq:6N} follows. \\
The proof of  Eq.~\eqref{eq:7N} follows similarly. 
\end{proof}

Corollary \ref{cor:multi} follows directly from Theorem \ref{th:app-multi}.

\section{Proof of Proposition~\ref{prop:pre-order}}\label{app:proof-prop-preorder}

Let us focus on the family of assemblages $\As^{\theta,p}_{\A|\X}$ defined by Eq.~\eqref{eq:thefam}. In Corollary~\ref{cor:bip} we showed that under LOSR operations, all assemblages $\As^{\theta,1}_{\A|\X}$ are incomparable for $\theta \in \left(0,\sfrac{\pi}{4}\right]$. We will now show that under S-1W-LOCC, the assemblage $\As^{\pi/4,1}_{\A|\X}$ is above all others in the family; hence, the pre-orders in the two different resource theories are different. 

The S-1W-LOCC map that maps $\As^{\pi/4,1}_{\A|\X}$ to $\As^{\theta,1}_{\A|\X}$ is very simple: the instrument has a single outcome, say $0$, and the corresponding map $\mathcal{E}_{0}(\cdot)$ on Bob's system is $M_{0}\cdot M^{\dagger}_{0}$, where $M_{0}=\cos(\theta)\ket{0}\bra{0}+\sin(\theta)\ket{1}\bra{1}$ for the value of $\theta$ corresponding to the assemblage $\As^{\theta,1}_{\A|\X}$. A straightforward calculation confirms that this map converts from $\As^{\pi/4,1}_{\A|\X}$ to the target assemblage, and that the outcome $0$ occurs with probability $1/2$. 
Remarkably, this map does not even require communication and is actually an example of a stochastic-LOSR operation, i.e., an LOSR map without the requirement that the map applied on Bob's subsystem is trace-preserving, but occurs with non-zero probability.

One could argue that the reason the pre-orders are different for stochastic-1W-LOCC and LOSR is because of the stochastic nature of the former. 
However, the pre-orders for deterministic 1W-LOCC maps and LOSR is again different; as before, there is such a deterministic 1W-LOCC map that converts $\As^{\pi/4,1}_{\A|\X}$ to $\As^{\theta,1}_{\A|\X}$. It is defined as follows: there are two outcomes $\omega\in\{0,1\}$ where, for outcome $0$, $\mathcal{E}_{0}(\cdot)=M_{0}\cdot M^{\dagger}_{0}$ (as before), and for outcome $1$, $\mathcal{E}_{1}=M_{1}\cdot M^{\dagger}_{1}$, where $M_1=\sin(\theta)\ket{1}\bra{0}+\cos(\theta)\ket{0}\bra{1}$. One can readily confirm that this is trace-preserving, since $M^{\dagger}_{0}M_{0}+M^{\dagger}_{1}M_{1}=\mathbb{I}$. Bob then communicates the outcome $\omega$ to Alice, whose input $x$ is equal to the input $x'$ to her measurement device.
Once she gets the output $a'$ of the measurement, the final output becomes $a=a'\oplus x\omega \oplus \omega$. To summarise, Alice will flip the output of her measurement if and only if $x=0$ and $\omega=1$. 
\end{document}